\documentclass[a4paper,11pt]{article}%

\usepackage{amsfonts}
\usepackage{amsmath}
\usepackage{amssymb}
\usepackage{amsthm}
\usepackage{authblk}
\usepackage{bbm}
\usepackage{cite}
\usepackage[dvipsnames, usenames]{color}
\usepackage{commath}
\usepackage{enumerate}
\usepackage{epsfig}
\usepackage[latin1]{inputenc}
\usepackage{float}
\usepackage{graphicx}
\usepackage{geometry}\geometry{margin=1in}
\usepackage{leftidx}
\usepackage{mathrsfs}
\usepackage{mathtools}
\usepackage{titlesec}
\usepackage{undertilde}
\usepackage{verbatim}
\usepackage[usenames,dvipsnames,svgnames,table]{xcolor}
\usepackage[all]{xy}

\usepackage{hyperref} 
\allowdisplaybreaks

\DeclareMathOperator*{\argmax}{arg\,max}

\DeclareMathOperator{\sgn}{sgn}
\newcommand{\defeq}{\vcentcolon=}
\newcommand{\backdefeq}{=\vcentcolon}
\newcommand{\Lpm}{\Lambda^{\pm}}
\newcommand{\Lp}{\Lambda^{+}}
\newcommand{\Lm}{\Lambda^{-}}
\newcommand{\lpm}{\lambda^{\pm}}
\newcommand{\lp}{\lambda^{+}}
\newcommand{\lm}{\lambda^{-}}
\newcommand{\delpm}{\delta^{\pm}}
\newcommand{\delp}{\delta^{+}}
\newcommand{\delm}{\delta^{-}}

\newcommand{\Npm}{N^{\pm}}
\newcommand{\Nm}{N^{-}}
\newcommand{\Np}{N^{+}}
\newcommand{\Hp}{H^{+}}
\newcommand{\Hm}{H^{-}}
\newcommand{\Hpm}{H^{\pm}}

\newcommand{\hpm}{h^{\pm}}

\newcommand{\apm}{a^{\pm}}

\newcommand{\bpm}{b^{\pm}}

\newcommand{\cpm}{c^{\pm}}

\newcommand{\pii}{\Pi^i}
\newcommand{\pij}{\Pi^j}
\newcommand{\piia}{\Pi^{\alpha,i}}
\newcommand{\pija}{\Pi^{\alpha,j}}

\newcommand{\pia}{\Pi^\alpha}
\newcommand{\pit}{\Pi^{\alpha,t,n,\pi}}
\newcommand{\Yt}{Y^{t,i}}

\newcommand{\hlpmt}{\widehat{\lpm}^{\alpha,t,n,\pi}}

\newcommand{\hlptu}{\widehat{\lp_u}^{\alpha,t,n,\pi}}
\newcommand{\hlmtu}{\widehat{\lm_u}^{\alpha,t,n,\pi}}

\newcommand{\Xdel}{X^{\delm,\delp}}

\newcommand{\Spm}{S^{\pm}}
\newcommand{\Sp}{S^{+}}
\newcommand{\Sm}{S^{-}}
\newcommand{\fipm}{f^{\pm}_i}
\newcommand{\Lipm}{\Lambda_i^{\pm}}
\newcommand{\Lip}{\Lambda_i^{+}}
\newcommand{\Lim}{\Lambda_i^{-}}
\newcommand{\Ljpm}{\Lambda_j^{\pm}}
\newcommand{\Ljp}{\Lambda_j^{+}}
\newcommand{\Ljm}{\Lambda_j^{-}}
\newcommand{\lmax}{\lambda^{*}}

\newcommand{\delmax}{\delta^{*}}
\newcommand{\delmin}{\delta_{*}}
\newcommand{\Nmax}{N^*}

\newcommand{\mPa}{\mathbb P^{\alpha}}
\newcommand{\mPt}{\mathbb P^{\alpha,t,n,\pi}}
\newcommand{\mPtfull}{\mathbb P^{\alpha,t,n,i}}
\newcommand{\mQa}{\mathbb Q^{\alpha}}
\newcommand{\mEdel}{\mathbb E^{\delm,\delp}}
\newcommand{\mEa}{\mathbb E^{\alpha}}
\newcommand{\mEt}{\mathbb E^{\alpha,t,n,\pi}}

\newcommand{\Dpm}{d^{\pm}}
\newcommand{\Dp}{d^{+}}
\newcommand{\Dm}{d^{-}}
\newcommand{\ve}{\varepsilon}

\newcommand{\Xt}{X^{\alpha,t,x,s}}
\newcommand{\St}{S^{t,s}}
\newcommand{\Nt}{N^{t,n}}


\newtheorem{assumptions}{Assumptions}[subsection]

\newtheorem{theorem}[assumptions]{Theorem}
\newtheorem{lemma}[assumptions]{Lemma}
\newtheorem{corollary}[assumptions]{Corollary}
\newtheorem{proposition}[assumptions]{Proposition}

\theoremstyle{definition}

\newtheorem{examples}[assumptions]{Examples}

\newtheorem*{convention*}{Convention}

\theoremstyle{remark}
\newtheorem{remark}[assumptions]{Remark}
\newtheorem*{remark*}{Remark}

\newtheorem*{notation*}{Notation}


\numberwithin{equation}{subsection}
\setcounter{MaxMatrixCols}{10}
\setcounter{secnumdepth}{4}


\title{Optimal market making under partial information with general intensities}
\author{Luciano Campi}
\author{Diego Zabaljauregui\footnote{Corresponding Author. E-mail: {\tt d.zabaljauregui@lse.ac.uk}\\ The authors would like to thank Álvaro Cartea and Katia Colaneri for helpful discussions on the subject.}}
\affil{Department of Statistics\\London School of Economics and Political Science} 

\date{ }

\begin{document}
\maketitle

\begin{abstract}
\noindent Starting from the Avellaneda--Stoikov framework \cite{AS}, we consider a market maker who wants to optimally set bid/ask quotes over a finite time horizon, to maximize her expected utility. The intensities of the orders she receives depend not only on the spreads she quotes, but also on unobservable factors modelled by a hidden Markov chain. We tackle this stochastic control problem under partial information with a model that unifies and generalizes many existing ones under full information, combining several risk metrics and constraints, and using general decreasing intensity functionals. We use stochastic filtering, control and piecewise-deterministic Markov processes theory, to reduce the dimensionality of the problem and characterize the reduced value function as the unique continuous viscosity solution of its dynamic programming equation. We then solve the analogous full information problem and compare the results numerically through a concrete example. We show that the optimal full information spreads are biased when the exact market regime is unknown, and the market maker needs to adjust for additional regime uncertainty in terms of P\&L sensitivity and observed order flow volatility. This effect becomes higher, the longer the waiting time in between orders.
\end{abstract}

\noindent\textbf{Keywords:} Market making, High-frequency trading, Algorithmic trading, Stochastic optimal control, Hidden Markov model, Stochastic filtering, Viscosity solutions, Piecewise-deterministic Markov processes.

\section*{Introduction}
\addcontentsline{toc}{section}{Introduction}
Given a financial market, a market maker (MM) can be understood as someone who provides liquidity for a certain asset. That is, she (almost) continuously posts bid/ask quotes for the asset, in the hope to profit from the bid-ask spread. In choosing how to do so, the MM faces a complicated problem on several levels; namely: the instantaneous margin/volume trade-off (the further away she quotes from the ``fair price", the less she gets executed, and vice-versa), adverse price movements and inventory risk (exposure), execution costs, and many others. 

An increasingly popular mathematical approach to the MM problem, both in academia and in practice, is by means of stochastic optimal control. In particular, a line of research has focused on the modelling framework proposed by Avellaneda and Stoikov \cite{AS} (rooted in turn in \cite{HS}). Although widely motivated in the literature by order-driven markets such as equity markets, the shape of the limit order book is not explicitly taken into account. Thus, the framework is more easily understood and applied to over-the-counter (OTC) quote-driven markets such as the foreign exchange (FX) market, and we therefore choose to present it in this setting.

In this framework, the MM gives firm bid-ask quotes during a finite time interval by choosing bid/ask spreads with respect to a certain \textit{reference price}.\footnote{Also referred to by some authors as \textit{micro-price} \cite{CJP} or \textit{efficient price} \cite{DRR} in the martingale case.} Depending on the market, this price could be for example an aggregated mid-price or a dealer-to-dealer price, and it is frequently assumed to behave as an arithmetic Brownian motion. To explicitly model the margin/volume trade-off, the probability that the MM gets executed decays as a function of the corresponding spread. More precisely, it is assumed that the MM receives market orders according to counting processes of stochastic intensity. Most typically in the mathematical literature, intensities are assumed to decay exponentially on the spreads (mainly for tractability reasons). The goal of the MM is to find a strategy that allows her to maximize her expected terminal utility, which is taken as a constant absolute risk aversion (CARA) utility. The problem is then translated into a deterministic one: solving the associated Hamilton--Jacobi--Bellman (HJB) equation, which is a partial-integro differential equation (PIDE) for the MM's value function, and ultimately retrieving the optimal strategy in feedback form. 

Within the described framework, a lot variants have been put forward and extensively studied. In \cite{BL, GLFT0}, Bayraktar and Ludkovski, and Guéant, Lehalle and Fernandez-Tapia, apply a one-trading-side version to optimal liquidation in the risk-neutral and risk averse contexts, respectively. The introduction of a constraint on the inventory in \cite{GLFT}, allowed the authors to rigorously solve the original problem of \cite{AS} with exponential intensities, by means of a verification theorem. This constraint has been widely used moving forward, with the exception of \cite{FL1, FL2}, where strategies are derived without a verification theorem.

Guéant and Lehalle continued to actively contribute to the area. In \cite{GL}, they revisit the risk averse optimal liquidation problem for general intensities satisfying a certain ordinary differential inequality, and in \cite{G} the same is done for market making.\footnote{The latter paper also considers the multi-asset case.} This assumption had been firstly introduced in \cite[Sect.5.3]{BL} under risk-neutrality.

Other prolific contributors in the field are Cartea, Jaimungal and their coauthors \cite{CJ, CJP, CJR}, who introduced a quadratic running penalty on the inventory to manage the ``accumulated" inventory risk for an otherwise risk-neutral MM. Constraints on the MM's spreads have also been considered in some of them. It is worth noting that \cite{G} shows how the two seemingly different subclasses of models (risk averse and ``risk-neutral" with running penalty) can ultimately be characterized by a unique system of ordinary differential equations (ODEs) when considering two appropriate ansatz. 

A relevant issue in practice that has not been included in the previous models is the following one. Empirical evidence (see, e.g., \cite{CJ0}) suggests that liquidity taken by clients depends not only on the quoted spreads but also on other unobservable factors. Indeed, the confluence of factors such as market sentiment towards the asset and the competition with other market makers also affects the intensities at which the MM receives orders. This complicated effect can be modelled in a simplified fashion by making the intensities depend as well on a hidden finite-state Markov chain, effectively reflecting the regime or state of the market (different levels from very slow to very active). To the best of our knowledge, this has only been briefly done in the Avellaneda--Stoikov framework by proposing an approximation for the optimal strategy with exponential intensities \cite[Sect.5.1]{CJ0} or studying a simple two-states version with power-law intensities \cite[Sect.5.4]{BL}. Both of these papers deal only with the ``risk-neutral" (possibly penalized) case and make the unrealistic assumption that the current market regime is known by the MM.\footnote{\cite[Sect.3.3]{BL0} (arXiv version) and \cite{CJ19} also study optimal trade execution under partial information, albeit with uncontrolled intensities.}

When the state of the market is unknown, the problem becomes significantly more challenging for several reasons:
\begin{itemize}
\item It becomes a combined problem of stochastic control and filtering. The MM needs to dynamically make her best possible prediction of the market regime (or more precisely, its distribution), based on the information she has (i.e., the orders she has received so far and the evolution of the reference price), and adjust her spreads accordingly. This prediction is known as a filter.
\item The associated HJB PIDE has higher dimension and more non-linearities.
\item The standard approach used in all of the previously cited papers relies on reducing the HJB PIDE to a system of ODEs by means of an ansatz, proving that such a system has a classical (smooth) solution, and recovering the value function via a verification theorem. Under partial information however, the reduced HJB equation is still a complicated PIDE such that, in general, a classical solution may not exist (or it may be too difficult to prove otherwise). Hence, the ansatz argument breaks down. 
\item As a consequence, one needs to resort to the concept of viscosity solutions \cite{FS}. In addition, the numerical resolution unfailingly becomes a lot more involved than for simple systems of ODEs.
\item On a technical level, the construction of the model is not straightforward. The MM needs to adjust her strategy based on her observable information, such as the arriving orders, but this flow of information is in turn affected by the MM's actions.
\end{itemize}

In this paper, we solve the problem of the MM under partial information. First, we start by unifying in one single formulation all the modelling features described so far. This allows us to simultaneously tackle all the models at once with a single approach, while generalising them at the same time. Indeed, our formulation allows for the interaction of any CARA utility (whether risk-neutral or risk averse) with running inventory penalty, terminal execution cost, inventory constraints and spread constraints. Further, motivated by practitioners' needs, we strongly generalize the intensity shapes to any continuous, decreasing to zero functions, adding modelling flexibility.\footnote{This is done at the expense of renouncing to uniqueness in the optimal strategy. We also assume the decay to be ``fast enough" in certain cases.} We even allow for the inventory to be unconstrained when no penalties are present. (This scenario is considered mainly for the sake of completeness and comparison, at almost no extra cost.)

Secondly, we let the intensities depend on a $k$-dimensional hidden Markov chain. Following \cite{CEFS}, we use a weak formulation\footnote{That is, with controlled probability measures.} to construct a well-defined model (with exogenous information) and solve the filtering problem by means of the reference probability approach of stochastic filtering \cite[Chpt.VI]{B}. The rigorous setting of the model and its full characterization are carried out in Section \ref{s:model}, while the filtering problem is solved in Section \ref{s:filtering}.

The optimization problem of the MM is then reformulated in terms of the usual state variables together with the $k$-dimensional observable distribution of the Markov chain (Section \ref{s:viscosity}). At this point, the problem is too involved both analytically and numerically. We proceed by showing that, formally, there is an ansatz for the value function that reduces the dimensionality of the problem. However, as the verification approach is not valid any more, we explicitly find the guessed decomposition in terms of the value function of a new, diffusion-free, problem (Theorem \ref{main_theorem_1}). We then characterize the reduced value function as the unique continuous viscosity solution of its formally derived equation (Theorem \ref{main_theorem_2}). The latter is done by harnessing results of piecewise-deterministic Markov process (PDMPs) \cite{CEFS, DF}, as first defined in \cite{D1}.

Thirdly, we solve the idealized problem of a MM with full information, who can observe the Markov chain (Section \ref{s:full_info}). We show that a similar ansatz and the standard approach work out in this case. We prove that the MM's value function is a classical solution of its HJB equation via a general verification theorem (Theorem \ref{verif_full_info}) and we recover the well known strategies for one regime as particular cases. 

Finally, we compare the optimal strategies under full and partial information through a concrete example, by numerical analysis (Section \ref{s:numerics}). In particular, we show that the optimal full information spreads are biased when the exact regime is unknown, and using them becomes suboptimal. We interpret the adjustment needed in terms of observable order flow volatility and sensitivity of the expected profit to observable regime changes; and we show how this effect becomes higher, the longer the waiting time in between orders (leading to higher uncertainty for the MM).


\section{Setting and main assumptions}
\label{s:model}
\setcounter{assumptions}{0}
\setcounter{subsection}{0}

\subsection{Preliminaries on the probability spaces}
\label{s:preliminaries}

We start by setting up our framework in an abstract fashion, deferring the question of existence of a model to Proposition \ref{Girsanov1}, and a characterization of all such models to Proposition \ref{Girsanov2}. As mentioned in the Introduction, we seek to construct a model under a weak formulation, so that the information flow remains exogenous (i.e., unaffected by the MM's actions).

Let $T>0$ be a given finite horizon and $(\Omega,\mathcal F,\mathbb F=(\mathcal F_t)_{0\leq t\leq T})$ a right-continuous filtered measurable space with $\mathcal F_T=\mathcal F$. Suppose it supports three adapted stochastic processes $W,\Np,\Nm$. (More assumptions to be added in the sequel.) Let $\mathbb F^{W,\Np,\Nm}=(\sigma(W_u,\Np_u,\Nm_u:0\leq u\leq t)\vee\mathcal F_0)_{0\leq t\leq T}$ be the natural filtration of these processes enlarged (``completed") by $\mathcal F_0$ and define the set of \textit{admissible spreads} as
\begin{equation}
\label{A}
\mathcal A\defeq\{\delta:[0,T]\times\Omega\to\overline{(\delmin,\delmax)}: \delta\mbox{ is }\mathbb F^{W,\Np,\Nm}\mbox{-predictable and bounded}\},
\end{equation}
where $-\infty\leq\delmin<\delmax\leq+\infty$ are fixed constants and $\overline{(a,b)}=\mathbb R\cap [a,b]$ denotes the closure of the interval $(a,b)$ in $\mathbb R$, for any $-\infty\leq a<b\leq+\infty$. Note that for $\delmin=-\infty$ and $\delmax=+\infty$, the admissible spreads are not uniformly bounded. The self-imposed constraints $\delmin,\delmax$ for the MM's spreads can be taken to be different for the bid and the ask if wanted, without any additional effort.

Consider on the former space a family $(\mPa)_{\alpha\in{\mathcal A}^2}$, $\alpha=(\delm,\delp)$, of equivalent probability measures such that the sigma algebra generated by their null sets is $\mathcal F_0$. Note that for each $\alpha$, $(\Omega, \mathcal F, \mathbb F, \mPa)$ is under the usual conditions and $\mathcal F_0$ is the completed trivial sigma algebra. We refer to $\mPa$ as the \textit{physical} or \textit{historical} probability given the \textit{admissible strategy} (or \textit{control}) $\alpha$, and we write $\mEa$ for the expectation under $\mPa$. We will drop the ``$\alpha$" from the notation when there is no room for ambiguity. 

Henceforth, all the processes (resp. properties) considered are supposed to be defined (resp. hold) on the space $(\Omega, \mathcal F, \mathbb F, \mPa)$ for all $\alpha\in{\mathcal A}^2$, unless otherwise stated. For example, a ``Brownian motion" is a process on $[0,T]\times\Omega$ that is an $(\mathbb F,\mPa)$-Brownian motion for all $\alpha$. All subfiltrations are understood to have been augmented to satisfy the usual conditions (which in the case of the natural filtrations of Feller processes, it amounts simply to completing them). Càdlàg versions of the processes are used whenever available. 

\subsection{Description of the model}

A market maker in a quote-driven market gives binding bid/ask quotes $\Sm, \Sp$ resp. for a certain financial asset, during the time interval $[0,T]$. She receives unitary size market orders\footnote{Any constant size would work in the same way, but this convention simplifies the notation. See, e.g., \cite{G} for some formulas with arbitrary constant size under full information.} to buy/sell according to counting processes $\Nm, \Np $ starting at 0, with a.s. no common jumps and stochastic intensities $\lm, \lp$ resp. (see \cite[p.27, Def. D7]{B}). Each intensity at time $t$ depends on the corresponding spread $\delpm_t\defeq\pm\big(\Spm_t - S_t\big)$ between the MM's quotes and a reference price $S$. $S$ may be interpreted, e.g., as an aggregated mid-price or a dealer-to-dealer price, depending on the market. We assume $S$ is a Brownian motion with drift, of the form 
\begin{equation}
\label{S}
S_t= s_0 + \mu t + \sigma W_t,
\end{equation}
where $W$ is a Wiener process, $s_0,\mu\in\mathbb R$ and $\sigma\geq 0$. Note that the MM can fully specify her quote $\Spm_t$ by choosing the spread $\delpm_t$, since $\Spm_t = S_t \pm\delpm_t$. We assume $\delm,\delp\in\mathcal A$. In particular, the MM chooses her spreads based on the observation of $W,\ \Nm\mbox{ and }\Np$. We refer to $\mathbb F^{\Nm,\Np,W}$ as the \textit{observable filtration}.

In addition, the intensities also depend on a hidden Markov chain $Y$ with state space $\{1,\dots,k\}$, initial distribution $\mu_0$ and deterministic generator matrix $Q=(q^{ij}_t)_{1\leq i,j\leq k}$. We assume $Q$ is a continuous, stable and conservative $Q(t)$-matrix, i.e.: 
\begin{assumptions}[\textbf{Generator matrix}]
\label{assumptions_Q}
For all $t\in [0,T]$ and $1\leq i\leq k$:

$0\leq q^{ij}_t<+\infty\mbox{ for all }\ 1\leq j\leq k,\ j\neq i,\qquad \displaystyle\sum_{j=1}^k q^{ij}_t =0\quad\mbox{and}\quad Q\in C[0,T].$
\end{assumptions}
\noindent The previous are standing assumptions in the literature of Markov chains and they guarantee in particular the existence of a unique (up to evanescence) càdlàg version of $Y$, and the existence of the predictable intensity kernel for its jump measure, as given in (\ref{intensity_kernels}). In addition, we suppose that 
\begin{equation}
\label{no_common_jumps}
Y\mbox{ has a.s. no common jumps with }\Nm\mbox{ and }\Np. 
\end{equation}
In this model, $Y_t$ represents the \textit{market regime} at time $t$ and results from the interaction of a range of different factors, such as market sentiment towards the asset and varying levels of competition with other liquidity providers. These effects are almost never explicitly modelled in the mathematical literature of market making. To this end, it is natural to assume that $Y$ is not directly observable by the MM who can only see $\Nm,\Np$ and $W$. (As often assumed in filtering and control with hidden Markov chain models, the parameters $k,Q,\mbox{ and }\mu_0$ of $Y$ are known to the MM, who needs to model/estimate them in practice.)

We define the MM's inventory process 
\begin{equation}
\label{N}	
N_t\defeq n_0 + \Nm_t - \Np_t,
\end{equation}
and the cash account process
\begin{equation}
\label{X}
\Xdel_t\defeq x_0 + \int_0^t \left(S_u+\delp_u\right)d\Np_u - \int_0^t \left(S_u-\delm_u\right)d\Nm_u,
\end{equation}
for some fixed initial values $n_0\in\mathbb Z$ and $x_0\in \mathbb R$. The MM's preferences at terminal time are represented by a CARA utility function:
\begin{equation*}
U_\gamma(c)=\frac{1-e^{-\gamma c}}{\gamma},\mbox{ for }c\in\mathbb R\mbox{ and }\gamma>0,\mbox{ and }U_0=Id.
\end{equation*}
\noindent Note that 
$$\mathbb F^{\Nm,\Np} = \mathbb F^N\quad\mbox{ and }\quad\mathbb F^{\Nm,\Np,W}=\mathbb F^{N,W},$$ 
since $\Nm,\Np$ have a.s. no common jumps. 

We shall make some very natural modelling assumptions on the intensities. These generalize those in \cite{BL, G, GL} to a context with several market regimes, without the need of any conditions on the derivatives or even smoothness.

\begin{assumptions}[\textbf{Orders intensities}]
\label{assumptions_intensities}
There exist functions $\Lipm:\overline{(\delmin,\delmax)}\to (0,+\infty)$ for $i=1,\dots,k$ and $\Nmax\in \mathbb N \cup\{+\infty\}$ with $-\Nmax\leq n_0\leq\Nmax$, such that:
\begin{enumerate}[(i)] 
\item \label{assumptions_intensities_lambda}
$\lpm_t = \lambda_t^{\pm,\alpha} = \Lpm_{Y_{t^-}}(\delpm_t)\mathbbm 1_{\{\mp N_{t^-}<\Nmax\}}$ if $\alpha=(\delm,\delp)$. 
\item \label{assumptions_intensities_Lambda}
$\Lipm$ is continuous, decreasing (not necessarily strictly) and $\displaystyle\lim_{\delta\to+\infty}\delta^{\mathbbm 1_{\{\gamma=0\}}}\Lipm(\delta)=0$ if $\delmax=+\infty$, for all $1\leq i\leq k$.
\end{enumerate}
\end{assumptions}

\begin{notation*}
For the sake of readability, we will often write $\mathbbm 1^{\pm}_t$ instead of $\mathbbm 1_{\{\mp N_{t^-}<\Nmax\}}$.
\end{notation*}

\noindent $\Nmax$ is a self-imposed constraint on the MM's inventory (whichever its sign) as originally proposed in \cite{GLFT}. When $\Nmax=\infty$ we are in the unconstrained context, while a finite $\Nmax$ effectively means that the MM will not buy (resp. sell) whenever $N_{t^-}=\Nmax$ (resp. $N_{t^-}=-\Nmax$). In practice, the MM could achieve this either by abstaining from quoting or by quoting with an excessively large spread that prevents any transaction (i.e., a ``stub" or ``placeholder quote").\footnote{For simplicity and without loss of generality, only the first alternative is formalized in our model (as done in \cite{G, GL, GLFT}) and this is reflected in the admissible spreads being real-valued. We could also allow for different constraints depending on the sign of the inventory, but we refrain from this to simplify the notation.} 

\begin{remark}
\label{expected_margin}
The last assumption states in particular that the intensities should decrease to zero when the spreads grow arbitrarily large. Furthermore, when $\gamma=0$ we require that they decay faster than $1/\delta$. Loosely speaking, this states that for any $\gamma\geq 0$ the ``expected" utility of the MM's instantaneous margin, $\digamma^\pm_i(\delta)\defeq\Lambda_i(\delta) U_\gamma(\delta)$, should vanish for ``stub quotes". This is a reasonable assumption in practice, as the opposite could lead to unrealistic optimal strategies such as continuously quoting ``infinite spreads".
\end{remark}

\begin{remark*}
Note that the above intensities are predictable, and since any admissible spreads $\delm,\delp$ are bounded, $\lm,\lp$ are in turn bounded. Accordingly, $\Nm,\Np, N$ and $X$ are non-explosive (see, e.g., \cite[p.27 T8]{B}). Furthermore, for any constant $\lmax>0$ such that $\lm + \lp\leq\lmax$, it is easy to see that for all $p\in\mathbb N,\ t\in[0,T]$ and $r\geq 0$,
\begin{equation}
\label{bounded_moments}
\mEa[|N_t|^p]\leq \mEa[(n_0+\Nm_t+\Np_t)^p]\leq\mEa[(n_0 + M)^p],\quad {\mathcal M}^\alpha_{|N_t|}(r)\leq {\mathcal M}^\alpha_{n_0+\Nm_t+\Np_t}(r) \leq {\mathcal M}^\alpha_{n_0+M}(r),
\end{equation}
where $M\sim\mbox{Poisson}(\lmax T)$ and ${\mathcal M}^\alpha_R$ is the moment generating function of a random variable $R$.\footnote{By possibly enlarging the space, one can consider a counting process $Z$ with no common jumps with $\Nm,\Np$ and stochastic intensity $\lmax-\lm-\lp\geq 0$. Then the process $Z+\Nm+\Np$ is a Poisson with intensity $\lmax$ that dominates $\Np+\Nm$. The claim follows immediately.}
\end{remark*}

The following are some examples which we shall come to back in Section \ref{s:computing_spreads}, when revisiting the standard assumptions in the literature. We remark that exponential, power-law and logistic intensities are the explicit ones most commonly used in the mathematical literature, for tractability reasons.
\begin{examples}
\label{ex}
\begin{enumerate}
\item \label{ex1}
$\Lipm(\delta)=\apm_i e^{-\bpm_i \delta}$, with $\apm_i,\bpm_i>0,\ -\infty\leq\delmin<\delmax\leq+\infty$.
\item \label{ex2}
$\Lipm(\delta)=\frac{\apm_i}{1+\cpm_i e^{\bpm_i \delta}}$, with $\apm_i,\bpm_i,\cpm_i>0,\ -\infty\leq\delmin<\delmax\leq+\infty$.
\item \label{ex3}
$\Lipm(\delta)=\apm_i \big(\pi/2 + \arctan(-\bpm_i\delta+\cpm_i)\big)$, with $\apm_i,\bpm_i>0,\ \cpm_i\in\mathbb R,\ -\infty\leq\delmin<\delmax\leq+\infty$. 
\item \label{ex4}
$\Lipm(\delta)=\apm_i (\bpm_i\delta+\cpm_i)^{-\Dpm_i}$, with $\apm_i,\bpm_i>0,\ \Dpm_i>\mathbbm 1_{\{\gamma=0\}},\ \cpm_i\in\mathbb R,\ \max_i\{-\cpm_i/\bpm_i\}<\delmin<\delmax\leq+\infty$.
\end{enumerate}
\end{examples}
To fix ideas, consider Example \ref{ex1} and let us observe that if $Y$ represents decreasing levels of competition for the MM, the values of $\apm_i$ (resp. $\bpm_i$) should increase (resp. decrease) as $Y$ increases. The same is true if $Y$ represents increasing levels of positive sentiment (or ``bullishness") towards the asset. In practice, $Y$ will result from the combination of these effects and many more.

Suppose the MM can hedge any remaining inventory at time $T$ at the reference price $S_T$ minus a certain execution cost. Neglecting discounting between $0$ and $T$, we consider the optimization problem faced by the MM who tries to maximize the expected utility of her terminal \textit{penalized profit and loss} (P\&L),

\begin{equation}
\label{problem}
\sup_{\delm,\delp\in\mathcal A}\mEdel\left[U_\gamma\Big(\Xdel_T+S_T N_T - \ell(N_T)-\frac{1}{2}\sigma^2\zeta\int_0^TN_t^2dt \Big)\right],
\end{equation}
where: 
\begin{enumerate}[(i)]
\item $\zeta\geq 0$. When $\zeta>0$ we are including in the model a \textit{running penalty}, as firstly done in \cite{CJ}, for the MM to further control her \textit{accumulated inventory risk}. This is the same as subtracting the variance of the mark-to-market value of the inventory, weighted by $\zeta$.
\item $\ell:\mathbb R\to\mathbb [0,+\infty)$ represents the final execution cost and is increasing on $[0,+\infty)$, decreasing on $(-\infty,0]$ and $\ell(0)=0$. (And usually convex in practice.)
\item If $\Nmax=+\infty$, we set $\gamma=\zeta=0\equiv\ell$.
\end{enumerate}

The last restriction on the parameters states that the only case we will consider in which the inventory could be arbitrary large (whichever its sign) is that of a completely risk-neutral MM with negligible costs. The risk averse case is far more challenging and has not been treated in full mathematical detail even in the complete information case (see discussion in \cite{GLFT}). The model given by (iii) is of secondary interest in practice, yet, it will allow us to further understand the general problem and methodology and to have a more holistic view at almost no extra cost.

\begin{remark*}
Previous works in optimal market making do not consider both penalty and CARA utility (with $\gamma>0$) in unison, and instead treat the two families of models separately \cite{G}. Besides the obvious interest in unifying the approach and generalizing existing models, allowing for $\gamma>0$ and $\zeta>0$ simultaneously adds flexibility to the risk managing capabilities of the MM. Indeed, in \cite{G} the author derives some HJB-type systems of ODEs for each problem with a unique risk aversion parameter $\bar\gamma$, and then relates them by the introduction of an auxiliary parameter $0\leq \bar\xi\leq \bar\gamma$. The later is afterwards interpreted as measuring risk aversion to non-execution risk only and it is carried forward to the formulae of the ``optimal strategies". However, such strategies are not shown to result from any particular optimization problem for $0<\bar\xi<\bar\gamma$. By looking at the equations of the full information version of our model (Section \ref{s:full_info}) with a single market regime, one can immediately see that the equations in \cite{G} are obtained via the reparametrization $\gamma=\bar\xi$ and $\zeta = \bar\gamma - \bar\xi$. Thus, the formulation in (\ref{problem}) allows us, in particular, to establish the optimality of the strategies in \cite{G} for any value of the non-execution risk aversion and, in general, to differentiate between aversion to different types of risks. We can say that, in our model, $\gamma$ represents aversion to all types of risks, while the penalty $\zeta$ is used to further increase the aversion to risks other than non-execution risk.
\end{remark*}

\subsection{Jump measures}

It will be useful in the sequel to understand which are the jump measures and intensity kernels involved in our model. Let $\mu^N(dt,dz)$ (or simply $\mu^N$) be the jump (or counting) measure of $N$ (see def. in \cite[p.234]{B} or \cite[p.69]{JS}). We use a similar notation for the jump measures of $Y$ and the pair $(Y,N)$. These are all finite random measures since the processes are non-explosive for any $\delp,\delm\in\mathcal A$ (see (\ref{bounded_moments})), and they admit predictable intensity kernels (see \cite[p.235 D2]{B}). If $m_{z_0}(dz)$ is the Dirac measure at a point $z_0\in \mathbb R$, then the $(\mPa,\mathbb F)$-predictable intensity kernels of $\mu^N,\mu^Y$ and $\mu^{(Y,N)}$ are respectively given by:

\begin{equation}
\label{intensity_kernels}
\begin{split}
\eta^{\alpha,N}_t(dz) & =\eta^N(Y_{t^-}, \delm_t,\delp_t,N_{t^-},dz)\\
																		& = \lm_t m_{1}(dz) + \lp_t m_{-1}(dz)\\
																		& = \Lm_{Y_{t^-}}(\delm_t) \mathbbm 1_{\{N_{t^-}<\Nmax\}} m_{1}(dz)+ \Lp_{Y_{t^-}}(\delp_t)\mathbbm 1_{\{-N_{t^-}<\Nmax\}} m_{-1}(dz),\\
				\eta^Y_t(dh)  & = \eta^Y(Y_{t^-},dh)=\sum_{j\neq Y_{t^-}}q^{Y_{t^-},j}_t m_{j-Y_{t^-}}(dh)												
\end{split}
\end{equation}
and 
\begin{equation}
\label{intensity_kernel_pair}
\eta^{\alpha,(Y,N)}_t(dh,dz) =\eta^{\alpha,N}_t(dz)\otimes m_0(dh) + \eta^Y_t(dh)\otimes m_0(dz).
\end{equation}

\noindent We denote by ${\bar\mu}^N_\alpha(dt,dz)$, $\bar{\mu}^Y_\alpha(dt,dh)$ and ${\bar\mu}^{(Y,N)}_\alpha(dt,dh,dz)$ the corresponding $(\mathbb F,\mathbb P^\alpha)$-compensated measures.
\begin{remark*}
Note that equation (\ref{intensity_kernel_pair}) is a consequence of assuming that $Y,N$ have a.s. no common jumps (which in turn is equivalent to (\ref{no_common_jumps})). Indeed, it is easy to see that $Y,N$ have a.s. no common jumps if and only if
\begin{equation}
\mu^{(Y,N)}(dt,dh,dz)= \mu^{N}(dt,dz)\otimes m_0(dh) + \mu^Y(dt,dh)\otimes m_0(dz),
\end{equation}
which implies (\ref{intensity_kernel_pair}).
\end{remark*}

\subsection{Construction and characterization of the model}

In this section, we rigorously construct and characterize our model. On top of the mathematical interest in a well defined framework, the results and techniques presented here will be used throughout the paper, when tackling both the filtering and the optimization problems. We start by proving the existence of a model under our assumptions (Proposition \ref{Girsanov1}). We construct it by means of a reference probability (an overview of which can be found, e.g., in \cite[Chpt.VI]{B}) and, in particular, the Girsanov theorem for point processes. 

For the following proposition, let us consider a filtered probability space $(\Omega,\mathcal F,\mathbb F,\mathbb Q)$ under the usual conditions, with $\mathcal F=\mathcal F_T$ and $\mathcal F_0$ the completed trivial sigma algebra. We refer to $\mathbb Q$ as the \textit{reference probability}. Suppose this space supports a two-dimensional counting process $(\Nm,\Np)$ (see def. in \cite[Chapt.II]{B}), a Wiener process $W$ and a Markov chain $Y$ with finite state space $\{1,\dots,k\}$, initial distribution $\mu_0$ and time-dependent generator matrix $Q=(q^{ij}_t)_{1\leq i,j\leq k}$ satisfying Assumptions \ref{assumptions_Q}, such that $\Nm,\Np,Y$ verify (\ref{no_common_jumps}). We set $N=\Nm-\Np$ as before and suppose $\Npm$ has intensity $\mathbbm 1^{\pm}_t=\mathbbm 1_{\{\mp N_{t^-}<\Nmax\}}$, where $\Nmax\in\mathbb N\cup \{+\infty\}$ is given.\footnote{Such a simpler model can be constructed for example as a product of canonical spaces, with the existence of the counting processes with the right intensities proved in \cite[Thm.24 and Cor.31]{JP}. The finite-dimensional result has the same proof as in one dimension, starting from independent Poisson measures.} Let $\Lipm:\overline{(\delmin,\delmax)}\to\mathbb R$ for $i=1,\dots,k$ be functions under the Assumptions \ref{assumptions_intensities} \textit{(\ref{assumptions_intensities_Lambda})},for given $-\infty\leq\delmin<\delmax\leq+\infty$ and $\Nmax\in \mathbb N \cup\{+\infty\}$ with $-\Nmax\leq n_0\leq\Nmax$, and define $\mathcal A$ as in (\ref{A}).

\begin{proposition}
\label{Girsanov1}
Let $\alpha=(\delm,\delp)\in{\mathcal A}^2$ and define the process $Z^\alpha$ as the \textit{stochastic exponential}
$$Z^\alpha\defeq \mathcal E\left( \int_0^\cdot\big(\Lm_{Y_{u^-}}(\delm_u)-1\big)\big(d\Nm_u- \mathbbm 1^-_u du\big)+\big(\Lp_{Y_{u^-}}(\delp_u)-1\big)\big(d\Np_u- \mathbbm 1^+_u du\big)\right).\mbox{ Then,}$$ 

\begin{enumerate}[(i)]
\item $Z^\alpha$ is a strictly positive uniformly integrable $(\mathbb F, \mathbb Q)$-martingale. In particular, $\mathbb E^{\mathbb Q}[Z^\alpha_T]=1$.
\item $\mPa$ defined by $\frac{d\mPa}{d\mathbb Q} = Z^\alpha_T$ is an equivalent probability measure such that, for $(\mathbb F,\mPa)$, $\Npm$ is a counting process with intensity $\lpm_t = \Lpm_{Y_{t^-}}(\delpm_t)\mathbbm 1^{\pm}_t$, $W$ is a Wiener process and $Y$ is a Markov chain with state space $\{1,\dots,k\}$, initial distribution $\mu_0$ and generator matrix $Q$. 
(That is, Assumptions \ref{assumptions_Q}, \ref{assumptions_intensities} and (\ref{no_common_jumps}) are all verified.)
\end{enumerate}
\end{proposition}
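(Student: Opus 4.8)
The plan is to realize $\mPa$ via the reference‑probability (Girsanov) approach and to check its hypotheses one by one, the only genuinely delicate point being that $Z^\alpha$ does not lose mass. First I would write $L\defeq\int_0^\cdot\big(\Lm_{Y_{u^-}}(\delm_u)-1\big)\big(d\Nm_u-\mathbbm 1^-_u\,du\big)+\int_0^\cdot\big(\Lp_{Y_{u^-}}(\delp_u)-1\big)\big(d\Np_u-\mathbbm 1^+_u\,du\big)$, so that $Z^\alpha=\mathcal E(L)$ solves $dZ^\alpha_u=Z^\alpha_{u^-}\,dL_u$, $Z^\alpha_0=1$. Since $\delm,\delp\in\mathcal A$ are bounded, $\delpm_u$ stays in a compact subset of $\overline{(\delmin,\delmax)}$; as each $\Lipm$ is continuous and positive and $Y$ has finite state space, $\Lpm_{Y_{u^-}}(\delpm_u)$ is positive and bounded above by some constant $\bar\Lambda\geq 1$, uniformly in $(u,\omega)$. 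Hence the integrands of $L$ are bounded and predictable (using that $\delpm$ is predictable, $Y_{u^-}$ and $\mathbbm 1^{\pm}_u$ are left‑continuous adapted, and $\Lipm$ is continuous) while the compensators $\int_0^\cdot\mathbbm 1^{\pm}_u\,du$ are finite, so $L$ is a true $(\mathbb F,\mathbb Q)$‑martingale and $Z^\alpha$ a local $(\mathbb F,\mathbb Q)$‑martingale. Because $\Nm,\Np$ have no common jumps and $\Lpm>0$, every jump of $L$ equals $\Lpm_{Y_{u^-}}(\delpm_u)-1>-1$, so from the multiplicative form $Z^\alpha_t=\exp\!\big(\int_0^t(1-\Lm_{Y_{u^-}}(\delm_u))\mathbbm 1^-_u\,du+\int_0^t(1-\Lp_{Y_{u^-}}(\delp_u))\mathbbm 1^+_u\,du\big)\prod_{s\leq t,\,\Delta\Nm_s=1}\Lm_{Y_{s^-}}(\delm_s)\prod_{s\leq t,\,\Delta\Np_s=1}\Lp_{Y_{s^-}}(\delp_s)$ the process $Z^\alpha$ is strictly positive.

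For the genuine martingale property, which is the crux, I would argue by domination. The exponential factor above is bounded by $e^{2T}$, and the two products together contain at most $\Nm_T+\Np_T$ factors, each in $(0,\bar\Lambda]$; hence $Z^\alpha_t\leq e^{2T}\bar\Lambda^{\,\Nm_T+\Np_T}$ for every $t\in[0,T]$. Under $\mathbb Q$ the intensity of $\Npm$ is $\mathbbm 1^{\pm}_t\leq 1$, so $\Nm_T+\Np_T$ is stochastically dominated by a $\mathrm{Poisson}(2T)$ variable (exactly as in the Remark following \eqref{bounded_moments}) and thus has finite exponential moments; therefore the family $\{Z^\alpha_t:t\in[0,T]\}$ is bounded above by a single $\mathbb Q$‑integrable random variable. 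As $Z^\alpha\geq 0$ is a local martingale, hence a supermartingale, picking a localizing sequence $(\tau_n)$ with $\tau_n\uparrow\infty$ and passing to the limit in $\mathbb E^{\mathbb Q}[Z^\alpha_{T\wedge\tau_n}]=Z^\alpha_0=1$ by dominated convergence gives $\mathbb E^{\mathbb Q}[Z^\alpha_T]=1$. Hence $Z^\alpha$ is a uniformly integrable $(\mathbb F,\mathbb Q)$‑martingale, which is (i), and $\mPa$ in (ii) is a well‑defined probability measure equivalent to $\mathbb Q$, so $(\Omega,\mathcal F,\mathbb F,\mPa)$ is again under the usual conditions with $\mathcal F_0$ the completed trivial $\sigma$‑algebra.

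It remains to identify the $\mPa$‑dynamics, which I would obtain by Girsanov's theorem applied in turn to $W$, $(\Nm,\Np)$ and $Y$. The local martingale $L$ has finite variation, hence is purely discontinuous, so $[L,W]=0$ and thus $[Z^\alpha,W]=\int_0^\cdot Z^\alpha_{u^-}\,d[L,W]_u=0$; by Girsanov's theorem $W$ stays an $(\mathbb F,\mPa)$‑local martingale, and being continuous with $\langle W\rangle_t=t$ it is an $(\mathbb F,\mPa)$‑Wiener process by L\'evy's characterization. The density process $Z^\alpha$ has precisely the multiplicative structure required by the Girsanov‑type theorem for point processes (see \cite[Chpt.~VI]{B}; cf.\ also \cite{CEFS}): the multipliers $\Lpm_{Y_{t^-}}(\delpm_t)$ are non‑negative and predictable, $\int_0^T\Lpm_{Y_{t^-}}(\delpm_t)\mathbbm 1^{\pm}_t\,dt<\infty$, and $\mathbb E^{\mathbb Q}[Z^\alpha_T]=1$; since $\Nm$ and $\Np$ have no common jumps, the theorem applies coordinatewise and yields that $\Npm$ is an $(\mathbb F,\mPa)$‑counting process with intensity $\lpm_t=\Lpm_{Y_{t^-}}(\delpm_t)\mathbbm 1^{\pm}_t$, i.e.\ the intensity of Assumptions \ref{assumptions_intensities}(\ref{assumptions_intensities_lambda}). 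Finally, because $Y$ has no common jumps with $\Nm,\Np$, the martingale $Z^\alpha$ never jumps at a jump time of $Y$, so the $(\mathbb F,\mPa)$‑compensator of the jump measure $\mu^Y$ coincides with its $(\mathbb F,\mathbb Q)$‑compensator $\eta^Y_t(dh)\,dt=\sum_{j\neq Y_{t^-}}q^{Y_{t^-},j}_t m_{j-Y_{t^-}}(dh)\,dt$ of \eqref{intensity_kernels}; since this compensator depends on the past only through $Y_{t^-}$ and $\mPa$ agrees with $\mathbb Q$ on $\mathcal F_0$ (as $Z^\alpha_0=1$), $Y$ is under $\mPa$ a Markov chain with state space $\{1,\dots,k\}$, initial law $\mu_0$ and generator $Q$, and \eqref{no_common_jumps} persists because $\mPa\sim\mathbb Q$.

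\noindent I expect the single real obstacle to be ruling out that $Z^\alpha$ is a strict supermartingale, i.e.\ proving $\mathbb E^{\mathbb Q}[Z^\alpha_T]=1$; everything else reduces to the boundedness and predictability of $\Lpm_{Y_{t^-}}(\delpm_t)$ and to standard applications of the Girsanov theorems for point processes and for random measures, with the only care needed being to track which parts of the density process jump. The pathwise bound $Z^\alpha_t\leq e^{2T}\bar\Lambda^{\,\Nm_T+\Np_T}$ combined with the exponential integrability of the Poisson‑dominated total order count is what settles the crux.
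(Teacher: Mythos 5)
Your proof is correct and takes essentially the same route as the paper: the explicit exponential formula plus Poisson domination for uniform integrability in (i), and Girsanov for $W$, for the point processes $(\Nm,\Np)$, and for $Y$ in (ii), all keyed on the no-common-jumps condition so that $Z^\alpha$ does not jump simultaneously with $W$ or $Y$. One small remark on the $Y$ step: the inference from ``the $\mPa$-compensator of $\mu^Y$ is unchanged and depends on the past only through $Y_{t^-}$'' to ``$Y$ is $\mPa$-Markov with generator $Q$'' is precisely the well-posedness of the martingale problem for $(\mathcal A^Y_t,\mu_0)$; the paper makes this explicit by showing $M_t=f(Y_t)-f(Y_0)-\int_0^t\mathcal A^Y_u f(Y_u)\,du$ remains a bounded $\mPa$-martingale and citing the Ethier--Kurtz uniqueness theorem, whereas your version uses the same uniqueness implicitly, so it would be cleaner to invoke it.
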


\begin{proof}
See Appendix \ref{proof:Girsanov1}.
\end{proof}

Reciprocally, suppose we start with a family of filtered probability spaces $\{(\Omega, \mathcal F, \mathbb F = (\mathcal F_t)_{0\leq t\leq T}, \mPa)\}_{\alpha\in{\mathcal A}^2}$ as in Section \ref{s:preliminaries}, supporting a counting process $(\Nm,\Np)$ and a Markov chain $Y$ satisfying condition (\ref{no_common_jumps}), together with a Wiener process $W$. Suppose also that $Y$ has finite state space $\{1,\dots,k\}$, initial distribution $\mu_0$ and time-dependent generator matrix $Q$, and that our Assumptions \ref{assumptions_Q} and \ref{assumptions_intensities} are in place. We set $N= \Nm-\Np$ as before. We would like to characterize any such model in terms of a reference probability as in Proposition \ref{Girsanov1} by an inverse change of measure. However, we only claim the uniqueness of the reference probability on $\mathbb F^N$. We will come back to this result in the sequel.  

\begin{proposition}
\label{Girsanov2}
Let $\alpha=(\delm,\delp)\in{\mathcal A}^2$ and define 
$$\bar{Z}^\alpha\defeq \mathcal E\left(\int_0^\cdot\Big(1/\Lm_{Y_{u^-}}(\delm_u)-1\Big)\big(d\Nm_u-\lm_u du\big)+\Big(1/\Lp_{Y_{u^-}}(\delp_u)-1\Big)\big(d\Np_u-\lp_u du\big)\right).\mbox{ Then,}$$  

\begin{enumerate}[(i)]
\item $\bar{Z}^\alpha$ is a strictly positive uniformly integrable $(\mathbb F, \mPa)$-martingale. In particular, $\mEa[\bar{Z}_T]=1$. 
\item $\mQa$ defined by $\frac{d\mQa}{d\mPa} = \bar{Z}^\alpha_T$ is a probability measure equivalent to $\mPa$ such that, for $(\mathbb F,\mQa)$, $\Npm$ is a counting process with intensity $\mathbbm 1^{\pm}_t$,  
$W$ is a Wiener process and $Y$ is a Markov chain with state space $\{1,\dots,k\}$, initial distribution $\mu_0$ and generator matrix $Q$. Furthermore, $Y, N, W$ are independent.
\item If we define $Z^\alpha$ as in Lemma \ref{Girsanov1} under $\mathbb Q^\alpha$, then $Z^\alpha=1/\bar{Z}^\alpha$ (i.e., the two changes of measure are inverse of each other).
\item For all $\tilde\alpha\in{\mathcal A}^2$, $\mathbb Q^{\tilde\alpha}\equiv\mQa$ on $\mathbb F^N_T$. 
\end{enumerate}
\end{proposition}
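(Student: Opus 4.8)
The plan is to read $\bar{Z}^\alpha$ as the change of measure \emph{inverse} to that of Proposition \ref{Girsanov1}: it multiplies the $(\mPa,\mathbb F)$-intensity $\lpm_t=\Lpm_{Y_{t^-}}(\delpm_t)\mathbbm 1^{\pm}_t$ of $\Npm$ by $1/\Lpm_{Y_{t^-}}(\delpm_t)$, cancelling the $\Lpm$-factors and leaving only the $\alpha$-independent ``baseline'' intensity $\mathbbm 1^{\pm}_t$, while the rest of the structure stays untouched because $\bar{Z}^\alpha$ is purely discontinuous and jumps only at jump times of $\Nm$ or $\Np$. For (i): since any $\delpm\in\mathcal A$ is bounded, it takes values in a compact subinterval $K\subset\overline{(\delmin,\delmax)}$ on which each $\Lipm$ is continuous and strictly positive (Assumption \ref{assumptions_intensities}), hence bounded and bounded away from $0$; thus the integrands $1/\Lpm_{Y_{u^-}}(\delpm_u)-1$ are bounded and $>-1$, so $\bar{Z}^\alpha$ is a strictly positive $(\mPa,\mathbb F)$-local martingale. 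To upgrade it to a uniformly integrable martingale I would write out the Dol\'eans--Dade exponential explicitly: since $\Nm,\Np$ have a.s.\ no common jumps and $\lpm$ is bounded, $\bar{Z}^\alpha_t$ equals a bounded exponential times a finite product of jump factors $1/\Lpm_{Y_{u^-}}(\delpm_u)$, each bounded by a constant $C$, so $\bar{Z}^\alpha_T\le e^{KT}C^{\Nm_T+\Np_T}$ for a suitable $K$; by (\ref{bounded_moments}) this majorant is integrable, which gives uniform integrability and, via a localizing sequence together with dominated convergence, $\mEa[\bar{Z}^\alpha_T]=1$. (Alternatively, the proof of Proposition \ref{Girsanov1}(i) applies verbatim with $\mathbb Q$ and $\mPa$, and $\Lpm_i$ and $1/\Lpm_i$, interchanged.)

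For (ii), with (i) in hand $\mQa:=\bar{Z}^\alpha_T\cdot\mPa$ is a probability measure equivalent to $\mPa$, agreeing with it on $\mathcal F_0$. I would then invoke the Girsanov theorem for point processes and random measures \cite[Chpt.VI]{B}: since $\bar{Z}^\alpha$ is the stochastic exponential of $\int(1/\Lm_{Y_{u^-}}(\delm_u)-1)(d\Nm_u-\lm_u du)+\int(1/\Lp_{Y_{u^-}}(\delp_u)-1)(d\Np_u-\lp_u du)$, the $(\mQa,\mathbb F)$-intensity of $\Npm$ becomes $(1/\Lpm_{Y_{u^-}}(\delpm_u))\,\lpm_u=\mathbbm 1^{\pm}_u$. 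Because $\bar{Z}^\alpha$ is purely discontinuous with jumps supported on the jump times of $\Nm,\Np$, while $W$ is continuous and $Y$ has a.s.\ no common jumps with $\Nm,\Np$ by (\ref{no_common_jumps}), the change of measure preserves the local characteristics of $W$ and of the jump measure $\mu^Y$; hence $W$ remains a Wiener process (L\'evy's characterization) and $Y$ remains a Markov chain with generator $Q$ and initial law $\mu_0$ (unchanged since $\bar{Z}^\alpha_0=1$ and $\mathcal F_0$ is trivial). For independence I would observe that under $\mQa$ the jump kernel of $N$ is $\mathbbm 1_{\{N_{t^-}<\Nmax\}}m_1(dz)+\mathbbm 1_{\{-N_{t^-}<\Nmax\}}m_{-1}(dz)$, a function of $N_{t^-}$ alone, that of $Y$ a function of $Y_{t^-}$ alone, $W$ has deterministic continuous characteristics, and the three processes have pairwise no common jumps; hence under $\mQa$ the triple $(Y,N,W)$ solves an $\mathbb F$-martingale problem whose generator is a sum of three operators acting on separate coordinates, and as this martingale problem is well posed with product initial law $\mu_0\otimes m_{n_0}\otimes m_0$, its unique solution is the product of the three marginal laws --- i.e., $Y,N,W$ are $\mQa$-independent (the same mechanism as in \cite{CEFS}).

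For (iii), multiply the two Dol\'eans--Dade exponentials: from $\lpm_u=\Lpm_{Y_{u^-}}(\delpm_u)\mathbbm 1^{\pm}_u$ one gets $(1/\Lpm_{Y_{u^-}}(\delpm_u)-1)\lpm_u=-(\Lpm_{Y_{u^-}}(\delpm_u)-1)\mathbbm 1^{\pm}_u$, so the absolutely continuous parts of $\log Z^\alpha$ and $\log\bar{Z}^\alpha$ are opposite while the jump factors $\Lpm_{Y_{u^-}}(\delpm_u)$ and $1/\Lpm_{Y_{u^-}}(\delpm_u)$ are reciprocal, whence $Z^\alpha\bar{Z}^\alpha\equiv 1$ pathwise. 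For (iv), the decisive point from (ii) is that the $(\mQa,\mathbb F)$-intensity of $\Npm$ equals $\mathbbm 1^{\pm}_t=\mathbbm 1_{\{\mp N_{t^-}<\Nmax\}}$, which does not depend on $\alpha$ and is $\mathbb F^N$-predictable (recall $\mathbb F^{\Nm,\Np}=\mathbb F^N$); hence $(\Nm,\Np)$ has the same intensity relative to $\mathbb F^N$, and the law of a two-dimensional counting process with no common jumps and prescribed, bounded, $\mathbb F^N$-Markovian intensities is uniquely determined. Consequently the $\mQa$-law of the path of $N$ (which, having no common up/down jumps, recovers $(\Nm,\Np)$) is the same for every $\alpha$, and since $\mathbb F^N_T$ is generated by $N$ up to the null sets --- common to all the equivalent measures involved --- $\mathbb Q^{\tilde\alpha}$ and $\mQa$ coincide, in particular are equivalent, on $\mathbb F^N_T$ for every $\tilde\alpha\in{\mathcal A}^2$.

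The step requiring the most care is the independence claim in (ii): one must argue cleanly that preservation of the (mutually decoupled) local characteristics under the purely discontinuous change of measure $\bar{Z}^\alpha$ forces the joint law of $(Y,N,W)$ to factor, which is best handled through well-posedness of the associated martingale problem rather than by verifying independence directly. Parts (i), (iii) and (iv) are essentially routine once the explicit exponential formulas and the $\alpha$-independence of the baseline intensity have been recorded.
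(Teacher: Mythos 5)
Your proof of (i), (iii) and (iv) follows essentially the same route as the paper. For (i) the paper simply says the argument of Proposition~\ref{Girsanov1} applies verbatim (with $1/\Lpm_i$ in place of $\Lpm_i$, noting boundedness and strict positivity on the compact range of $\delpm$), which is exactly your observation; the explicit-exponential domination argument you sketch is the same one underlying Proposition~\ref{Girsanov1}(i). For (iii) both proofs multiply the explicit Dol\'eans--Dade expressions (\ref{exponential_explicit}) and observe the cancellation $(1/\Lpm-1)\lpm=-(\Lpm-1)\mathbbm 1^\pm$ together with the reciprocal jump factors. For (iv) both use that the $(\mQa,\mathbb F)$-intensity $\mathbbm 1^\pm_t$ is $\mathbb F^N$-predictable and $\alpha$-free, so it determines the law of $N$ on its natural filtration (the paper cites \cite[p.64 T8]{B}); your phrasing is correct.

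The genuine divergence is the independence claim in (ii). The paper establishes $\mQa$-independence of $Y,N,W$ by invoking the predictable representation property: the compensated $(\mQa,\mathbb F)$-martingales built from $Y$, $N$ and $W$ each have the PRP with respect to $\mathbb F^Y$, $\mathbb F^N$, $\mathbb F^W$ respectively, and they are mutually orthogonal (no common jumps, $W$ continuous); a known lemma then yields independence directly. You instead assert that the joint $(Y,N,W)$ solves a martingale problem with a decoupled generator and product initial law, whose unique solution must be the product law. The conclusion is right, but the step ``this joint martingale problem is well posed'' is exactly the nontrivial fact that needs an argument: well-posedness of the joint is not an automatic consequence of well-posedness of the three marginals, and you do not supply a reference or proof for it. The PRP-plus-orthogonality route avoids this entirely, since it works directly with the known martingale structure under $\mQa$ and does not require any uniqueness statement for a product martingale problem. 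If you want to keep your approach, you should at least invoke an explicit theorem (e.g.\ a product martingale problem result in the spirit of Ethier--Kurtz) or reduce to the PRP/orthogonality lemma, which is what the paper does.
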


\begin{proof}
See Appendix \ref{proof:Girsanov2}.
\end{proof}

\section{Filtering problem}
\label{s:filtering}
\setcounter{assumptions}{0}
\setcounter{subsection}{1}

Since the MM cannot directly observe all the information in $\mathbb F$ but only $\mathbb F^{N,W}$ (in particular, she cannot observe $Y$), in order to solve the optimization problem (\ref{problem}) under partial information we want to reduce it first to an equivalent one under full information. Throughout this section we work under $\mPa$ with $\alpha=(\delm,\delp)\in{\mathcal A}^2$ fixed. We sometimes omit $\alpha$ from the notation for simplicity.

Let us consider the optional projections
$$
\piia \defeq^o(\mathbbm 1_{\{Y_t=i\}})^{(\mPa,\mathbb F^{N,W})},\ 1\leq i\leq k,\quad\mbox{ and }\quad\pia\defeq (\Pi^{\alpha,1},\dots,\Pi^{\alpha,k}).\footnote{For any càdlàg bounded process $M$ (not necessarily adapted) on a filtered probability space $(\Sigma,\mathcal H,\mathbb H,\mathbb P)$ satisfying the usual conditions, the \textit{optional projection} of $M$ on $\mathbb H$ is the unique càdlàg process $^oM^{(\mathbb P,\mathbb H)}$ such that $ ^oM^{(\mathbb P,\mathbb H)}_t = \mathbb E^{\mathbb P}[M_t|\mathcal H_t]$ a.s. for each $t$. Its existence is guaranteed by the Optional Projection Theorem (see, e.g., \cite[p.264]{JYC} or \cite[p.357-358]{N}).}
$$
That is, $\pia$ is the unique càdlàg version of the conditional distribution of $Y$ given the observable information: $\piia_t = \mPa(Y_t=i|\mathcal F^{N,W}_t)$. It reflects the MM's beliefs on the state of the market as she updates them through time.

At any point in time, the MM typically observes orders arriving at rates which do not seem to match exactly those of any particular regime, but some average ones instead. We want to characterize the observable (that is, the $\mathbb F^{N,W}$-) predictable intensities of $\Nm,\Np\mbox{ and }\mu^N$ in terms of $\pia$. Loosely speaking, the observable intensity of, say $\Np$, is obtained projecting: $\mEa[\lp_t|\mathcal F^{W,N}_t]$ \cite[p.32 Comment and Pseudo-Proof of T14]{B}. This yields an average of the different regime intensities weighted by the observable probabilities of the market states. The only technical difficulty is that of finding a predictable version of such process. $ ^o(\lp)$ has the desired projective property but it is not predictable in general. $ ^o(\lp)_{t^-}$, on the other hand, is predictable but does not normally enjoy the projective property. In fact, the process we are looking for is ``in between" these two.

\begin{proposition}[\textbf{Observable intensities}]
\label{observable_intensities}
The $(\mPa,\mathbb F^{N,W})$-predictable intensities of $\Npm$ and $\mu^N$, resp., are
\begin{equation*}
\widehat{\lpm_t}^\alpha\defeq\mathbbm 1^{\pm}_t\sum_{i=1}^k\piia_{u^-}\Lipm(\delpm_u)\quad\mbox{ and }\quad
\widehat{\eta}^{\alpha,N}_t(dz) \defeq \widehat{\lm_t}^\alpha m_{1}(dz) + \widehat{\lp_t}^\alpha m_{-1}(dz).
\end{equation*}
\end{proposition}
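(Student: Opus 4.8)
The plan is to compute the $(\mPa,\mathbb F^{N,W})$-compensator of $\Npm$ (equivalently, of the counting measure $\mu^N$) directly, identifying the predictable intensity via the characterizing property of dual predictable projections: a nonnegative $\mathbb F^{N,W}$-predictable process $\widehat\lambda$ is the intensity of $\Npm$ if and only if $\Npm_t - \int_0^t \widehat\lambda_s\,ds$ is an $(\mPa,\mathbb F^{N,W})$-martingale, i.e. if and only if $\mEa[\int_0^T H_s\,d\Npm_s] = \mEa[\int_0^T H_s\widehat\lambda_s\,ds]$ for every nonnegative $\mathbb F^{N,W}$-predictable process $H$ (see \cite[p.27 Def.\ D7]{B} or the integration characterization there). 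So first I would fix such an $H$, and use the fact that $\Npm$ already has $(\mPa,\mathbb F)$-intensity $\lpm_t = \Lpm_{Y_{t^-}}(\delpm_t)\mathbbm 1^\pm_t$ to write $\mEa[\int_0^T H_s\,d\Npm_s] = \mEa[\int_0^T H_s\lpm_s\,ds]$ (this is legitimate since $H$, being $\mathbb F^{N,W}$-predictable, is a fortiori $\mathbb F$-predictable, and all the intensities involved are bounded by the Remark following Assumptions \ref{assumptions_intensities}, so there are no integrability issues).

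Next I would condition inside the time integral. Writing $\lpm_s = \mathbbm 1^\pm_s\sum_{i=1}^k \mathbbm 1_{\{Y_{s^-}=i\}}\Lipm(\delpm_s)$ and noting that $H_s$, $\mathbbm 1^\pm_s = \mathbbm 1_{\{\mp N_{s^-}<\Nmax\}}$ and $\delpm_s$ are all $\mathbb F^{N,W}_{s^-}$-measurable (predictability), Fubini together with the tower property over $\mathcal F^{N,W}_{s^-}$ gives
\[
\mEa\Big[\int_0^T H_s\lpm_s\,ds\Big] = \int_0^T \mEa\Big[H_s\,\mathbbm 1^\pm_s\sum_{i=1}^k \mPa(Y_{s^-}=i\mid \mathcal F^{N,W}_{s^-})\,\Lipm(\delpm_s)\Big]\,ds.
\]
Then I would replace $\mPa(Y_{s^-}=i\mid\mathcal F^{N,W}_{s^-})$ by $\piia_{s^-}$: the latter is the left limit of the càdlàg optional projection, hence a predictable process, and since $\mathbb F^{N,W}$ satisfies the usual conditions and $s\mapsto\piia_{s^-}$ is left-continuous, $\piia_{s^-} = \mEa[\mathbbm 1_{\{Y_{s^-}=i\}}\mid\mathcal F^{N,W}_{s^-}]$ holds for a.e.\ $s$ (for a left-continuous adapted process the optional projection and the predictable projection agree, and both equal the stated conditional expectation outside a Lebesgue-null set of times). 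This lets me recognize the $ds$-integrand as $\mEa[H_s\widehat{\lpm_s}^\alpha]$ with $\widehat{\lpm_s}^\alpha = \mathbbm 1^\pm_s\sum_i \piia_{s^-}\Lipm(\delpm_s)$, which is manifestly $\mathbb F^{N,W}$-predictable and bounded. Undoing Fubini gives $\mEa[\int_0^T H_s\,d\Npm_s] = \mEa[\int_0^T H_s\widehat{\lpm_s}^\alpha\,ds]$ for all nonnegative $\mathbb F^{N,W}$-predictable $H$, which is exactly the defining property. The claim for $\widehat\eta^{\alpha,N}$ then follows since $\mu^N = \Nm m_1 + \Np m_{-1}$ (as $\Nm,\Np$ have a.s.\ no common jumps) and the compensator of a sum of counting measures with disjoint jump supports is the sum of the compensators.

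The main technical obstacle is the predictability/projection subtlety flagged in the paragraph preceding the proposition: one cannot simply take $ ^o(\lpm)$ (which has the projective property but is not predictable) nor its left limit $ ^o(\lpm)_{s^-}$ (predictable but without the exact projective property) — the correct object is obtained by projecting the \emph{state indicators} $\mathbbm 1_{\{Y_{s^-}=i\}}$ (left-continuous, hence amenable to the predictable-projection argument above) while leaving the $\mathbb F^{N,W}$-predictable factors $\mathbbm 1^\pm_s$ and $\Lipm(\delpm_s)$ untouched outside the conditional expectation. Making this interchange rigorous — i.e.\ justifying that it suffices to know $\piia_{s^-} = \mEa[\mathbbm 1_{\{Y_{s^-}=i\}}\mid\mathcal F^{N,W}_{s^-}]$ for Lebesgue-a.e.\ $s$ rather than for all $s$, which is all the $ds$-integration needs — is the one point that requires care; everything else is the standard Brémaud-style verification of an intensity via test processes. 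Alternatively, if a cleaner route is preferred, one can invoke \cite[p.32 T14]{B} directly to get that $ ^o(\lpm)$ is the $\mathbb F^{N,W}$-intensity in the loose sense and then pass to the predictable version, but I would favour the self-contained computation above since it also sets up notation ($\piia_{s^-}$, $\mathbbm 1^\pm_s$) used later.
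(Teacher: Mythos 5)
Your proposal is correct and follows essentially the same route as the paper: verify the intensity characterization against $\mathbb F^{N,W}$-predictable test processes, pull the $\mathcal F^{N,W}$-measurable factors out of a conditional expectation via Fubini and the tower property, and use the pathwise countable-jump argument to pass between left limits and right values under $dt$-integration. The one small deviation is that you condition on $\mathcal F^{N,W}_{s^-}$ and invoke the identity $\piia_{s^-}=\mEa[\mathbbm 1_{\{Y_{s^-}=i\}}\mid\mathcal F^{N,W}_{s^-}]$ for a.e.\ $s$, whose parenthetical justification is a bit loose — making it fully rigorous needs either the classical result identifying $\pii_{s^-}$ with the predictable projection, or the fact that $\mathcal F^{N,W}_{s^-}=\mathcal F^{N,W}_s$ up to nulls for fixed $s$ (quasi-left continuity of the observable filtration); the paper's proof sidesteps this entirely by conditioning on the full $\mathcal F^{N,W}_t$ with $\mathbbm 1_{\{Y_t=i\}}$, using only the defining property of the optional projection and then swapping $t$ for $t^-$ under $dt$ at the very end.
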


\begin{notation*}
We set $\widehat{\Lpm}(\pi,\delta)=\sum_{i=1}^k\pi^i\Lipm(\delta)$; thus, $\widehat{\lpm_t}^\alpha=\mathbbm 1^{\pm}_t\widehat{\Lpm}(\pia_{u^-},\delpm_u)$.
\end{notation*}

\begin{proof}
See Appendix \ref{proof:observable_intensities}.
\end{proof}

We give now the \textit{filtering (or Kushner-Stratonovich) equations} for the observable distribution of $Y$. These are coupled stochastic differential equations (SDEs) governing the dynamics of $\pia$. They allow the MM to dynamically and efficiently update an existing estimate $\pia_t$ upon receiving new information, without having to recompute it from scratch. In a nutshell, given a side of the market and the spread quoted, an order arrival modifies the observed probability of regime $i$ by a multiplicative factor. This factor is given by the percentage difference between the orders intensity of regime $i$ and the observed one. In between orders however, the probabilities change continuously and in a more subtle way, depending not only on the different levels of liquidity but also on the transition rates between them.

\begin{notation*}
We denote by $\Delta\subset\mathbb R^k$ the $(k-1)$-simplex (i.e., $\Delta=\{\pi\in\mathbb R^k: 0\leq\pi^i\leq 1\ \mbox{for all }i\mbox{, and }\sum_{i}\pi^i=1\}$ and by $\Delta^\circ$ its interior relative to the hyperplane $\{\pi\in\mathbb R^k: \sum_{i}\pi^i=1\}$ (i.e., $\Delta^o=\{\pi\in\Delta: 0<\pi^i< 1\mbox{ for all }i\}$).
\end{notation*}

\begin{proposition}[\textbf{Observable distribution of $Y$}]
\label{proposition_Pi}
The process $\pia=(\Pi^{\alpha,1},\dots,\Pi^{\alpha,k})$ is the unique strong solution of the constrained system of SDEs
\begin{equation}
\label{Pi_equivalent}
\begin{split}
d\pii_t &= \sum_{j=1}^k\left(q_t^{ji}\pij_t+\pii_t\pij_t\Big(\mathbbm 1^-_t(\Ljm-\Lim)(\delm_t) + \mathbbm 1^+_t(\Ljp- \Lip)(\delp_t) \Big) \right)dt\\ 
				& + \pii_{t^-}\left(\frac{\Lim(\delm_t)}{\widehat{\Lm}(\pia_{u^-},\delm_u)}-1 \right) d\Nm_t + \pii_{t^-}\left(\frac{\Lip(\delp_t)}{\widehat{\Lp}(\pia_{u^-},\delp_u)}-1 \right) d\Np_t,
\end{split}
\end{equation}
subject to $\Pi_0 = \mu_0$ and $\Pi_t\in\Delta$ for all $t\in[0,T]$ a.s.
\end{proposition}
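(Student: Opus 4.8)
The plan is to derive the filtering equations via the classical reference-probability (Zakai/Kushner--Stratonovich) method, using the change of measure from Proposition~\ref{Girsanov2}, and then verify that the constrained SDE has a unique strong solution. Work under $\mPa$ with $\alpha$ fixed, and pass to $\mQa$ via $\bar Z^\alpha$, under which $Y$, $N$, $W$ are independent, $\Npm$ has intensity $\mathbbm 1^{\pm}_t$, and $Y$ remains a Markov chain with generator $Q$. First I would write the unnormalized conditional distribution $\sigma^i_t\defeq\mathbb E^{\mQa}\!\big[\mathbbm 1_{\{Y_t=i\}} Z^\alpha_t\,\big|\,\mathcal F^{N,W}_t\big]$, where $Z^\alpha=1/\bar Z^\alpha$ is the Radon--Nikodym density restored on $\mathbb F$; by the Kallianpur--Striebel (Bayes) formula, $\piia_t=\sigma^i_t/\sum_{j}\sigma^j_t$. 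The key input is the explicit form of $Z^\alpha$ from Proposition~\ref{Girsanov1}: it solves the linear SDE $dZ^\alpha_t = Z^\alpha_{t^-}\big[(\Lm_{Y_{t^-}}(\delm_t)-1)(d\Nm_t-\mathbbm 1^-_t dt)+(\Lp_{Y_{t^-}}(\delp_t)-1)(d\Np_t-\mathbbm 1^+_t dt)\big]$. Combining this with the martingale decomposition of $\mathbbm 1_{\{Y_t=i\}}$ under $\mQa$ — namely $d\mathbbm 1_{\{Y_t=i\}}=\sum_{j}q^{ji}_t\mathbbm 1_{\{Y_{t^-}=j\}}\,dt + dM^i_t$ with $M^i$ a $(\mathbb F,\mQa)$-martingale orthogonal to $\Nm,\Np,W$ (using the no-common-jumps assumption \eqref{no_common_jumps}) — and applying the product rule, I obtain the Zakai equation
\begin{equation*}
d\sigma^i_t = \sum_{j}q^{ji}_t\sigma^j_t\,dt + \sigma^i_{t^-}(\Lim(\delm_t)-1)(d\Nm_t-\mathbbm 1^-_t dt) + \sigma^i_{t^-}(\Lip(\delp_t)-1)(d\Np_t-\mathbbm 1^+_t dt),
\end{equation*}
after projecting onto $\mathbb F^{N,W}$ (the projection of the martingale increments against $d\Nm,d\Np$ is handled by the innovations argument, replacing $\Lim$ evaluated at $Y_{t^-}$ by $\sigma^i_{t^-}$ inside the conditional expectation since $\delpm$ is $\mathbb F^{N,W}$-predictable, and the $W$-driven and $Y$-driven martingale parts project to zero because $W$ and $Y$ are $\mQa$-independent of everything observable beyond what is already captured).

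Next I would normalize: writing $\Sigma_t\defeq\sum_j\sigma^j_t$ and applying It\^o's formula to $\piia_t=\sigma^i_t/\Sigma_t$, tracking both the continuous (drift) part and the jump part separately. The drift computation uses $\sum_j q^{ji}_t=$ (no constraint) but $\sum_{i}\sum_j q^{ji}_t\sigma^j_t = \sum_j\sigma^j_t\sum_i q^{ji}_t=0$ by Assumptions~\ref{assumptions_Q}, so the $\Sigma$-dynamics drift is purely $-\Sigma_t\big(\mathbbm 1^-_t\sum_j\pij_t\Ljm(\delm_t)+\mathbbm 1^+_t\sum_j\pij_t\Ljp(\delp_t)\big)$ plus compensator corrections; carrying out the quotient rule yields exactly the quadratic term $\sum_j\pii_t\pij_t\big(\mathbbm 1^-_t(\Ljm-\Lim)(\delm_t)+\mathbbm 1^+_t(\Ljp-\Lip)(\delp_t)\big)$ in the drift of \eqref{Pi_equivalent}. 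For the jump part, at a jump of $\Nm$ one has $\sigma^i_t/\sigma^i_{t^-}=\Lim(\delm_t)$ and $\Sigma_t/\Sigma_{t^-}=\sum_j\pij_{t^-}\Lim(\delm_t)=\widehat{\Lm}(\pia_{t^-},\delm_t)$, so $\piia_t=\piia_{t^-}\Lim(\delm_t)/\widehat{\Lm}(\pia_{t^-},\delm_t)$, which gives the coefficient $\pii_{t^-}\big(\Lim(\delm_t)/\widehat{\Lm}(\pia_{t^-},\delm_t)-1\big)$ of $d\Nm_t$ (and symmetrically for $\Np$). Assembling the drift and jump contributions reproduces \eqref{Pi_equivalent}. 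That $\Pi_0=\mu_0$ is immediate from $\sigma^i_0=\mathbbm 1_{\{Y_0=i\}}$ projected, giving $\piia_0=\mQa(Y_0=i)=\mu_0^i$, and $\Pi_t\in\Delta$ holds by construction since $\piia_t\ge 0$ and $\sum_i\piia_t=1$ for all $t$.

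Finally, for uniqueness of the strong solution, I would argue that the Zakai equation for $\sigma$ is a \emph{linear} SDE with bounded, predictable coefficients (boundedness of $\Lim(\delpm_t)$ follows from continuity of $\Lipm$ on the closed interval together with boundedness of admissible spreads, or from the faster-than-$1/\delta$ decay when $\delmax=+\infty$ combined with $\Lipm>0$; in any case the relevant compositions are bounded on $[0,T]\times\Omega$), hence has a unique strong solution by standard existence/uniqueness for SDEs driven by a Brownian motion and finitely many counting processes with bounded jump coefficients; the denominator $\widehat{\Lm}(\pia_{t^-},\delm_t)=\sum_j\pij_{t^-}\Ljm(\delm_t)$ is bounded below away from zero because each $\Ljm>0$ and $\sum_j\pij_{t^-}=1$, so the normalization map $\sigma\mapsto\sigma/\Sigma$ is Lipschitz on the relevant domain and transfers uniqueness to $\Pi$. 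Conversely, any strong solution of \eqref{Pi_equivalent} taking values in $\Delta$ can be un-normalized (multiply by the scalar process solving the induced linear ODE/SDE for $\Sigma$) to produce a solution of the Zakai equation, forcing agreement. The main obstacle I anticipate is the innovations/projection step: rigorously justifying that the optional projection of $Z^\alpha_{t^-}\,\mathbbm 1_{\{Y_{t^-}=i\}}(\Lim(\delm_t)-1)$ against $d\Nm_t$ equals $\sigma^i_{t^-}(\Lim(\delm_t)-1)\,d\Nm_t$ — i.e., that one may pull the predictable, $\mathbb F^{N,W}$-measurable spread out and replace the $Y$-dependent factor by its filtered version inside the stochastic integral against the \emph{observable} counting process — which requires care with predictable versus optional versions (the subtlety flagged before the proposition) and an appeal to the projection theorem for stochastic integrals together with the fact that $\Nm$ is $\mathbb F^{N,W}$-adapted with a well-defined $\mathbb F^{N,W}$-predictable intensity given by Proposition~\ref{observable_intensities}.
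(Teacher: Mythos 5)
Your derivation of the Kushner--Stratonovich equations is correct and follows the same underlying technique as the paper, which the paper itself offloads to a citation of \cite[Prop.3.3]{CEFS} (after checking that the requisite hypotheses---equivalence of the reference measure $\mQa$ with a uniformly bounded Radon--Nikodym derivative---hold in this setting). You carry out inline what that reference does: set up the unnormalized filter $\sigma^i_t$ via the inverse change of measure $Z^\alpha=1/\bar Z^\alpha$, derive the Zakai equation by It\^o under $\mQa$, and normalize. The It\^o computations you report for the drift (using $\sum_i q^{ji}_t=0$ and $\sum_j\pij_t=1$) and for the jump coefficient ($\sigma^i_t/\sigma^i_{t^-}=\Lim(\delm_t)$ at a jump of $\Nm$, and the resulting ratio $\pii_{t^-}\Lim/\widehat{\Lm}$) reproduce \eqref{Pi_equivalent} exactly. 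You are also right to flag the projection step---pulling $\Lim(\delm_t)$ out by $\mathbb F^{N,W}$-predictability, replacing $Z^\alpha_{t^-}\mathbbm 1_{\{Y_{t^-}=i\}}$ by $\sigma^i_{t^-}$ inside the integral against $d\Nm_t-\mathbbm 1^-_t\,dt$, and killing the $Y$- and $W$-driven martingale terms via $\mQa$-independence---as the genuinely delicate point; that is indeed the subtlety the paper alludes to before Proposition \ref{observable_intensities} and the reason it cites a reference rather than reproducing the argument.

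Where you part ways with the paper is the uniqueness argument, and here you take a genuinely different route. The paper explicitly remarks that the jump coefficients of \eqref{Pi_equivalent} are not Lipschitz in $\pi$ (the denominator $\widehat{\Lpm}(\pi,\cdot)$ depends on $\pi$), so standard pathwise uniqueness theorems for jump SDEs cannot be applied; it then works pathwise, treating the SDE as a sequence of ODEs on the interjump intervals $[\tau_m,\tau_{m+1})$ with Lipschitz drift on the compact simplex $\Delta$, and applies Gr\"onwall interval by interval. You instead exploit the linearity of the Zakai SDE in $\sigma$: with bounded predictable coefficients and no Brownian driver, it has a unique strong solution, and you propose to transfer uniqueness to \eqref{Pi_equivalent} by un-normalizing: given any $\Delta$-valued solution $\Pi$, define $\Sigma$ as the Dol\'eans--Dade exponential with jump factor $\widehat{\Lpm}(\Pi_{t^-},\delpm_t)>0$, and check by It\^o that $\sigma^i\defeq\Sigma\Pi^i$ solves Zakai with the right initial data, forcing all such $\Pi$ to coincide. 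This does work---the It\^o verification goes through (the $\mathbbm 1^\pm_t$ compensator terms cancel precisely because $\sum_j\pij(\Ljpm-\Lipm)=\widehat{\Lpm}-\Lipm$), and strict positivity of $\Sigma$ makes the normalization invertible---but it is a sketch; to match the paper's level of rigor you would need to spell out that It\^o step. The trade-off is instructive: the paper's Gr\"onwall argument avoids any recourse to the Zakai equation and works directly on the nonlinear SDE, at the cost of an induction over jump times; your argument leverages linearity cleanly but adds the un-normalization verification. Both buy you what is needed, and your route has the advantage of also giving uniqueness for the unnormalized filter as a byproduct.
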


\begin{proof}
See Appendix \ref{proof:proposition_Pi}.
\end{proof}

\begin{remark}
\label{identification}
Consider the identification $\Delta\simeq [0,1]^{k-1}$ (resp. $\Delta^\circ\simeq (0,1)^{k-1}$) obtained by the substitution $\pi^k=1-\sum_{j<k}\pi^j$ (where the choice of the $k$-th coordinate over the rest is completely arbitrary). Then the constrained system of SDEs (\ref{Pi_equivalent})  
for $(\Pi^{\alpha,1}\dots,\Pi^{\alpha,k})$ becomes an ``unconstrained" system for $(\Pi^{\alpha,1}\dots,\Pi^{\alpha,k-1})\in[0,1]^{k-1}$. 
Henceforth, we shall use this identification whenever convenient.  
\end{remark}

We finish this section with a short lemma. It states that the conditional distribution $\pia$ can never reach the relative border of the simplex $\Delta$, provided it starts from the relative interior. This amounts to saying that all regimes have some positive probability at time zero.
\begin{lemma}
\label{open_simplex}
If $\mu_0\in\Delta^\circ$, then $\pia_t\in\Delta^\circ$ for all $0\leq t\leq T$ a.s.
\end{lemma}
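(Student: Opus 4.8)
The plan is to show that each coordinate $\pii_t$ stays strictly between $0$ and $1$ by analysing the SDE \eqref{Pi_equivalent} pathwise. Since $\Pi_t\in\Delta$ a.s. (by Proposition \ref{proposition_Pi}), it suffices to prove that no coordinate $\pii$ ever hits $0$, because if all $\pii>0$ and they sum to $1$ then automatically each $\pii<1$. So fix $i$ and study the process $\pii$. First I would observe that the jumps of $\pii$ are benign: at a jump of $\Nm$ (resp.\ $\Np$) the coordinate is multiplied by the strictly positive factor $\Lim(\delm_t)/\widehat{\Lm}(\Pi_{t^-},\delm_t)$ (resp.\ the analogous one), which is well-defined and bounded away from $0$ on any finite time interval because the $\Lipm$ are continuous and strictly positive on $\overline{(\delmin,\delmax)}$, the spreads are bounded, and (by Lemma \ref{open_simplex}'s own inductive setup, or simply from continuity of the flow up to the jump time) $\widehat{\Lm}(\Pi_{t^-},\delm_t)=\sum_j\Pi^j_{t^-}\Ljm(\delm_t)>0$. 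Hence multiplicative jumps cannot send a positive coordinate to $0$, and in fact cannot send it to $0$ in one step at all.

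Next I would deal with the continuous part. Between jumps, $\pii$ solves the ODE $\dot{\pii}_t=\sum_j\big(q_t^{ji}\pij_t+\pii_t\pij_t(\cdots)\big)$. The key point is that the right-hand side, viewed as a function of $\pii$ with the other coordinates and $t$ frozen, is of the form $a_t+b_t\pii_t$ where $a_t=\sum_{j\neq i}q_t^{ji}\pij_t\geq 0$ since the off-diagonal generator entries are nonnegative (Assumptions \ref{assumptions_Q}), and $b_t$ is bounded on $[0,T]$ (the $q_t^{ii}$ are continuous hence bounded, the $\Lipm$ evaluated at bounded spreads are bounded, and $\pij_t\in[0,1]$). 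Therefore, by a Gronwall/comparison argument, if $\pii$ starts positive at the beginning of a between-jumps interval it satisfies $\pii_t\geq \pii_{s}\exp\big(-\int_s^t|b_u|\,du\big)>0$ on that interval; more globally, combining with the positivity of the jump factors, one gets a pathwise lower bound $\pii_t\geq \mu_0^i\,\exp(-Ct)\prod_{\text{jumps }\leq t}(\text{factor})>0$ for a deterministic constant $C$. This is most cleanly packaged by writing $\pii_t=\pii_0\,\mathcal E(\cdots)_t\cdot(\text{positive absolutely continuous correction})$, i.e.\ representing $\pii$ as a Doléans--Dade exponential times a positive drift term, but the Gronwall estimate on each excursion is enough and avoids fuss.

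Concretely, I would argue: let $\tau=\inf\{t:\pii_t=0\}\wedge T$ and suppose $\mPa(\tau<T)>0$. Since $\pii_{\tau^-}>0$ on $\{\tau<T\}$ (by the excursion estimate applied on $[0,\tau)$, using that only finitely many jumps occur in $[0,T]$ because the intensities are bounded), and since a jump at $\tau$ only multiplies $\pii_{\tau^-}$ by a strictly positive factor, we get $\pii_\tau>0$, contradicting $\pii_\tau=0$. Hence $\tau=T$ a.s., i.e.\ $\pii_t>0$ for all $t<T$; right-continuity and the same argument at $T$ give $\pii_T>0$ as well. Doing this for every $i$ and using $\sum_i\pii_t=1$ yields $\Pi_t\in\Delta^\circ$ for all $t\in[0,T]$ a.s.

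I expect the only genuinely delicate point to be the careful handling of the ``finitely many jumps on $[0,T]$'' fact together with the measurability of $\tau$ and the excursion bookkeeping — everything else (nonnegativity of $a_t$, boundedness of $b_t$, positivity of the jump factors) is immediate from the standing assumptions. If one prefers to avoid excursion arguments entirely, an alternative is to apply Itô's formula to $\log\pii_t$ on the (predictable) set where $\pii_{t^-}>0$, check that the resulting drift and jump terms are bounded below, and conclude $\log\pii_t>-\infty$; this is essentially the same estimate in differential form. Either route is routine once the sign and boundedness observations above are in hand.
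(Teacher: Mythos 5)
Your argument is correct and is essentially the paper's proof: both fix a coordinate $i$, decompose the drift as a nonnegative inflow term $\sum_{j\neq i}q^{ji}_t\pij_t$ plus a bounded multiple of $\pii_t$ (using that off-diagonal generator entries are nonnegative, $A^{ii}\equiv 0$, and coefficients are bounded), use a Gronwall/log estimate to propagate positivity across each between-jump excursion, observe that the multiplicative jump factor $\Lipm_i/\widehat{\Lpm}$ is strictly positive, and close by induction on the finitely many jump times, finishing with the simplex constraint to get $\pii<1$. The paper packages the excursion step as a $\sup$-and-contradiction argument applied to $\log\piia$, whereas you package it as a global stopping-time argument $\tau=\inf\{t:\pii_t=0\}\wedge T$ combined with a direct differential-inequality Gronwall bound; this is a minor stylistic difference (your Gronwall form $\frac{d}{dt}\big(\pii_t e^{-\int b}\big)=a_t e^{-\int b}\ge 0$ slightly avoids having to first restrict to $\{\pii>0\}$ before taking logarithms), but the substance is identical.
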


\begin{proof}
See Appendix \ref{proof:open_simplex}.
\end{proof}

In light of the previous lemma, we will assume from here onwards that 
\begin{equation}
\label{assumption_mu_0}
\mu_0\in\Delta^\circ,
\end{equation}
and therefore work with $\Delta^\circ$ instead of $\Delta$.


\section{Value function and HJB equation}
\label{s:viscosity}
\setcounter{assumptions}{0}
\setcounter{subsection}{1}
In this section, we tackle the control problem of the MM with the filter as an additional state variable. We define the MM's value function and we aim to characterize it by means of an HJB equation. 

Let $t\in[0,T]$. We consider our model ``starting at $t$" instead of $0$. Whenever a process is defined from time ``$t^-$ onwards" (i.e., from time $t$ onwards and decreeing its left-limit value at $t$) this implicitly means it is constant on $[0,t)$. 
In particular, we use this convention for all integrals (stochastic or not) of the form $\int_t^\cdot$ and (with a slight abuse of notation) for the processes $W - W_t$ and $\Npm-\Npm_{t^-}$. We work with $t^-$ instead of $t$ due to the jumps of the processes $\Nm,\Np$. However, since $\Nm,\Np$ are quasi-left continuous,\footnote{For example, because they are increasing càdlàg processes admitting continuous compensators for (any one of) the physical probabilities \cite[p.70 Prop.1.19 or p.77 Prop.2.9]{JS}.} for most intended purposes one can drop the left limit with no harm
. We define $\mathbb F^{t,W,N}=(\sigma(W_r-W_t,N_r-N_{t^-}: t\leq r\leq u\vee t)\vee \mathcal F_0)_u$ and $\mathbb F^{t,N}$ analogously. 

Let $s,x\in\mathbb R,\ n\in\mathbb Z\cap [-\Nmax,\Nmax]$ and $\pi\in\Delta^\circ\subset\mathbb R^k$. The set $\mathcal A_t$ of admissible spreads starting at $t$ is the set of $\delta\in\mathcal A$ which are independent of $\mathcal F^{N,W}_{t^-}$ (equivalently, 
the $\delta\in\mathcal A$ which are $\mathbb F^{t,W,N}$-predictable). Consider for each $\alpha=(\delm,\delp)\in\mathcal A^2_t$ the processes $\St, \Xt, \Nt, \pit$ defined pathwise, outside some set $A\in\mathcal F_0$, by (\ref{S}), (\ref{X}), (\ref{N}), (\ref{Pi_equivalent}) resp., replacing the initial conditions $s_0,x_0,n_0,\mu_0$ at time $0$ by $s,x,n,\pi$ resp. at time $t^-$. We remark that $\mathbb F^{t,W,N} = \mathbb F^{t,W,\Nt}$ (since $\Nt = n+\Nm -\Np - (\Nm_{t^-} - \Np_{t^-})$) and all the processes defined in this section are adapted to this filtration. We further assume there exists a family of ``physical" probabilities $(\mPt)_{\alpha\in\mathcal A_t^2}$ such that their null sets generate $\mathcal F_0$, and for $(\mPt,\mathbb F^{t,W,N})$ it holds that $W - W_t$ is a Wiener process and $\Npm-\Npm_{t^-}$ has predictable intensity $\widehat{\lpm}^{\alpha,t,n,\pi}$, as defined in Proposition \ref{observable_intensities} in terms of $\Nt$ and $\pit$. 

We define the penalized P\&L from $t$ to $T$ (see (\ref{problem}) for parameter restrictions) as
\begin{equation}
\label{PnL}
P^{\alpha,s,x,n}_{t,T}\defeq \Xt_T+\St_T \Nt_T- \ell(\Nt_T) - \frac{1}{2}\sigma^2\zeta\int_t^T(\Nt_u)^2du,
\end{equation}
and the \textit{value function} of problem (\ref{problem}) as
\begin{equation}
\label{value_function}
V(t,s,x,n,\pi)\defeq\sup_{\alpha\in \mathcal A^2_t}\mEt\left[U_\gamma\Big(P^{\alpha,s,x,n}_{t,T}\Big)\right].
\end{equation}
Our goal is to compute optimal or ``close to optimal" strategies.\footnote{By ``close to optimal" we mean that for each $\ve>0$ there exists a strategy such that the supremum in (\ref{value_function}) is attained up to $\ve$.} The Dynamic Programming Principle and Ito's Lemma allow us to formally derive (see, e.g., \cite{Bo}) the \textit{Hamilton--Jacobi--Bellman} (or \textit{dynamic programming}) partial-integro differential equation
associated to $V$: 
\begin{equation}
\label{HJB_V}
\begin{split}
0 & = v_t + \mu v_s +  \frac{1}{2}\sigma^2 v_{ss} + \sum_{i,j=1}^k q_t^{ji}\pi^j v_{\pi^i} +  \frac{1}{2}\sigma^2\zeta n^2 (\gamma v -1)\\
  & + \mathbbm 1_{\{n<\Nmax\}}\sup_{\delm\in\overline{(\delmin,\delmax)}}\Big\{\sum_{i,j=1}^k (\Ljm-\Lim)(\delm)\pi^j\pi^i v_{\pi^i} + \Dm_{\delm}(v) \sum_{i=1}^k\pi^i\Lim(\delm)\Big\} \\
	& + \mathbbm 1_{\{-n<\Nmax\}}\sup_{\delp\in\overline{(\delmin,\delmax)}}\Big\{\sum_{i,j=1}^k(\Ljp-\Lip)(\delp)\pi^j\pi^i v_{\pi^i}+ \Dp_{\delp}(v) \sum_{i=1}^k\pi^i\Lip(\delp)\Big\},
\end{split}
\end{equation}
with terminal condition $v(T,s,x,n,\pi) = U_\gamma(x+sn -\ell(n))$, where:

\begin{equation*}
\begin{split}
&\Dpm_{\delta}(v)(t,s,x,n,\pi)\\
        & = v\Big(t,s,x \pm(s \pm \delta),n\mp 1, \frac{1}{\sum_{j=1}^k\pi^j\Ljpm(\delta)}\big(\pi^1\Lpm_1(\delta),\dots,\pi^k\Lpm_k(\delta)\big)\Big)- v(t,s,x,n,\pi),
\end{split}
\end{equation*}
and we convene the following:
\begin{notation*}
\textit{(i)} The derivatives with respect to $\pi=(\pi^1,\dots,\pi^k)$ should be understood via the identification of Remark \ref{identification}. 
\textit{(ii)} Although it is not meaningful to evaluate $v$ on the inventories $\pm\Nmax\pm 1$, this only happens in equation (\ref{HJB_V}) when the corresponding term vanishes. This slight abuse of notation can be found throughout previous works and we will be using it as well. 
\end{notation*}

Equation (\ref{HJB_V}) can also be seen as a coupled system of PIDEs indexed in $n\in\mathbb Z\cap [-\Nmax,\Nmax]$. (We will talk about system of equations or simply ``equation" indistinctly). Nonlinearity aside, (\ref{HJB_V}) is rather complex, in particular due to being of second order, high-dimensional and with derivatives in almost all of these dimensions. Tackling it directly (either analytically or numerically) is utterly challenging. Consequently, it has become common practice for optimal market making and optimal liquidation models \textit{à la} Avellaneda--Stoikov \cite{AS} to propose an ansatz for the solution \cite{AS,BL,CDJ,CJ,CJR,FL1,FL2,G,GL,GLFT}. This approach however, relies heavily on the existence of a classical solution for the resulting simplified equation, so that the ansatz is ultimately proved valid by a suitable verification theorem. (See Section \ref{s:full_info} for more details.) When the simplified equation does not admit (or cannot be guaranteed to admit) a classical solution, and a viscosity approach needs to be used instead, the previous argument breaks down.

If we attempted to solve our problem by the standard approach,
 a plausible ansatz for the value function
 could be
\begin{equation}
\label{decomposition}
V(t,s,x,n,\pi) = U_\gamma(x+sn + \Theta(t,n,\pi)).
\end{equation}
\noindent Formal substitution 
yields the following equation for $\Theta$:

\begin{equation}
\label{HJB_Theta}
\begin{split}
0 & = \theta_t + \mu n - \frac{1}{2}\sigma^2 n^2 (\gamma+\zeta)+ \sum_{i,j=1}^k q_t^{ji}\pi^j \theta_{\pi^i}\\
  & + \mathbbm 1_{\{n<\Nmax\}}\sup_{\delm\in\overline{(\delmin,\delmax)}}\Big\{\sum_{i,j=1}^k (\Ljm-\Lim)(\delm)\pi^j\pi^i \theta_{\pi^i} + U_\gamma\big(\delm+\Dm_{\delm}(\theta)\big) \sum_{i=1}^k\pi^i\Lim(\delm)\Big\}\\
	& + \mathbbm 1_{\{-n<\Nmax\}}\sup_{\delp\in\overline{(\delmin,\delmax)}}\Big\{\sum_{i,j=1}^k(\Ljp-\Lip)(\delp)\pi^j\pi^i \theta_{\pi^i}+ U_\gamma\big(\delp+\Dp_{\delp}(\theta)\big) \sum_{i=1}^k\pi^i\Lip(\delp)\Big\},
\end{split}
\end{equation}
with terminal condition $\theta(T,n,\pi) = -\ell(n)$, where:
\begin{equation*}
\begin{split}
\Dpm_{\delta}(\theta)(t,n,\pi) = \theta\Big(t,n\mp 1, \frac{1}{\sum_{j=1}^k\pi^j\Ljpm(\delta)}\big(\pi^1\Lpm_1(\delta),\dots,\pi^k\Lpm_k(\delta)\big)\Big) - \theta(t,n,\pi).
\end{split}
\end{equation*}
The new system of PIDEs is of first order and no longer depends on the variables $s$ and $x$ (there is no diffusion anymore). This is a considerable simplification; one that will permit effective numerical solution in Section \ref{s:numerics}. But it is not good enough for us to assert existence of a classical solution. Notwithstanding, we are able to rigorously prove the decomposition (\ref{decomposition}) and explicitly find $\Theta$ as a new ``value function" (Theorem \ref{main_theorem_1}). When the control space is compact, this ultimately allows us to characterize $\Theta$ as the unique solution of the terminal condition PIDE (\ref{HJB_Theta}) in the viscosity sense (Theorem \ref{main_theorem_2}), further simplified in the unconstrained inventory case. These two theorems constitute the main mathematical results of this paper. They allow us to safely postulate reasonable candidates for optimal (or $\epsilon$-optimal) strategies for the MM, i.e., those given by spreads that (at least approximately) realize the suprema in (\ref{HJB_Theta}).

\begin{theorem}
\label{main_theorem_1}
There exists a unique function $\Theta:[0,T]\times(\mathbb Z\cap[-\Nmax,\Nmax])\times\Delta^\circ\to\mathbb R$ such that the decomposition (\ref{decomposition}) holds true. Furthermore, 
there exists a family of equivalent probability measures $\tilde{\mathbb P}^{\alpha,t,n,\pi}\sim\mPt$, $\alpha=(\delm,\delp)\in\mathcal A_t^2$, such that
\begin{enumerate}[(i)]
\item $\pit$ is the unique strong solution of (\ref{Pi_equivalent}) with initial condition $(t^-,\pi)$ under $\tilde{\mathbb P}^{\alpha,t,n,\pi}$.
\item $\mu^{\Nt}$ has $(\tilde{\mathbb P}^{\alpha,t,n,\pi},\mathbb F^{t,W,N})$-predictable intensity kernel 
$$\tilde{\eta}_u^N(dz) \defeq e^{-\gamma\delm_u}\hlmtu m_{1}(dz) + e^{-\gamma\delp_u}\hlptu m_{-1}(dz).$$
\item $\Theta=U_\gamma^{-1}\circ\Psi = -\frac{1}{\gamma}\log(1-\gamma\Psi)$ with 
\begin{equation}
\label{Psi}
\Psi(t,n,\pi)\defeq\sup_{\alpha\in\tilde{\mathcal A}^2_t}\tilde{\mathbb E}^{\alpha,t,n,\pi}\left[U_\gamma\Big(\tilde P_{t,T}^{\alpha,n,\pi}\Big)\right],
\end{equation}
\end{enumerate}
where $\tilde P_{t,T}^{\alpha,n,\pi}\defeq\int_t^T\big\{ U_\gamma(\delm_u) \hlmtu + U_\gamma(\delp_u) \hlptu + \mu \Nt_u -\frac{1}{2}\sigma^2(\gamma+\zeta)(\Nt_u)^2\big\}du - \ell(\Nt_T)$ and $\tilde{\mathcal A}_t\defeq\{\delta\in\mathcal A:\delta\mbox{ is }\mathbb F^{t,\Nt}\mbox{-predictable}\}$.
\end{theorem}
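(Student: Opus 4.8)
The plan is to perform the change of measure suggested by the CARA structure, absorb the stochastic exponential terms into the dynamics, and then recognize the resulting object as a diffusion-free control problem. First I would fix $\alpha=(\delm,\delp)\in\mathcal A_t^2$ and work under $\mPt$. The key observation is that, since $\Xt$ and $\St$ enter the terminal P\&L linearly and the utility is exponential, we can write
\begin{equation*}
U_\gamma\big(P^{\alpha,s,x,n}_{t,T}\big) = U_\gamma\Big(x+sn+\Theta^{\mathrm{cand}}\Big) \quad\text{once the martingale part is removed,}
\end{equation*}
where the idea is that $e^{-\gamma(\Xt_T+\St_T\Nt_T)}$ factors, up to a multiplicative $(\mPt,\mathbb F^{t,W,N})$-martingale, into a deterministic-in-the-controls drift contribution plus the jump terms $U_\gamma(\delpm_u)$. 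Concretely, applying Itô's formula to $e^{-\gamma(\Xt_u+\St_u\Nt_u)}$ and collecting terms, the diffusion piece $e^{-\gamma(\cdots)}(-\gamma\sigma\Nt_u)dW_u$ and the compensated jump pieces form a local martingale; the finite-variation part contributes the drift $\mu\Nt_u$, the second-order term $-\tfrac12\sigma^2\gamma (\Nt_u)^2$, and — upon adding the running penalty $-\tfrac12\sigma^2\zeta(\Nt_u)^2$ from (\ref{PnL}) — exactly $-\tfrac12\sigma^2(\gamma+\zeta)(\Nt_u)^2$; while each jump of $\Npm$ produces a factor $e^{-\gamma\delpm_u}$. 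Defining $\tilde{\mathbb P}^{\alpha,t,n,\pi}$ via the density obtained from the (genuine, not merely local) martingale that carries the $e^{-\gamma\delpm_u}$ reweighting of the intensities — i.e. the stochastic exponential of $\int_t^\cdot(e^{-\gamma\delm_u}-1)(d\Nm_u-\hlmtu du)+(e^{-\gamma\delp_u}-1)(d\Np_u-\hlptu du)$ together with the Girsanov shift keeping $W-W_t$ Brownian — the Girsanov theorem for point processes (as in Proposition \ref{Girsanov1}) gives that under $\tilde{\mathbb P}^{\alpha,t,n,\pi}$ the process $\mu^{\Nt}$ has the intensity kernel $\tilde\eta^N_u$ in (ii), which yields (i) and (ii). The uniform integrability needed to make this a true change of measure follows exactly as in Proposition \ref{Girsanov1}, using that $\delm,\delp$ are bounded (so $e^{-\gamma\delpm}$ is bounded) and the moment bounds (\ref{bounded_moments}).

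Next I would take expectations. Writing $\tilde{\mathbb E}^{\alpha,t,n,\pi}$ for expectation under the new measure, the change of variables converts $\mEt[U_\gamma(P^{\alpha,s,x,n}_{t,T})]$ into $U_\gamma(x+sn)$ composed with a term depending only on $(t,n,\pi,\alpha)$; more precisely, using $U_\gamma(a+b)=U_\gamma(a)+(1-\gamma U_\gamma(a))U_\gamma(b)$ repeatedly, one gets
\begin{equation*}
\mEt\!\left[U_\gamma\big(P^{\alpha,s,x,n}_{t,T}\big)\right] = U_\gamma\!\Big(x+sn+U_\gamma^{-1}\big(\tilde{\mathbb E}^{\alpha,t,n,\pi}[U_\gamma(\tilde P_{t,T}^{\alpha,n,\pi})]\big)\Big),
\end{equation*}
with $\tilde P_{t,T}^{\alpha,n,\pi}$ as in the statement (the $U_\gamma(\delpm_u)\hlpmtu du$ integrands arising precisely from the reweighted jump intensities, by the intensity formula for $\mu^{\Nt}$ under $\tilde{\mathbb P}^{\alpha,t,n,\pi}$ and the fact that $\int_t^T U_\gamma(\delpm_u)d\Npm_u$ has the same expectation as its compensator). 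Taking the supremum over $\alpha$ and noting $U_\gamma$ is increasing, the sup passes through to give $V(t,s,x,n,\pi)=U_\gamma(x+sn+U_\gamma^{-1}(\Psi(t,n,\pi)))$, i.e. the decomposition (\ref{decomposition}) with $\Theta=U_\gamma^{-1}\circ\Psi$. Uniqueness of $\Theta$ is immediate since $U_\gamma$ is injective. There is one subtlety: in (\ref{Psi}) the admissible set is $\tilde{\mathcal A}_t$ (spreads depending only on the observed $\Nt$, not $W$), whereas $V$ is defined over $\mathcal A_t^2$; one must argue that restricting to $W$-independent controls is without loss, which holds because under $\tilde{\mathbb P}^{\alpha,t,n,\pi}$ the objective $\tilde P_{t,T}^{\alpha,n,\pi}$ no longer involves $W$ at all and $(\Nt,\pit)$ is driven only by the counting processes — so conditioning on $\mathbb F^{t,\Nt}$ can only help, and a measurable-selection/optional-projection argument reduces any $\mathbb F^{t,W,N}$-predictable control to an $\mathbb F^{t,\Nt}$-predictable one with the same value. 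The case $\gamma=0$ is handled as a limit (or directly, with $U_0=\mathrm{Id}$), noting that then $\zeta=\ell=0$ and $\tilde{\mathbb P}=\mPt$.

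The main obstacle I anticipate is the rigorous justification that the relevant stochastic exponential is a true $(\mPt,\mathbb F^{t,W,N})$-martingale (not just a local or strict local martingale), so that $\tilde{\mathbb P}^{\alpha,t,n,\pi}$ is a genuine probability measure of total mass one. Since the intensities $\hlpmtu$ depend on $\pit$ which itself solves the filtering SDE (\ref{Pi_equivalent}) and may degenerate near $\partial\Delta$, the reweighting factor $e^{-\gamma\delpm}\hlpmtu/\hlpmtu = e^{-\gamma\delpm}$ is in fact bounded above and below (bounded spreads), so the jump-multiplier is bounded and one can invoke the same uniform-integrability argument as in Appendix \ref{proof:Girsanov1} — but making this precise, together with checking that $\tilde P^{\alpha,n,\pi}_{t,T}$ is integrable uniformly enough for the supremum manipulations (which again uses (\ref{bounded_moments}) and the growth condition in Assumptions \ref{assumptions_intensities}(\ref{assumptions_intensities_Lambda}) and Remark \ref{expected_margin} to control $U_\gamma(\delpm_u)\hlpmtu$), is where the real care is needed. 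A secondary technical point is verifying the interchange of supremum and the monotone map $U_\gamma^{-1}$ together with the reduction of the control set from $\mathcal A_t^2$ to $\tilde{\mathcal A}_t$; I would expect this to follow from a standard argument but it must be stated carefully given the weak formulation.
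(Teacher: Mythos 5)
Your high-level plan — a change of measure that absorbs the CARA exponential of the diffusion and jump parts, the additive CARA identity $U_\gamma(a+b)=U_\gamma(b)e^{-\gamma a}+U_\gamma(a)$, the definition of $\Psi$ as a new value function, and a final reduction of the admissible controls from $\mathcal A_t^2$ to $\tilde{\mathcal A}_t^2$ — is exactly the paper's. Two points, however, need repair.

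First, \emph{the Brownian part of the change of measure.} You state that $\tilde{\mathbb P}^{\alpha,t,n,\pi}$ is defined by the jump stochastic exponential \emph{together with the Girsanov shift keeping $W-W_t$ Brownian}. This cannot be right as written. In the paper's construction the density factors as $A^\alpha_T B^\alpha_T$ with $A^\alpha=\mathcal E(-\gamma\sigma\int_t^\cdot\Nt_u\,dW_u)$, a \emph{nontrivial} Girsanov density on the diffusion. Under $\tilde{\mathbb P}^\alpha$ the process $W-W_t$ is a Brownian motion with drift $-\gamma\sigma\Nt_u$, not a Wiener process; and it is precisely this drift change that absorbs the factor $\exp\big(-\gamma\sigma\int_t^T\Nt_u\,dW_u-\tfrac12\sigma^2\gamma^2\int_t^T(\Nt_u)^2\,du\big)$ produced by $e^{-\gamma(\Xt_T+\St_T\Nt_T)}$. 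If you keep $W-W_t$ Brownian, that factor is not absorbed by the measure change; to kill it you would have to take a conditional expectation given $\mathbb F^{t,\Nt}_T$, which is legitimate only when $W$ and $\Nt$ are independent under the new measure — that is, for $\alpha\in\tilde{\mathcal A}_t^2$. For the generic $\alpha\in\mathcal A_t^2$ needed to prove the decomposition (\ref{decomposition}), the spreads may depend on $W$, so $W$ and $\Nt$ are correlated under the jump-only reweighting and the argument breaks. (Note also that the theorem statement makes no claim about the law of $W$ under $\tilde{\mathbb P}$, and indeed it is not a Wiener process there.)

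Second, \emph{the reduction $\mathcal A_t^2\to\tilde{\mathcal A}_t^2$.} You propose a ``measurable-selection/optional-projection'' argument, saying that conditioning on $\mathbb F^{t,\Nt}$ can only help because $\tilde P$ no longer involves $W$. This is too vague and, as an optional-projection argument, would not go through: the object being optimized and the measure $\tilde{\mathbb P}^\alpha$ itself both depend nonlinearly on the control, so projecting the control onto $\mathbb F^{t,\Nt}$ does not preserve the dynamics or the expectation. The correct route — and the one the paper uses — is a Fubini/``freeze-the-noise'' argument performed under a reference probability $\tilde{\mathbb Q}^\alpha$ where $W-W_t$ and $\Nt$ are independent: express $\alpha_u=f_u(W-W_t,\Nt)$ by a monotone-class argument, write the objective as $\int\mathbb E^{\mathbb P^N}[g(f(w,\cdot),\cdot)]\,d\mathbb P^W(w)$, and observe that for $\mathbb P^W$-a.e.\ $w$ the frozen control $f(w,\cdot)\in\tilde{\mathcal A}_t^2$, so each slice is dominated by $\tilde\Psi$. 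Your sketch needs to be replaced by this argument (or a precise substitute), as it is not a byproduct of optional projection.

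Two minor remarks: claim (i) follows simply because $\tilde{\mathbb P}^{\alpha,t,n,\pi}\sim\mPt$ and strong uniqueness of (\ref{Pi_equivalent}) is pathwise, not because of anything about the driving intensities; and your assertion that $\gamma=0$ forces $\zeta=\ell=0$ is false — that restriction is imposed only when $\Nmax=+\infty$. When $\gamma=0$ and $\Nmax<\infty$ one may still have $\zeta>0$, $\ell\not\equiv 0$; the paper handles $\gamma=0$ by simply taking $\tilde{\mathbb P}^\alpha=\mPt$ and invoking (\ref{bounded_moments}).
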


\begin{proof}
For shortness, we make an abuse of notation and omit $(t,n,\pi)$ from the probability measures and expectations. Let us start by proving (\ref{decomposition}) and finding $(\tilde{\mathbb P}^\alpha)_{\alpha\in\mathcal A_t^2}$ with the desired properties.  
Using integration by parts we can re-write the penalized P\&L (\ref{PnL}) as

\begin{equation}
\label{Pbar}
\begin{split}
P^{\alpha,s,x,n}_{t,T}& = x +sn + \int_t^T\left\{\mu\Nt_u -\frac{1}{2}\sigma^2(\gamma+\zeta)(\Nt_u)^2\right\}du - \ell(\Nt_T)\\
					 &  +\sigma\int_t^T \Nt_u dW_u +  \frac{1}{2}\sigma^2\gamma\int_t^T (\Nt_u)^2du +\int_t^T \delm_u d\Nm_u + \delp_u d\Np_u \\
					& \backdefeq x+ sn + \overline P_{t,T}^{\alpha,n},
\end{split}
\end{equation}
Consider first the case $\gamma=0$. The integrals with respect to $W,\Nm,\Np$ all have bounded integrands (and predictable for $\Nm,\Np$), except in the case of unconstrained inventory: $\Nmax=+\infty$ and $\gamma=\zeta=0\equiv\ell$ (see (\ref{problem})). Regardless, we still have
$
\mEa\left[\int_t^T (\Nt_u)^2 du\right]=\int_t^T \mEa\left[(\Nt_u)^2\right] du<+\infty\mbox{ by (\ref{bounded_moments})}.
$
Choosing $\tilde{\mathbb P}^\alpha\defeq\mPa$ the conclusion follows by taking expectation and by Propositions \ref{observable_intensities} and \ref{proposition_Pi}. 

Consider now $\gamma>0$. Hence, we are in the case $|N|\leq\Nmax<+\infty$. We define 
$$
A^\alpha\defeq\mathcal E\left( -\gamma\sigma\int_t^\cdot \Nt_udW_u\right)=\exp\left(-\gamma\sigma\int_t^\cdot \Nt_ud W_u-  \frac{1}{2}\sigma^2\gamma^2\int_t^\cdot (\Nt_u)^2du \right).
$$
By Novikov's condition, $A^\alpha$ is a strictly positive uniformly integrable (UI) martingale with $\mEa[A_T]=1$, and therefore defines an equivalent probability measure $\mathbb A^\alpha\sim\mPa$ via $\frac{d\mathbb A^\alpha}{d\mPa}=A^\alpha_T$. Note that the Girsanov--Meyer Theorem. \cite[p.132 Thm.35]{P} ensures the $\mathbb F^{t,W,N}$-intensities of $\Npm - \Npm_{t^-}$ remain the same when changing to $\mathbb A^\alpha$. Let us set 
\begin{equation*}
\begin{split}
B^\alpha & \defeq\mathcal E\left(-\gamma\int_t^\cdot U_\gamma(\delm_u)d\overline{\Nm_u}^\alpha + U_\gamma(\delp_u)d\overline{\Np_u}^\alpha \right)\\
				 & =\mathcal E\left(\int_t^\cdot \big(e^{-\gamma\delm_u} -1\big)d\overline{\Nm_u}^\alpha + \big(e^{-\gamma\delp_u} -1\big)d\overline{\Np_u}^\alpha\right),
\end{split}
\end{equation*}
where $\overline{\Npm_u}^\alpha$ denote the corresponding $(\mPa,\mathbb F^{t,W,N})$-compensated (or equivalently, $(\mathbb A^\alpha,\mathbb F^{t,W,N})$-compensated) processes. By the same arguments of Propositions \ref{Girsanov1} and \ref{Girsanov2}, $B^\alpha$ is a strictly positive UI martingale with $\mathbb E^{\mathbb A^\alpha}[B^\alpha_T]=1$ and defines an equivalent probability measure $\tilde{\mathbb P}^\alpha\sim\mathbb A^\alpha\sim\mPa$ via $\frac{d\tilde{\mathbb P}^\alpha}{d\mathbb A^\alpha}=B^\alpha_T$, such that \textit{(ii)} holds true. Note that \textit{(i)} is also trivially verified due to the equivalence of the probability measures. 

Suppose for the time being that $\Psi$ is defined as in (\ref{Psi}) but taking supremum over the whole set of admissible controls $\mathcal A_t^2\supseteq \tilde{\mathcal A}_t^2$ instead. We will see afterwards that this makes no difference. To see (\ref{decomposition}), observe that the identity $U_\gamma(a+b)=U_\gamma(b)e^{-\gamma a}+U_\gamma(a)$ and (\ref{Pbar}) yield 
$
U_\gamma(P^{\alpha,s,x,n}_{t,T}) = U_\gamma(\overline P^{\alpha,n}_{t,T})e^{-\gamma(x+sn)} + U_\gamma(x+sn),
$
giving
$$
V(t,s,x,n,\pi)= \sup_{\alpha\in\mathcal A_t^2}\mEa\left[ U_\gamma(\overline P^{\alpha,n,\pi}_{t,T})\right]e^{-\gamma(x+sn)} + U_\gamma(x+sn).
$$
On the other hand, by the same identity, 
$$
U_\gamma(x+sn+\Theta)= (U_\gamma\circ\Theta) e^{-\gamma(x+sn)} + U_\gamma(x+sn)= \Psi e^{-\gamma(x+sn)} + U_\gamma(x+sn).
$$
As a consequence, (\ref{decomposition}) is equivalent to the equality $\Psi(t,n,\pi)=\sup_{\alpha\in\mathcal A_t^2}\mEa\left[ U_\gamma(\overline P^{\alpha,n}_{t,T})\right]$. We check instead the stronger statement
\begin{equation}
\label{aux}
\mEa\left[\exp\Big(-\gamma\overline P^{\alpha,n}_{t,T} \Big) \right]= \tilde{\mathbb E}^{\alpha}\left[\exp\Big(-\gamma \tilde P_{t,T}^{\alpha,n} \Big) \right],\mbox{ for all }\alpha\in\mathcal A_t^2.
\end{equation} 
Using the explicit exponential formula (see (\ref{exponential_explicit})) and by straightforward computations:
\begin{equation*}
\begin{split}
\exp\Big(-\gamma\overline P^{\alpha,n}_{t,T} \Big) & = \exp\left(-\gamma \Big(\int_t^T\left\{\mu\Nt_u -\frac{1}{2}\sigma^2(\gamma+\zeta)(\Nt_u)^2\right\}du - \ell(\Nt_T) \Big)\right)\\
& \times A^\alpha_T\exp\left(-\gamma\int_t^T \delm_u d\Nm_u + \delp_u d\Np_u\right)\\
& = \exp\Big(-\gamma \tilde P_{t,T}^{\alpha,n} \Big)A^\alpha_T
\exp\left(\gamma\int_t^T\big\{ U_\gamma(\delm_u) \hlmtu + U_\gamma(\delp_u) \hlptu\big\}du\right)\\
& \times \prod_{\substack{t\leq u \leq T:\\ \Delta\Nm_u\neq 0}}\exp\left(-\gamma\delm_u \right)\prod_{\substack{t\leq u \leq T:\\ \Delta\Np_u\neq 0}}\exp\left(-\gamma\delp_u \right) \\
& = \exp\Big(-\gamma \tilde P_{t,T}^{\alpha,n}\Big)A^\alpha_T B^\alpha_T,
\end{split}
\end{equation*}
which yields (\ref{aux}) after taking $\mPa$-expectation.

It remains to see that 
$$\Psi(t,n,\pi)\defeq\sup_{\alpha\in\mathcal A^2_t}\tilde{\mathbb E}^{\alpha}\left[U_\gamma\Big(\tilde P_{t,T}^{\alpha,n,\pi}\Big)\right] = \sup_{\alpha\in\tilde{\mathcal A}^2_t}\tilde{\mathbb E}^{\alpha}\left[U_\gamma\Big(\tilde P_{t,T}^{\alpha,n,\pi}\Big)\right]\backdefeq \tilde{\Psi}(t,n,\pi).$$
Clearly $\Psi\geq \tilde{\Psi}$. Let us check $\Psi\leq\tilde{\Psi}$. As done in Proposition \ref{Girsanov2} we can define a family of ``reference" equivalent probability measures $\tilde{\mathbb Q}^\alpha\sim\tilde{\mathbb P}^\alpha$, $\alpha\in\mathcal A^2_t$, such that for $(\tilde{\mathbb Q}^\alpha,\mathbb F^{t,W,N})$ it holds: $W - W_t$ is a Wiener process independent of the counting process $(\Nm-\Nm_{t^-},\Np-\Np_{t^-})$ and $\Npm-\Npm_{t^-}$ has predictable intensity $\mathbbm 1^{\pm,t,n}_u\defeq\mathbbm 1_{\{\mp \Nt_{u^-}<\Nmax\}}$ (in particular, its law does not depend on $\alpha$). Furthermore, the inverse change of measure is given by $d\tilde{\mathbb P}^\alpha/d\tilde{\mathbb Q}^\alpha=\tilde{Z}^\alpha_T$ with
$$
\tilde{Z}^\alpha\defeq\mathcal E\left(\int_t^\cdot \big(\widehat{\Lm}(\pit_{u^-},\delm_u)-1\big)\big(d\Nm_u- \mathbbm 1^{-,t,n}_u du\big)+\big(\widehat{\Lp}(\pit_{u^-},\delp_u)-1\big)\big(d\Np_u- \mathbbm 1^{+,t,n}_u du\big)\right).
$$
Let us fix $\alpha\in\mathcal A_t^2$. Denote by $\mathcal D=\mathcal D([t,T],\mathbb R)$ the Skorokhod space of càdlàg functions with its usual sigma algebra and by $\mathbb P^W, \mathbb P^N$ the laws (or pushforward measures) induced on $\mathcal D$ by $W-W_t,\Nt$ resp. when starting from $\mathbb Q^\alpha$. These laws do not depend on $\alpha$ and characterize the joint law of $(W-W_{t},\Nt)$ on $\mathcal D^2$ as $\mathbb P^W\otimes\mathbb P^N$, due to the independence of the two processes. Since $\alpha$ is $\mathbb F^{t,W,N}$-predictable, by a monotone class argument
one can show there exists a jointly measurable process $f:[t,T]\times\mathcal D^2\to\mathbb R$ such that $\alpha_u=f_u(W-W_t, \Nt)$ and for $\mathbb P^W$-almost every $w\in\mathcal D$, the process $\tilde{\alpha}_u\defeq f_u(w, \Nt)$ is in $\tilde{\mathcal A}_t^2$. Note also that we can write $\tilde{Z}^\alpha_T U_\gamma\Big(\tilde P_{t,T}^{\alpha,n,\pi}\Big)=g(\alpha, \Nt) = g(f(W-W_t,\Nt),\Nt)$ for some function $g$.
By Fubini's theorem,
\begin{equation*}
\begin{split}
\tilde{\mathbb E}^{\alpha}\left[U_\gamma\Big(\tilde P_{t,T}^{\alpha,n,\pi}\Big)\right] = \tilde{\mathbb E}^{\mathbb Q^\alpha}\left[\tilde{Z}^\alpha_T U_\gamma\Big(\tilde P_{t,T}^{\alpha,n,\pi}\Big)\right] = \int \mathbb E^{\mathbb P^N}\left[g(f(w,\cdot),\cdot)\right]d\mathbb P^W(w) \leq \int\tilde{\Psi} d\mathbb P^W(w)=\tilde{\Psi}.
\end{split} 
\end{equation*}
Since $\alpha\in\mathcal A_t^2$ was arbitrary, we conclude that $\Psi=\tilde{\Psi}$.
\end{proof}
 
Just as it occurs under full information (see Section \ref{s:full_info}), for a fully risk-neutral MM with negligible costs (i.e. $\Nmax=+\infty,\ \gamma=\zeta=0\equiv\ell$), $\Theta$ can be further decomposed. Note, from their definition, that in this case $\hlpmt$ and $\pit$ do not depend on $n$. As it was proved in Theorem \ref{main_theorem_1}, when $\gamma=0$ the family $(\tilde{\mathbb P}^{\alpha,t,n})$ can be taken as the original physical probabilities, and these do not depend on $n$ either. The following corollary is now immediate. 
\begin{corollary}
\label{coro_main_theorem_1}
If $\Nmax=+\infty$ and $\gamma=\zeta=0\equiv\ell$ then 
$$V(t,s,x,n,\pi) = x+sn+\mu n(T-t)+\Phi(t,\pi),$$ 
with $\Phi(t,\pi)=\sup_{\alpha\in\tilde{\mathcal A}_t^2}\mathbb E^{t,\pi}\left[\int_t^T \big\{\delm_u\widehat{\lm_u}^{\alpha,t,\pi} + \delp_u\widehat{\lp_u}^{\alpha,t,\pi} + \mu (\widehat{\lm_u}^{\alpha,t,\pi} - \widehat{\lp_u}^{\alpha,t,\pi})\big\}du\right].$
\end{corollary}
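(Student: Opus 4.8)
The plan is to read the statement off Theorem \ref{main_theorem_1} after substituting $\Nmax=+\infty$ and $\gamma=\zeta=0\equiv\ell$, and then to peel the inventory drift off by a direct computation. Since $\gamma=0$ we have $U_0=\mathrm{Id}$, so $\Theta=U_0^{-1}\circ\Psi=\Psi$ in Theorem \ref{main_theorem_1}(iii); moreover, as recalled just before the statement, in this regime the family $(\tilde{\mathbb P}^{\alpha,t,n,\pi})$ may be taken equal to the physical probabilities $(\mPt)$, while in the formula for $\tilde P_{t,T}^{\alpha,n,\pi}$ the terms $-\tfrac12\sigma^2(\gamma+\zeta)(\Nt_u)^2$ and $-\ell(\Nt_T)$ vanish and $U_\gamma(\delpm_u)=\delpm_u$. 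Hence
$$\Theta(t,n,\pi)=\sup_{\alpha\in\tilde{\mathcal A}^2_t}\mEt\!\left[\int_t^T\Big\{\delm_u\,\hlmtu+\delp_u\,\hlptu+\mu\,\Nt_u\Big\}\,du\right],$$
and all that remains is to isolate the dependence on $n$.

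To this end I would first record, via Propositions \ref{observable_intensities} and \ref{proposition_Pi}, that with $\Nmax=+\infty$ the indicators $\mathbbm 1^\pm$ are identically $1$, so neither the coefficients of (\ref{Pi_equivalent}) nor the observable intensities involve $n$; consequently the filter $\pit$ and the intensities $\hlpmt$ do not depend on $n$ and --- this being precisely the remark preceding the statement --- neither does $\mPt$ restricted to $\mathbb F^{t,W,N}$. Next I would handle the term $\mu\int_t^T\Nt_u\,du$: writing $\Nt_u=n+(\Nm_u-\Nm_{t^-})-(\Np_u-\Np_{t^-})$ and using that $\Npm-\Npm_{t^-}$ has bounded $(\mPt,\mathbb F^{t,W,N})$-predictable intensity $\hlpmt$ (bounded because admissible spreads are bounded and the $\Lipm$ continuous), so that $\mEt[\Npm_u-\Npm_{t^-}]=\mEt[\int_t^u\widehat{\lpm_r}^{\alpha,t,n,\pi}\,dr]$, a Fubini argument (all integrands bounded, using also (\ref{bounded_moments})) extracts the deterministic piece $\mu n(T-t)$ from $\mEt[\int_t^T\mu\Nt_u\,du]$ and leaves an integral of the observable intensities alone. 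After this step the objective no longer depends on $n$ --- neither through its integrand nor, by the previous point, through the law of the driving processes --- so the constant $\mu n(T-t)$ factors out of the supremum and, collecting what remains into $\Phi(t,\pi)$, one obtains $\Theta(t,n,\pi)=\mu n(T-t)+\Phi(t,\pi)$; together with the decomposition $V(t,s,x,n,\pi)=x+sn+\Theta(t,n,\pi)$ of Theorem \ref{main_theorem_1} this is the assertion.

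I do not expect a real obstacle: the corollary is a bookkeeping consequence of Theorem \ref{main_theorem_1}. The only points requiring (routine) care are the interchanges of supremum, expectation and time-integration used above --- justified by the uniform boundedness of the intensities together with the moment bounds (\ref{bounded_moments}) --- and the verification that $\mPt|_{\mathbb F^{t,W,N}}$ is genuinely $n$-free once $\mathbbm 1^\pm\equiv 1$, which follows from the $n$-independence of the data ($\hlpmt$ and the coefficients of (\ref{Pi_equivalent})) characterising it on the observable filtration.
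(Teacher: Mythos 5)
Your overall strategy is the same as the one the paper implicitly uses (the paper declares the corollary ``immediate'' after the observations you also make: $\gamma=0$ so $\Theta=\Psi$, $\tilde{\mathbb P}^{\alpha,t,n,\pi}=\mPt$, and $\mathbbm 1^{\pm}\equiv 1$ makes $\hlpmt$, $\pit$ and the observable law $n$-free). That part is fine. But the last step --- where you invoke a Fubini argument and claim it ``leaves an integral of the observable intensities alone'' matching the stated $\Phi$ --- is not carried out, and if you do carry it out it does \emph{not} reproduce the displayed formula. Indeed,
\begin{equation*}
\int_t^T\big(\Npm_u-\Npm_{t^-}\big)\,du \;=\; \int_t^T\!\int_t^u d\Npm_r\,du \;=\; \int_t^T (T-r)\,d\Npm_r,
\end{equation*}
so, taking expectations against the compensators $\hlpmtu$,
\begin{equation*}
\mEt\!\left[\int_t^T \mu\,\Nt_u\,du\right] \;=\; \mu\,n\,(T-t) \;+\; \mEt\!\left[\int_t^T \mu\,(T-u)\big(\hlmtu-\hlptu\big)\,du\right],
\end{equation*}
i.e.\ the residual term carries a weight $(T-u)$ that does \emph{not} appear in the corollary's displayed $\Phi$. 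This $(T-u)$ weight is not an artefact of your route: it is exactly the $\mu(T-t)$ that appears in the corresponding full-information ansatz (\ref{reduction_Gamma}), in the reduced HJB equation (\ref{HJB_Phi}) (where the running Hamiltonian contains $\delta^{\mp}\pm\mu(T-t)$), and in Corollary~\ref{coro_spreads_vanilla_model} for the optimal spreads. So either one leaves the residual as $\mu\!\int_t^T N^{t,0}_u\,du$ (which is manifestly $n$-free since $N^{t,0}_u$ depends only on the increments of $\Npm$ after $t$), or one replaces $\mu(\hlmtu-\hlptu)$ by $\mu(T-u)(\hlmtu-\hlptu)$; as printed, the corollary's formula for $\Phi$ appears to be missing this $(T-u)$ factor, and your proposal implicitly endorses the typo by not writing out the Fubini. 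Everything else --- the $n$-independence of $\pit$, $\hlpmt$ and $\mPt$ on $\mathbb F^{t,W,N}$ once $\mathbbm 1^{\pm}\equiv1$, and the factoring of the constant $\mu n(T-t)$ out of the supremum --- is correct and is the content of the remark preceding the statement.
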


In the context of the previous corollary, formal substitution in (\ref{HJB_V}) or (\ref{HJB_Theta}) yields the following PIDE for $\Phi$:
\begin{equation}
\label{HJB_Phi}
\begin{split}
0 & = \phi_t + \sum_{i,j=1}^k q_t^{ji}\pi^j \phi_{\pi^i} \\
  & + \sup_{\delm\in\overline{(\delmin,\delmax)}}\Big\{\sum_{i,j=1}^k (\Ljm-\Lim)(\delm)\pi^j\pi^i \phi_{\pi^i} + \big(\delm+\mu(T-t)+\Dm_{\delm}(\phi)\big) \sum_{i=1}^k\pi^i\Lim(\delm)\Big\}\\
	& + \sup_{\delp\in\overline{(\delmin,\delmax)}}\Big\{\sum_{i,j=1}^k(\Ljp-\Lip)(\delp)\pi^j\pi^i \phi_{\pi^i}+ \big(\delp-\mu(T-t)+\Dp_{\delp}(\phi)\big) \sum_{i=1}^k\pi^i\Lip(\delp)\Big\},
\end{split}
\end{equation}
with terminal condition $\phi(T,\pi) = 0$, where:
\begin{equation*}
\begin{split}
\Dpm_{\delta}(\phi)(t,\pi) = \phi\Big(t, \frac{1}{\sum_{j=1}^k\pi^j\Ljpm(\delta)}\big(\pi^1\Lpm_1(\delta),\dots,\pi^k\Lpm_k(\delta)\big)\Big) - \phi(t,\pi).
\end{split}
\end{equation*}

\begin{remark}
Besides its theoretical motivation, Theorem \ref{main_theorem_1} is interesting in itself. It enables the use of probabilistic and PDMPs numerical techniques when the dimension of the problem renders PDEs methods prohibitive. Furthermore, as readily seen in the proof, the decomposition of the expected P\&L utility holds true for any admissible strategy. This means, in particular, that one can efficiently simulate the proceeds of any strategy without the need to simulate the diffusive state variables (see Section \ref{section_simulations}).
\end{remark}

We now want to prove that $\Theta$ (resp. $\Phi$) is the unique continuous viscosity solution of the terminal condition PIDE (\ref{HJB_Theta}) (resp. (\ref{HJB_Phi})). (See, e.g., \cite[Def.2.1]{Son} for the relevant definition, or more in general \cite[Def.7.3]{DF}, recalling that in our case we have no boundary conditions other than that at terminal time.) A complication inevitably arises, as classical viscosity techniques \cite{Bo,FS,OS} cannot be applied directly to weak formulation models such as ours. However, the decomposition of Theorem \ref{main_theorem_1} (resp. Corollary \ref{coro_main_theorem_1}) not only reduces the dimensionality of the problem, but also states that the MM may neglect the diffusion component of the state process altogether, focusing solely on the time-space state variable $(u,\Nt,\pit)$ (resp. $(u,\pit)$). This is a PDMP as introduced in \cite{D1} (detailed treatments also found in \cite{BR3, D2}). 
Using results from PDMPs theory, our continuous-time problem is identified with a control problem for a discrete-time Markov decision model, as in \cite{BR1,BR2,BR3,CEFS}, and linked again with viscosity solutions of HJB PIDEs as in \cite{CEFS,DF}. 

An inevitable drawback is that the PDMPs approach relies on the use of the so-called \textit{randomized} (or \textit{relaxed}) controls and requires the control space to be compact. Hence, for the following theorem we will assume $-\infty<\delmin<\delmax<+\infty$. Assuming a uniform lower constraint $\delmin>-\infty$ is hardly a problem. On the contrary, $\delmin=0$ (or even some small positive number) is the most meaningful in practice, as negative spreads imply the MM is willing to offer her clients better prices than the reference price $S$. ($\delmin=-\infty$ is motivated in the literature by mathematical convenience rather than modelling accuracy.) A uniform upper bound $\delmax<+\infty$, on the other hand, is harder to assess a priori.
 
Fortunately, in most situations encountered in practice, the unconstrained optimization will yield bounded optimal spreads nonetheless, and the MM can dispense with $\delmin,\delmax$ if she wishes to do so (see Section \ref{s:numerics} for an example). 

\begin{theorem}
\label{main_theorem_2}
Assume $-\infty<\delmin<\delmax<\infty$. For $\Nmax<\infty$ (resp. $\Nmax=+\infty$ and $\gamma=\zeta=0\equiv\ell$), let $\Theta$ (resp. $\Phi$) be as in Theorem \ref{main_theorem_1} (resp. Corollary \ref{coro_main_theorem_1}). Then $\Theta$ (resp. $\Phi$) is the unique continuous viscosity solution of the terminal condition PIDE (\ref{HJB_Theta}) (resp. (\ref{HJB_Phi})). 
\end{theorem}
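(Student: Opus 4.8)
The plan is to recognise $\Theta$ (resp.\ $\Phi$) as the value function of an optimal control problem for a piecewise-deterministic Markov process (PDMP) and then to invoke the theory characterising such value functions as the unique continuous viscosity solutions of their dynamic programming PIDEs \cite{CEFS,DF}; essentially all of the work lies in fitting our weak-formulation, filter-driven model into the hypotheses of that theory. By Theorem \ref{main_theorem_1}, $\Theta=U_\gamma^{-1}\circ\Psi$, so it suffices to treat $\Psi$ and then transport the conclusion through the smooth increasing bijection $U_\gamma^{-1}=-\frac{1}{\gamma}\log(1-\gamma\,\cdot\,)$, which carries continuous viscosity sub/supersolutions of the equation for $\Psi$ to those of (\ref{HJB_Theta}) and preserves uniqueness. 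First I would observe that, under the probabilities $\tilde{\mathbb P}^{\alpha,t,n,\pi}$ constructed in Theorem \ref{main_theorem_1}, the triple $(u,\Nt_u,\pit_u)$ is a PDMP on $[0,T]\times(\mathbb Z\cap[-\Nmax,\Nmax])\times\Delta$ controlled by $a=(\delm,\delp)$ ranging in the \emph{compact} set $\overline{(\delmin,\delmax)}^2$ (this is exactly where the standing hypothesis $-\infty<\delmin<\delmax<\infty$ enters): between jumps $\Nt$ is constant and $\pit$ follows the drift ODE read off from (\ref{Pi_equivalent}), while $\mu^{\Nt}$ jumps at the controlled rate $e^{-\gamma\delm_u}\hlmtu+e^{-\gamma\delp_u}\hlptu$ with the post-jump transition sending $(n,\pi)$ to $(n\mp 1,\pi')$ through the multiplicative filter updates appearing in (\ref{HJB_Theta}). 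When $\gamma>0$, $\Psi$ is an exponential functional of an \emph{additive} running cost; adjoining the accumulated running reward as a scalar, deterministically evolving auxiliary state turns it into a PDMP control problem with terminal reward only, and the exponential structure then lets that extra coordinate be factored back out, the identity $U_\gamma(a+b)=U_\gamma(b)e^{-\gamma a}+U_\gamma(a)$ already used in the proof of Theorem \ref{main_theorem_1} converting between the $\Psi$- and $\Theta$-forms of the jump term. For $\gamma=0$ the running reward is genuinely additive and one may take $\tilde{\mathbb P}^{\alpha,t,n,\pi}=\mPt$.

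Next I would verify the regularity hypotheses of the PDMP control framework of \cite{CEFS,DF}. The action set is compact; by Assumptions \ref{assumptions_intensities}(\ref{assumptions_intensities_Lambda}) each $\Lipm$ is continuous and strictly positive on the compact interval $\overline{(\delmin,\delmax)}$, hence bounded and bounded away from $0$, so the jump rate depends continuously and boundedly on $(\text{state},a)$ and the post-jump kernel is weakly continuous; by Assumptions \ref{assumptions_Q} the generator $Q$ is continuous, and for each fixed control the drift of $\pit$ in (\ref{Pi_equivalent}) is polynomial in $\pi$, hence globally Lipschitz on the compact $\Delta$, with the whole right-hand side jointly continuous in $(\pi,a)$; a sign inspection of this drift on $\partial\Delta$ shows $\Delta$ is invariant under the flow, so the flow is a well-defined jointly continuous map (by Lemma \ref{open_simplex} and (\ref{assumption_mu_0}) trajectories in fact never leave $\Delta^\circ$, but the \emph{compact} state space $\Delta$ is what the framework needs). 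The running reward $U_\gamma(\delm_u)\hlmtu+U_\gamma(\delp_u)\hlptu+\mu\Nt_u-\frac{1}{2}\sigma^2(\gamma+\zeta)(\Nt_u)^2$ and the terminal cost $\ell(\Nt_T)$ are continuous and bounded, since spreads range in a compact set, the $\hlpmtu$ are bounded, and $|\Nt|\le\Nmax<\infty$ (so $n$ runs over a finite set), whence $\Psi$ (and so $\Theta$) is bounded; and by (\ref{bounded_moments}) the expected number of jumps on $[t,T]$ is finite, with no explosion. In the degenerate case $\Nmax=+\infty$, $\gamma=\zeta=0\equiv\ell$, neither the dynamics of $\pit$ nor the reward of Corollary \ref{coro_main_theorem_1} involve $n$, so the PDMP collapses to $(u,\pit)$ on the compact $[0,T]\times\Delta$ and the same verification applies \emph{a fortiori}.

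With the hypotheses in hand I would then run the standard machinery. The PDMP$\to$MDP correspondence \cite{BR1,BR2,BR3,CEFS} identifies the continuous-time problem with a discrete-time Markov decision model whose stages are the successive jump times and whose actions are \emph{relaxed} (randomised) control functions on the inter-jump intervals---and compactness of $\overline{(\delmin,\delmax)}^2$ is precisely what makes this MDP action space compact, which is why relaxed controls cannot be dispensed with. Continuity and boundedness of the one-stage reward operator, Berge's maximum theorem, and the convergence of finite-horizon value iteration (the time-to-horizon strictly decreases between stages) then give that the MDP value function---which by Theorem \ref{main_theorem_1} is $\Psi$---is bounded and continuous, hence so is $\Theta$ (resp.\ $\Phi$). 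Finally, the dynamic programming principle for the PDMP, in the form established in \cite{DF} (see also \cite{CEFS}), exhibits this value function as a viscosity solution of the associated HJB PIDE and, via the comparison principle proved there, as the \emph{unique} continuous one; it remains only to match that PIDE with (\ref{HJB_Theta}) (resp.\ (\ref{HJB_Phi})), a routine identification: differentiating along the flow yields $\partial_t$, the control-free term $\sum_{i,j=1}^k q^{ji}_t\pi^j\partial_{\pi^i}$ and, inside the suprema (these coming from the control-dependent part of the drift), the cross terms $\sum_{i,j=1}^k(\Ljpm-\Lipm)(\delpm)\pi^j\pi^i\partial_{\pi^i}$; the jump part yields the nonlocal terms $\Dpm_{\delpm}(\cdot)\sum_{i=1}^k\pi^i\Lipm(\delpm)$; and the running reward supplies the remaining terms, with terminal datum $-\ell(n)$ (resp.\ $0$). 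Undoing $U_\gamma^{-1}$ when $\gamma>0$ then concludes.

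\textbf{Main obstacle.} I expect the principal difficulty to be twofold. First, squeezing a weak-formulation model---whose very reason for the PDMP detour is that the classical viscosity apparatus \cite{Bo,FS,OS} does not apply to it directly---into the precise continuity/compactness assumptions of \cite{CEFS,DF} while working only with the very general intensities of Assumptions \ref{assumptions_intensities} (merely continuous, possibly non-smooth and not strictly decreasing): this is exactly what forces the use of relaxed controls and demands care in the measurable-selection and MDP continuity estimates. Second, the bookkeeping of the exponential change of variables for $\gamma>0$, ensuring that existence, continuity and---above all---uniqueness of the viscosity solution are genuinely preserved in passing between $\Psi$ and $\Theta$. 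Beneath uniqueness lies the comparison principle for this first-order nonlocal PIDE with only a terminal condition; that comparison is the technical heart of the matter, but it is precisely the statement delivered by the PDMPs literature once the hypotheses verified above are in place, so no independent doubling-of-variables argument should be required.
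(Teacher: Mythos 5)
Your proof takes the same overall route as the paper: recast $(u,\Nt_u,\pit_u)$ (resp.\ $(u,\pit_u)$) as a controlled PDMP under the measures of Theorem~\ref{main_theorem_1}, observe that the assumption $-\infty<\delmin<\delmax<\infty$ gives a compact action set, verify the regularity/boundedness hypotheses of \cite{CEFS,DF}, then invoke the PDMP$\to$MDP reduction with relaxed controls to get continuity and the comparison principle to get uniqueness, and finally transport back through $U_\gamma^{-1}$. Your identification of the jump rate $e^{-\gamma\delm}\widehat{\lambda^-}+e^{-\gamma\delp}\widehat{\lambda^+}$, the post-jump kernel, the drift of the filter and the matching of the resulting HJB with (\ref{HJB_Theta})/(\ref{HJB_Phi}) are all correct, and you rightly point out that the comparison principle delivered by \cite{CEFS,DF} is the heart of the uniqueness claim.

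The one place where you genuinely diverge from the paper is the treatment of $\gamma>0$. You propose to adjoin the accumulated running reward as a scalar auxiliary state, obtain a terminal-reward-only PDMP problem, and then factor the extra coordinate back out via the identity $U_\gamma(a+b)=U_\gamma(b)e^{-\gamma a}+U_\gamma(a)$. The paper instead first passes to $\Upsilon=\gamma\Psi-1$ and observes that $\Upsilon$ is \emph{already} in a discounted Bolza--Lagrange form supported directly by \cite{CEFS}: the exponential of the running cost becomes a (possibly sign-changing, but bounded) discount factor $D^{\alpha,a}_{t,u}=\exp(-\int_t^\cdot\rho\,du)$, with $f\equiv 0$ and $g=-\exp(\gamma\ell(\cdot))$ as the terminal datum. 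Both devices work, but the paper's is leaner: it keeps the state space as $[0,T]\times(\mathbb Z\cap[-\Nmax,\Nmax])\times\Delta^\circ$ with no extra coordinate, and it lands immediately in the hypotheses of \cite[Thm.~4.10]{CEFS} (continuity) and \cite[Thm.~5.3]{CEFS}/\cite[Thm.~7.5]{DF} (unique continuous viscosity solution), so no ``factoring out'' step needs to be justified. Your augmentation is a legitimate alternative as long as you note that the auxiliary coordinate stays in a compact interval over a finite horizon, but as written that factoring-out step is the soft spot of your argument and would need to be spelled out to meet the compactness hypotheses of the MDP machinery. The $\gamma=0$ and $\Nmax=+\infty$ sub-cases are handled identically in both.

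A minor remark: you state that ``compactness of $\overline{(\delmin,\delmax)}^2$ is precisely what makes this MDP action space compact, which is why relaxed controls cannot be dispensed with.'' The logic is slightly inverted: relaxed controls are needed because the MDP actions are whole inter-jump control \emph{paths}; compactness of $\overline{(\delmin,\delmax)}^2$ is then what makes the space of relaxed controls compact (say in the Young-measure topology). The substance is right, only the phrasing conflates the two roles.
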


\begin{proof}
See Appendix \ref{proof:main_theorem2}.
\end{proof}

\section{Full information}
\label{s:full_info}
\setcounter{assumptions}{0}
\setcounter{subsection}{0}

In this section we consider the idealized case of a MM with full information. We assume the MM has inside information in such a way that she can observe the full filtration $\mathbb F$, and in particular she can observe $Y$. For example, if $Y$ represents different levels of competition amongst 
liquidity providers, this would practically mean the MM has information regarding her competitors' quotes. We will see that in this case the value function turns out to be a regular, classical solution, of its HJB equation. Afterwards, we will compare the results with those obtained in the more realistic setting of partial information.

\subsection{Dimensionality reduction and the general system of ODEs}
We consider problem (\ref{problem}) but under full information, with the set of admissible spreads: 
$$\mathcal U\defeq\{\delta:[0,T]\to\overline{(\delmin,\delmax)}: \delta\mbox{ is }\mathbb F\mbox{-predictable and bounded from below}\}.$$
Note that in this section, and when $\delmax=+\infty$, we do not assume upper-boundedness of the spreads. Let $t\in [0,T],\ s,x\in\mathbb R$ and $(n,i)\in\mathcal I\defeq\left(\mathbb Z\cap [-\Nmax,\Nmax]\right)\times \{1,\dots,k\}$. Consider the processes $\St,\Xt,\Nt,P^{\alpha,s,x,n}$ as defined in Section \ref{s:viscosity} and $\Yt$ a Markov chain with deterministic generator matrix $Q$, state space $\{1,\dots,k\}$ and such that $\Yt_{t^-}=i$. We assume the physical probabilities $\mPtfull$ are defined for every $\alpha\in\mathcal U^2$ and that Assumptions \ref{assumptions_Q}, \ref{assumptions_intensities} and (\ref{no_common_jumps}) (starting at time $t^-$) are still in place. The value function in this case is 
\begin{equation}
\label{value_function_full_info}
V(t,s,x,n,i)\defeq \sup_{\alpha\in\mathcal U_t^2}\mathbb E^{t,n,i}\Big[U_\gamma\big(P^{\alpha,s,x,n}_{t,T}\big)\Big],
\end{equation}

By means of the Dynamic Programming Principle and Ito's Lemma one can formally derive the following HJB PIDE for $V$:
\begin{equation}
\label{HJB_V_full_info}
\begin{split}
0 & = v_t(t,s,x,n,i) + \mu v_s(t,s,x,n,i) +  \frac{1}{2}\sigma^2 v_{ss}(t,s,x,n,i)\\
  & + \frac{1}{2}\sigma^2\zeta n^2 (\gamma v(t,s,x,n,i) -1)+ \sum_{j=1}^k q_t^{ij}v(t,s,x,n,j)\\
  & + \mathbbm 1_{\{n<\Nmax\}}\sup_{\delm\in\overline{(\delmin,\delmax)}}\Lim(\delm)\Big(v\big(t,s,x -(s - \delm),n+1, i\big)- v(t,s,x,n,i)\Big)\\
	& + \mathbbm 1_{\{-n<\Nmax\}}\sup_{\delp\in\overline{(\delmin,\delmax)}}\Lip(\delp)\Big(v\big(t,s,x +(s + \delp),n-1, i\big)- v(t,s,x,n,i)\Big),
\end{split}
\end{equation}
with terminal condition $v(T,s,x,n,i) = U_\gamma(x+sn -\ell(n))$. 

In this new context, instead of formally proving a decomposition of $V$ as in Theorem \ref{main_theorem_1}, it is more straightforward to propose an ansatz and ultimately prove it valid with a verification theorem. (This is the standard approach used in the Avellaneda--Stoikov framework.) Let us consider an ansatz for the value function analogous to those used for the one regime case:
$$
V(t,s,x,n,i) = U_\gamma(x+sn + \Theta(t,n,i)),
$$
for some function $\Theta:[0,T]\times\mathcal I\to\mathbb R$, $C^1$ in time. Substituting in (\ref{HJB_V_full_info})
and using Assumptions \ref{assumptions_Q}, we see that $\Theta$ must satisfy 
a system of ODEs indexed in $(n,i)\in\mathcal I$:
\begin{equation}
\label{HJB_Theta_full_info}
\begin{split}
0 & = \theta_t(t,n,i) + \mu n - \frac{1}{2}\sigma^2 n^2(\zeta+\gamma)+ \sum_{j\neq i} q_t^{ij} U_\gamma\big(\theta(t,n,j)-\theta(t,n,i)\big)\\
  & + \mathbbm 1_{\{n<\Nmax\}}\Hm_i\big(\theta(t,n+1,i)-\theta(t,n,i)\big)+ \mathbbm 1_{\{-n<\Nmax\}}\Hp_i\big(\theta(t,n-1,i)-\theta(t,n,i)\big),
\end{split}
\end{equation}
with terminal condition $\theta(T,n,i) = -\ell(n)$, where:
\begin{enumerate}
\item $\Hpm_i(d)\defeq\sup_{\delta\in\overline{(\delmin,\delmax)}}\hpm_i(\delta, d)$, for $d\in\mathbb R$.
\item $\hpm_i(\delta, d)\defeq \Lipm(\delta)U_\gamma(\delta+ d)$, for $d\in\mathbb R,\ \delta\in\overline{(\delmin,\delmax)}$.
\end{enumerate}

The original problem is simplified in this way, both by the dimension of the state variable and by the complexity of the equations, 
provided we can show that problem (\ref{HJB_Theta_full_info}) admits a solution. With this aim in mind, let us prove first the following property of the Hamiltonian functions $\Hpm_1,\dots,\Hpm_k$. 

\begin{lemma}
\label{lipschitz_hamiltonians}
For all $1\leq i\leq k$, it holds:
\begin{enumerate}[(i)]
\item For each compact $K\subset\mathbb R$, there exists $[a,b]\subseteq\overline{(\delmin,\delmax)}$ such that $\Hpm_i(d)=\max_{\delta\in [a,b]}\hpm_i(\delta, d)$, for all $d\in K$. 
\item $\Hpm_i$ is locally Lipschitz.
\end{enumerate}
\end{lemma}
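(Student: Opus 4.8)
The plan is to prove (i) first, and then deduce (ii) from it. For (i), fix a compact $K=[c_1,c_2]\subset\mathbb R$ and $1\leq i\leq k$. The intuition is that the supremum in the definition of $\Hpm_i(d)=\sup_\delta \Lipm(\delta)U_\gamma(\delta+d)$ cannot be realized for $\delta$ arbitrarily large: when $\delta\to+\infty$ (the only problematic end, since $\overline{(\delmin,\delmax)}$ is closed at the finite end), the last part of Assumption \ref{assumptions_intensities} (\ref{assumptions_intensities_Lambda}) — namely $\lim_{\delta\to+\infty}\delta^{\mathbbm 1_{\{\gamma=0\}}}\Lipm(\delta)=0$ — forces $\Lipm(\delta)U_\gamma(\delta+d)\to 0$ uniformly for $d\in K$. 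Indeed, if $\gamma>0$ then $U_\gamma(\delta+d)=(1-e^{-\gamma(\delta+d)})/\gamma$ is bounded above by $1/\gamma$ uniformly over all $d$, and $\Lipm(\delta)\to 0$; while if $\gamma=0$ then $U_0(\delta+d)=\delta+d\leq\delta+c_2$ for large $\delta$, so $\Lipm(\delta)(\delta+d)\leq \Lipm(\delta)\delta\,(1+c_2/\delta)\to 0$ by the stated decay. Hence there is $b<\delmax$ (take $b$ finite even when $\delmax=+\infty$) such that $\Lipm(\delta)U_\gamma(\delta+d)\leq \tfrac12$ for all $\delta\geq b$ and all $d\in K$ — while on the other hand $\Hpm_i(d)\geq \Lipm(\delta_0)U_\gamma(\delta_0+d)$ for any fixed $\delta_0$, and by enlarging $b$ and decreasing $\delmin$ only if it were $-\infty$ (it is not, since the finite end is attained) we can ensure this lower bound exceeds the tail value; more directly, since for $\gamma>0$ we may also need to discard very negative $\delta$, note $U_\gamma(\delta+d)<0$ for $\delta+d<0$, so any $\delta$ with $\Lipm(\delta)U_\gamma(\delta+d)$ competitive must have $\delta\geq -c_2$, giving a lower cutoff $a:=\max(\delmin,-c_2)$ when $\gamma>0$ and $a:=\delmin$ when $\gamma=0$ works after a similar elementary estimate. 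Set $[a,b]$ accordingly; then $\Hpm_i(d)=\sup_{\delta\in[a,b]}\hpm_i(\delta,d)$ for all $d\in K$, and this sup is attained (a max) since $\hpm_i(\cdot,d)$ is continuous on the compact $[a,b]$ by continuity of $\Lipm$ and of $U_\gamma$.

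For (ii), I would use (i) together with the elementary fact that a pointwise maximum of a family of functions that are uniformly Lipschitz on a set is itself Lipschitz on that set. Precisely, fix a compact $K\subset\mathbb R$ and pick $[a,b]$ as in (i), valid on a slightly enlarged compact neighbourhood of $K$. For $\delta\in[a,b]$ fixed, $d\mapsto \hpm_i(\delta,d)=\Lipm(\delta)U_\gamma(\delta+d)$ is Lipschitz in $d$ with constant $\Lipm(\delta)\cdot\|U_\gamma'\|$; since $U_\gamma'(x)=e^{-\gamma x}$ (or $\equiv 1$ when $\gamma=0$), on the compact range $\{\delta+d:\delta\in[a,b],d\in K\}$ this derivative is bounded by some constant $L_1$, and $\Lipm$ is bounded on $[a,b]$ by continuity, say by $L_2$; so every $\hpm_i(\delta,\cdot)$ with $\delta\in[a,b]$ is $L_1L_2$-Lipschitz on $K$. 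Taking the supremum over $\delta\in[a,b]$ preserves this: $|\Hpm_i(d)-\Hpm_i(d')|\leq \sup_\delta|\hpm_i(\delta,d)-\hpm_i(\delta,d')|\leq L_1L_2|d-d'|$ for $d,d'\in K$. Since $K$ was an arbitrary compact, $\Hpm_i$ is locally Lipschitz.

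The only real subtlety — the step I expect to require the most care — is the truncation argument in (i): making rigorous that the "tail" $\delta\geq b$ (and, when $\gamma>0$, also a lower truncation handling $U_\gamma$ saturating at $1/\gamma$) can be discarded uniformly in $d\in K$, which is exactly where the growth hypothesis $\lim_{\delta\to+\infty}\delta^{\mathbbm 1_{\{\gamma=0\}}}\Lipm(\delta)=0$ from Assumption \ref{assumptions_intensities} and Remark \ref{expected_margin} is used, and why the case split on $\gamma=0$ versus $\gamma>0$ is unavoidable. Everything after that — continuity giving attainment on $[a,b]$, and Lipschitz-ness of a sup of uniformly Lipschitz functions — is routine.
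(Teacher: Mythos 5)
Your overall strategy for both parts coincides with the paper's: in (i), truncate the supremum defining $\Hpm_i$ to a compact $\delta$-interval $[a,b]$ uniformly over $d\in K$; in (ii), deduce local Lipschitz-ness from the equi-Lipschitz family $\{\hpm_i(\delta,\cdot)\}_{\delta\in[a,b]}$ on $K$. Part (ii) is correct as written, and your upper truncation in (i) is essentially right once the placeholder ``$\leq\tfrac12$'' is replaced (as you then implicitly do) by a positive lower bound for $\Hpm_i$ over $K$; the paper obtains this cleanly by fixing $c>\max\{\delmin,2C\}$ with $C:=\max_{d\in K}|d|$, noting $\hpm_i(c,d)\geq\Lipm(c)U_\gamma(C)>0$, and then choosing $b>c$ with $\hpm_i(\delta,d)\leq\hpm_i(\delta,C)<\Lipm(c)U_\gamma(C)$ for $\delta\geq b$, $d\in K$, using the decay hypothesis.

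The genuine gap is in the lower-end truncation when $\delmin=-\infty$. Your argument that ``any competitive $\delta$ must satisfy $\delta\geq -c_2$'' implicitly assumes $\Hpm_i(d)\geq 0$, and this can fail: if $\delmax<+\infty$ and $K$ contains some $d<-\delmax$, then $\delta+d<0$ for every admissible $\delta$, so $\hpm_i(\delta,d)<0$ on the whole domain and $\Hpm_i(d)<0$, at which point the ``nonnegative $U_\gamma$ means $\delta\geq -c_2$'' reasoning gives nothing. Moreover, for $\gamma=0$ you set $a:=\delmin$, which is not a real number when $\delmin=-\infty$. Both issues are resolved at once by the direct monotonicity comparison the paper uses, which needs neither sign information on $\Hpm_i$ nor a case split on $\gamma$: pick any $a<\min\{-C,\delmax\}$; then for $\delta<a$ and $|d|\leq C$ one has $\delta+d<a+d<0$, so $U_\gamma(\delta+d)<U_\gamma(a+d)<0$, and since $\Lipm$ is positive and decreasing, $\Lipm(\delta)\geq\Lipm(a)>0$, whence
\begin{equation*}
\hpm_i(\delta,d)=\Lipm(\delta)U_\gamma(\delta+d)\leq\Lipm(a)U_\gamma(\delta+d)<\Lipm(a)U_\gamma(a+d)=\hpm_i(a,d).
\end{equation*}
Thus the supremum over $\delta<a$ is strictly dominated by the value at $a$, uniformly for $d\in K$, and the truncation to $[a,b]$ is legitimate in every case.
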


\begin{proof}
Fix $1\leq i\leq k$ and let $K\subset\mathbb R$ be a compact set. We verify first that \textit{(i)} is a consequence of Assumptions \ref{assumptions_intensities}. Let $C>0$ such that $|d|\leq C$ for all $d\in K$. If $\delmin=-\infty$, then we can take any $a<\min\{-C,\delmax\}$ since $\hpm_i(\delta,d)<\hpm_i(a,d)$ for all $\delta<a$ and $d\in K$. On the other hand, if $\delmax=+\infty$, take some $c>\max\{\delmin,2C\}$. It holds that $\hpm_i(c,d)\geq\Lpm_i(c)U_\gamma(C)\backdefeq\varepsilon>0$ and we can choose $b>c$ such that $\hpm_i(\delta,d)\leq\hpm_i(\delta,C)<\varepsilon$ for all $\delta\geq b$ and $d\in K$. Replacing supremum by maximum is now immediate due to the continuity of $\hpm_i(\cdot,d)$ on $[a,b]$ for all $d$.

\textit{(ii)} is routinely verified using that the family $\{\hpm_i(\delta,\cdot)\}_{\delta\in[a,b]}$ is equi-Lipschitz on $K$. 
\end{proof}

We want to prove now that the Cauchy problem (\ref{HJB_Theta_full_info}) admits a unique global classical solution $\theta$ which is $C^1$ in time. To this purpose, we will treat the cases of the finite system ($\Nmax<\infty$) and infinite system ($\Nmax=\infty$) of equations separately.

\subsection{Constrained inventory ODEs}
For $\Nmax<\infty$ we are dealing with a finite system of ODEs. We know that under certain regularity conditions the Cauchy problem (\ref{HJB_Theta_full_info}) is guaranteed to have a classical solution on some neighbourhood $(\tau,T]\subset[0,T]$ of $T$. 
Nonetheless, it is not always the case that such a local solution can be extended to a global one on $[0,T]$. Following \cite{G, GL}, we start by proving a comparison principle for (\ref{HJB_Theta_full_info}) that will allow us, in particular, to show the existence of a global solution. The argument used is standard for comparison principles of HJB-type equations. 

\begin{proposition}[\textbf{Comparison Principle}]
\label{comparison_ODEs}
Let $I\subseteq [0,T]$ be an interval containing $T$ and let $\overline\theta,\underline\theta:I\times \mathcal I\to\mathbb R$ be classical ($C^1$ with respect to time) super- and sub-solutions resp. of (\ref{HJB_Theta_full_info}). That is,
\begin{equation}
\label{HJB_Theta_full_info_supersol}
\begin{split}
0 & \leq -\overline\theta_t(t,n,i) - \mu n + \frac{1}{2}\sigma^2 n^2(\zeta+\gamma) - \sum_{j\neq i} q_t^{ij} U_\gamma\big(\overline\theta(t,n,j)-\overline\theta(t,n,i)\big)\\
  & - \mathbbm 1_{\{n<\Nmax\}}\Hm_i\big(\overline\theta(t,n+1,i)-\overline\theta(t,n,i)\big) - \mathbbm 1_{\{-n<\Nmax\}}\Hp_i\big(\overline\theta(t,n-1,i)-\overline\theta(t,n,i)\big),
\end{split}
\end{equation} 
\begin{equation}
\label{HJB_Theta_full_info_subsol}
\begin{split}
0 & \geq -\underline\theta_t(t,n,i) - \mu n + \frac{1}{2}\sigma^2 n^2(\zeta+\gamma) - \sum_{j\neq i} q_t^{ij} U_\gamma\big(\underline\theta(t,n,j)-\underline\theta(t,n,i)\big)\\
  & - \mathbbm 1_{\{n<\Nmax\}}\Hm_i\big(\underline\theta(t,n+1,i)-\underline\theta(t,n,i)\big) - \mathbbm 1_{\{-n<\Nmax\}}\Hp_i\big(\underline\theta(t,n-1,i)-\underline\theta(t,n,i)\big)
\end{split}
\end{equation} 
and
\begin{equation}
\label{terminal_cond_comparison}
\underline\theta(T,\cdot,\cdot)\leq-\ell\leq\overline\theta(T,\cdot,\cdot).
\end{equation}
Then
$$ \underline\theta\leq\overline\theta.$$
\end{proposition}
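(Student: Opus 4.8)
The plan is to run a standard maximum-principle argument adapted to the coupled system of ODEs indexed by $(n,i)\in\mathcal I$. Suppose for contradiction that $\underline\theta\leq\overline\theta$ fails. First I would introduce, for small $\ve>0$, the perturbed difference $w^\ve(t,n,i)\defeq \underline\theta(t,n,i)-\overline\theta(t,n,i)-\ve(T-t+1)$ and let $m(t)\defeq\max_{(n,i)\in\mathcal I}w^\ve(t,n,i)$. Since $\mathcal I$ is finite (here $\Nmax<\infty$) and everything is continuous in $t$, $m$ is well-defined, continuous, and piecewise-$C^1$; moreover by the terminal condition (\ref{terminal_cond_comparison}) we have $w^\ve(T,\cdot,\cdot)\leq-\ve<0$. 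If $\underline\theta\not\leq\overline\theta$ on $I$, then for $\ve$ small enough $m(t_0)>0$ for some $t_0\in I$, so there is a last time (working backwards from $T$) $t^*\in I$, $t^*<T$, at which $m(t^*)=0$ and $m\leq 0$ on $[t^*,T]$; pick $(n^*,i^*)$ attaining the maximum at $t^*$, i.e.\ $w^\ve(t^*,n^*,i^*)=0$ and $\underline\theta-\overline\theta$ attains a maximum over $\mathcal I$ at $(t^*,n^*,i^*)$ when restricted appropriately.

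Next I would extract the sign information at $(t^*,n^*,i^*)$. On the one hand, because $w^\ve(\cdot,n^*,i^*)\leq m(\cdot)\leq 0= w^\ve(t^*,n^*,i^*)$ on $[t^*,T]$, the function $t\mapsto w^\ve(t,n^*,i^*)$ has a maximum at $t^*$ from the right, so $\partial_t w^\ve(t^*,n^*,i^*)\le 0$, i.e.\ $\underline\theta_t(t^*,n^*,i^*)-\overline\theta_t(t^*,n^*,i^*)\leq -\ve<0$. On the other hand, subtracting the super-solution inequality (\ref{HJB_Theta_full_info_supersol}) from the sub-solution inequality (\ref{HJB_Theta_full_info_subsol}) at $(t^*,n^*,i^*)$ gives a lower bound for $\underline\theta_t-\overline\theta_t$ in terms of the three ``reaction'' terms: the $Q$-coupling term $\sum_{j\neq i^*}q^{i^*j}_{t^*}\bigl(U_\gamma(\underline\theta(t^*,n^*,j)-\underline\theta(t^*,n^*,i^*))-U_\gamma(\overline\theta(t^*,n^*,j)-\overline\theta(t^*,n^*,i^*))\bigr)$ and the two Hamiltonian increments involving $\Hpm_{i^*}$ evaluated at $\theta(t^*,n^*\mp 1,i^*)-\theta(t^*,n^*,i^*)$ for $\overline\theta$ and $\underline\theta$. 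The crux is to show each of these three contributions is $\le 0$ at the maximizing point, which then contradicts $\underline\theta_t-\overline\theta_t\le-\ve<0$ once we also absorb the $O(\ve)$ terms coming from the perturbation (choosing $\ve$ small).

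The sign of each reaction term follows from the maximality of $(n^*,i^*)$. Write $D\defeq \underline\theta-\overline\theta$; by construction $D(t^*,\cdot,\cdot)$ attains its max over $\mathcal I$ at $(n^*,i^*)$ with value $\ve(T-t^*+1)>0$ — actually it is cleaner to note $w^\ve$ attains its max $0$ there, so $D(t^*,n,i)\le \ve(T-t^*+1)$ for all $(n,i)$ with equality at $(n^*,i^*)$. For the $Q$-term: since $q^{i^*j}_{t^*}\ge 0$ for $j\neq i^*$, $U_\gamma$ is increasing, and $\underline\theta(t^*,n^*,j)-\underline\theta(t^*,n^*,i^*) = D(t^*,n^*,j)-D(t^*,n^*,i^*) + \overline\theta(t^*,n^*,j)-\overline\theta(t^*,n^*,i^*) \le \overline\theta(t^*,n^*,j)-\overline\theta(t^*,n^*,i^*)$ (using $D(t^*,n^*,j)\le D(t^*,n^*,i^*)$), monotonicity of $U_\gamma$ gives each summand $\le 0$. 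For the Hamiltonian terms: fix $\delta$ near-optimal for $\Hm_{i^*}\bigl(\underline\theta(t^*,n^*+1,i^*)-\underline\theta(t^*,n^*,i^*)\bigr)$ — or use Lemma \ref{lipschitz_hamiltonians}(i) to replace $\sup$ by $\max$ over a common compact $[a,b]$ and pick the exact maximizer $\delta^*$ — then $\Hm_{i^*}(\cdot)$ is nondecreasing in its argument (as $\Lim\ge 0$ and $U_\gamma$ increasing, the pointwise $\hm_{i^*}(\delta,\cdot)$ is nondecreasing, hence so is its sup), and $\underline\theta(t^*,n^*+1,i^*)-\underline\theta(t^*,n^*,i^*) = D(t^*,n^*+1,i^*)-D(t^*,n^*,i^*) + \bigl(\overline\theta(t^*,n^*+1,i^*)-\overline\theta(t^*,n^*,i^*)\bigr)\le \overline\theta(t^*,n^*+1,i^*)-\overline\theta(t^*,n^*,i^*)$, so monotonicity yields $\Hm_{i^*}(\underline\theta\text{-increment})\le \Hm_{i^*}(\overline\theta\text{-increment})$; the same for $\Hp_{i^*}$ with $n^*-1$. (When $n^*=\Nmax$ or $n^*=-\Nmax$ the corresponding indicator kills the term, so no issue with evaluating $\theta$ at $\pm\Nmax\pm1$.) Adding up, the difference of the right-hand sides of (\ref{HJB_Theta_full_info_subsol}) and (\ref{HJB_Theta_full_info_supersol}) at $(t^*,n^*,i^*)$ is $\le 0$, forcing $\underline\theta_t(t^*,n^*,i^*)-\overline\theta_t(t^*,n^*,i^*)\ge -C\ve$ for a constant $C$ independent of $\ve$ (the $\mu n$ and $\frac12\sigma^2 n^2(\zeta+\gamma)$ terms cancel exactly, so actually $\ge 0$ modulo the perturbation), contradicting $\le -\ve$ for $\ve$ small. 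Hence $w^\ve\le 0$ on $I$ for all small $\ve$, and letting $\ve\to 0$ gives $\underline\theta\le\overline\theta$.

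I expect the main obstacle to be purely bookkeeping rather than conceptual: handling the non-smoothness of $m(t)=\max_{(n,i)}w^\ve(t,n,i)$ (it is only Lipschitz, so one argues with one-sided derivatives, or equivalently picks a point where the max is attained and uses that $w^\ve(\cdot,n^*,i^*)$ itself has a one-sided max there), verifying the monotonicity of $\Hpm_i$ in its scalar argument cleanly, and making sure the boundary cases $n^*=\pm\Nmax$ are consistent with the indicator-function abuse of notation. The $Q$-matrix structure (row sums zero, off-diagonal nonnegative, from Assumptions \ref{assumptions_Q}) is exactly what makes the coupling term have the right sign, and the CARA structure of $U_\gamma$ matters only through monotonicity, so $\gamma=0$ is handled uniformly.
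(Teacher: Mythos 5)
Your argument is correct and it is essentially the same maximum-principle argument the paper uses; the only cosmetic difference is that you locate the critical time as a ``last zero-crossing'' of $m(t)=\max_{(n,i)}w^\ve(t,n,i)$, whereas the paper simply takes the global minimizer of $\overline\theta-\underline\theta+\ve(T-t)$ over the compact set $[\tau,T]\times\mathcal I$ and then lets $\tau\searrow\inf I$ to handle a half-open interval. Both yield the same two facts at the same point: a one-sided time-derivative inequality, and the correct sign of the $Q$-coupling and Hamiltonian increments via monotonicity of $U_\gamma$ and $\Hpm_i$.

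One small exposition wobble worth fixing: in the paragraph where you introduce the three reaction terms you write them as $U_\gamma(\underline\theta\cdots)-U_\gamma(\overline\theta\cdots)$ etc.\ and claim each is $\le 0$ ``which then contradicts $\underline\theta_t-\overline\theta_t\le-\ve$''. As stated that is not a contradiction: a lower bound by something nonpositive is vacuous. What you actually want (and what you correctly do two sentences later) is to use $(\text{sub-sol RHS})\le 0\le(\text{super-sol RHS})$ to get $\overline\theta_t-\underline\theta_t\le\sum q\bigl(U_\gamma(\underline)-U_\gamma(\overline)\bigr)+\bigl(\Hm(\underline)-\Hm(\overline)\bigr)+\bigl(\Hp(\underline)-\Hp(\overline)\bigr)\le 0$, i.e.\ $\underline\theta_t-\overline\theta_t\ge 0$ exactly (no $O(\ve)$ slack is needed here, since the perturbation is affine in $t$ and does not enter the reaction terms). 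This then cleanly contradicts $\underline\theta_t-\overline\theta_t\le-\ve<0$ from the one-sided maximum. You should also state explicitly, as you hint, that $\Hpm_i$ is nondecreasing because $\hpm_i(\delta,\cdot)=\Lipm(\delta)U_\gamma(\delta+\cdot)$ is nondecreasing for each $\delta$ (with $\Lipm\ge 0$, $U_\gamma$ increasing) and a pointwise supremum of nondecreasing functions is nondecreasing; the paper uses this silently.
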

\begin{proof}
See Appendix \ref{proof:comparison_ODEs}.
\end{proof}
We can now prove the existence and uniqueness of a classical global solution to the Cauchy problem (\ref{HJB_Theta_full_info}). 

\begin{theorem}
\label{Theta_constrained_inventory}
There exists a unique $\Theta:[0,T]\times\mathcal I\to\mathbb R$, $C^1$ in time, which (classically) solves the Cauchy problem (\ref{HJB_Theta_full_info}).
\end{theorem}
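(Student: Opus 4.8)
The plan is to establish local existence via a standard fixed-point argument and then promote it to a global solution using the comparison principle (Proposition \ref{comparison_ODEs}) to rule out blow-up. First I would note that, since $\Nmax<\infty$, the system \eqref{HJB_Theta_full_info} is a \emph{finite} system of ODEs of the form $\theta_t = F(t,\theta)$, where $\theta = (\theta(\cdot,n,i))_{(n,i)\in\mathcal I}$ and the right-hand side $F$ is built from the locally Lipschitz Hamiltonians $\Hpm_i$ (Lemma \ref{lipschitz_hamiltonians}(ii)), the (globally Lipschitz on bounded sets) maps $d\mapsto U_\gamma(d)$, and the continuous coefficients $q^{ij}_t$ (Assumption \ref{assumptions_Q}). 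Hence $F$ is continuous in $t$ and locally Lipschitz in $\theta$, so Picard--Lindel\"of gives a unique maximal solution on a left-open interval $(\tau,T]$ with terminal data $\theta(T,n,i)=-\ell(n)$; standard ODE theory (e.g.\ the escape-from-compacts criterion) says that if $\tau>0$ then $\|\theta(t,\cdot,\cdot)\|_\infty\to\infty$ as $t\downarrow\tau$.

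The key step is then to rule this out by exhibiting explicit constant-in-$\pi$, affine-in-time super- and sub-solutions on all of $[0,T]$, and invoking Proposition \ref{comparison_ODEs} to sandwich the local solution between them, thereby obtaining an a priori bound $\|\theta(t,\cdot,\cdot)\|_\infty\leq C$ uniform on $(\tau,T]$. Concretely, I would look for $\overline\theta(t,n,i) = A(T-t) + B$ and $\underline\theta(t,n,i) = -A(T-t) - B$ for suitable constants $A,B\geq 0$: since $|n|\leq\Nmax<\infty$ and $\ell$ is bounded on $[-\Nmax,\Nmax]$, the terms $\mu n - \tfrac12\sigma^2 n^2(\zeta+\gamma)$ and $-\ell(n)$ are bounded; the intensity terms $\Hpm_i(d)$ evaluated at $d=0$ (the difference of two equal constants) are bounded by Lemma \ref{lipschitz_hamiltonians}(i) applied to the compact $K=\{0\}$; and the generator term vanishes at $d=0$ since $U_\gamma(0)=0$. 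One checks that for $A$ large enough the supersolution inequality \eqref{HJB_Theta_full_info_supersol} holds (the $-\overline\theta_t = A$ term dominates), and likewise for the subsolution, with the terminal inequalities \eqref{terminal_cond_comparison} enforced by taking $B\geq\|\ell\|_{L^\infty[-\Nmax,\Nmax]}$. Comparison then yields $\underline\theta\leq\theta\leq\overline\theta$ on $(\tau,T]$, contradicting blow-up; hence $\tau=0$ and the solution is global. Uniqueness on $[0,T]$ is immediate from local uniqueness (Picard--Lindel\"of) plus connectedness, or alternatively a direct consequence of Proposition \ref{comparison_ODEs} applied with $\overline\theta$ and $\underline\theta$ both taken to be solutions.

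The main obstacle I anticipate is purely bookkeeping rather than conceptual: one must verify carefully that the comparison principle as stated in Proposition \ref{comparison_ODEs} applies when $\overline\theta,\underline\theta$ are only defined on the subinterval $I=(\tau,T]$ (it is stated for an interval $I\subseteq[0,T]$ containing $T$, so this is fine), and that the constant-in-$i,n$ test functions genuinely satisfy \eqref{HJB_Theta_full_info_supersol}--\eqref{HJB_Theta_full_info_subsol} including the indicator factors $\mathbbm 1_{\{n<\Nmax\}}$, $\mathbbm 1_{\{-n<\Nmax\}}$ — but since those only switch some $\Hpm_i$ terms off, and each such term is nonnegative at argument $0$ (as $\Lipm>0$ and $U_\gamma(0)=0$ give $\hpm_i(\delta,0)=\Lipm(\delta)U_\gamma(\delta)$, whose sign is that of $\delta$, so $\Hpm_i(0)\geq\hpm_i(0^+,0)\geq 0$ when $\delmax>0$), dropping them only helps the supersolution inequality and hurts the subsolution one, which is compensated by symmetry. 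A minor point worth stating explicitly is the boundedness of $\ell$ on the \emph{finite} set $\mathbb Z\cap[-\Nmax,\Nmax]$, which is where the hypothesis $\Nmax<\infty$ is essential and is exactly what fails in the infinite-inventory case treated in the next subsection.
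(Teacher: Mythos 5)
Your proposal is correct and follows essentially the same strategy as the paper's proof: Picard--Lindel\"of via the locally Lipschitz Hamiltonians (Lemma \ref{lipschitz_hamiltonians}) for local existence and uniqueness, then ruling out blow-up by sandwiching between explicit affine-in-time super- and sub-solutions via Proposition \ref{comparison_ODEs}. The paper's specific choice is $\overline\theta(t,n,i)=\overline K(T-t)$ and $\underline\theta=-\overline\theta-\ell(\Nmax)$ with $\overline K$ the max of $\big|\mu n-\tfrac12\sigma^2 n^2(\zeta+\gamma)+\mathbbm 1_{\{n<\Nmax\}}\Hm_i(0)+\mathbbm 1_{\{-n<\Nmax\}}\Hp_i(0)\big|$, which sidesteps your sign discussion of $\Hpm_i(0)$ (which, incidentally, need not be nonnegative if $0\notin\overline{(\delmin,\delmax)}$, but this is immaterial once one bounds by absolute values).
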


\begin{proof}
For each $1\leq i\leq k$, Lemma \ref{lipschitz_hamiltonians} tells us that $\Hpm_i:\mathbb R\to\mathbb R$ is locally Lipschitz. We also assumed $Q:[0,T]\to\mathbb R^{k\times k}$ continuous (Assumptions \ref{assumptions_Q}). Hence, by the Cauchy-Lipschitz Theorem, the terminal condition system of ODEs (\ref{HJB_Theta_full_info}) admits a unique $C^1$ local solution $(\Theta(\cdot,n,i))_{(n,i)\in\mathcal I}$, defined on some maximal interval $I\subset [0,T]$ containing $T$.

Suppose that $I\subsetneq [0,T]$. Then $I=(\tau,T]$ for some $0\leq\tau<T$ and $\|\Theta(t)\|\to\infty$ as $t\searrow\tau$. 
We claim that $\Theta$ is actually bounded, resulting in a contradiction. Indeed, take 
$$\overline K=\max_{(i,n)\in\mathcal I}\left\{\Big|\mu n -\frac{1}{2}\sigma^2 n^2(\zeta+\gamma)+\mathbbm 1_{\{n<\Nmax\}}\Hm_i(0)+ \mathbbm 1_{\{-n<\Nmax\}}\Hp_i(0)\Big|\right\}$$
and
$$
\underline K=\ell(\Nmax),
$$
and define $\overline\theta:I\times\mathcal I\to\mathbb R$ by $\overline\theta(t,n,i)=\overline K(T-t)$ and $\underline\theta\defeq -\overline\theta -\underline K$. Then $\overline\theta$ (resp. $\underline\theta$) is a super- (resp. sub-) solution of the Cauchy problem (\ref{HJB_Theta_full_info}). By the Comparison Principle (\ref{comparison_ODEs}), 
$$
-\overline K T - \underline K\leq\underline\theta\leq\Theta\leq\overline\theta\leq\overline K T,
$$
proving that $\Theta$ is bounded.
\end{proof}

\subsection{Unconstrained inventory ODEs} 
We consider now $\Nmax=+\infty$, for which (\ref{HJB_Theta_full_info}) becomes an infinite system of ODEs. Recall that in this case we assumed $\gamma=\zeta=0\equiv\ell$, i.e., the MM is fully risk-neutral and has negligible costs. This allows us to further reduce the dimensionality of the state variable by the additional ansatz
\begin{equation}
\label{reduction_Gamma}
\Theta(t,n,i)=\mu n(T-t)+\Phi_i(t),
\end{equation}
for some $\Phi=(\Phi_i)_{i=1}^k\in C^1([0,T],\mathbb R^k)$. Substituting in (\ref{HJB_Theta_full_info}), we get that $\Phi$ must solve the finite linear system of ODEs
\begin{equation}
\label{HJB_Phi_full_info}
\begin{cases}
\phi'(t) =-Q(t)\phi(t) + b(t)\\
\phi(T) =0,
\end{cases}
\end{equation}
with $b_i(t)= - \Hm_i(\mu(T-t))-\Hp_i(-\mu(T-t)),\ i=1,\dots,k$. 
By continuity of $Q$ and $b$ (see Lemma \ref{lipschitz_hamiltonians}), the previous system is known to have a unique global solution $\Phi\in C^1([0,T],\mathbb R^k)$ 
which can be computed by the variation of parameters method. Straightforward verification now gives the following:
\begin{proposition}
\label{Theta_unconstrained_inventory}
Let $\Phi\in C^1([0,T],\mathbb R^k)$ be the unique solution of (\ref{HJB_Phi_full_info}). Then $\Theta:[0,T]\times \mathcal I\to\mathbb R$ such that $\Theta(t,n,i)=\mu n(T-t)+\Phi_i(t)$, is the unique, $C^1$ in time, (classical) solution of the Cauchy problem (\ref{HJB_Theta_full_info}) with $\Nmax=+\infty$ and $\gamma=\zeta=0\equiv\ell$.
\end{proposition}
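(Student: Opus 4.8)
The plan is to get existence by plugging the stated ansatz into (\ref{HJB_Theta_full_info}), and uniqueness from the Comparison Principle (Proposition \ref{comparison_ODEs}); the only genuinely delicate point is the latter.

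\emph{Existence.} In the regime $\Nmax=+\infty$, $\gamma=\zeta=0\equiv\ell$, all the indicators $\mathbbm 1_{\{n<\Nmax\}}$, $\mathbbm 1_{\{-n<\Nmax\}}$ equal $1$, $U_\gamma$ reduces to the identity, and $\tfrac12\sigma^2n^2(\zeta+\gamma)=0$, so (\ref{HJB_Theta_full_info}) becomes, for all $n\in\mathbb Z$ and $1\leq i\leq k$,
\begin{align*}
0 &= \theta_t(t,n,i)+\mu n+\sum_{j\neq i}q_t^{ij}\big(\theta(t,n,j)-\theta(t,n,i)\big)\\
  &\quad+\Hm_i\big(\theta(t,n+1,i)-\theta(t,n,i)\big)+\Hp_i\big(\theta(t,n-1,i)-\theta(t,n,i)\big).
\end{align*}
First, substitute $\theta(t,n,i)=\mu n(T-t)+\Phi_i(t)$. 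Then $\theta_t(t,n,i)=-\mu n+\Phi_i'(t)$, so the explicit term $\mu n$ cancels; the increments in the inventory variable flatten, $\theta(t,n\pm1,i)-\theta(t,n,i)=\pm\mu(T-t)$, independently of $n$ and $i$, turning the Hamiltonian terms into $\Hm_i(\mu(T-t))$ and $\Hp_i(-\mu(T-t))$; and $\theta(t,n,j)-\theta(t,n,i)=\Phi_j(t)-\Phi_i(t)$, so by the conservativity $\sum_jq_t^{ij}=0$ of Assumptions \ref{assumptions_Q} the chain term equals $\sum_{j\neq i}q_t^{ij}(\Phi_j(t)-\Phi_i(t))=(Q(t)\Phi(t))_i$. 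The displayed equation thus reduces exactly to $\Phi_i'(t)+(Q(t)\Phi(t))_i+\Hm_i(\mu(T-t))+\Hp_i(-\mu(T-t))=0$, i.e. to (\ref{HJB_Phi_full_info}) with $b_i(t)=-\Hm_i(\mu(T-t))-\Hp_i(-\mu(T-t))$, and the terminal condition to $\Phi_i(T)=0$. Since $\Phi$ is assumed to solve (\ref{HJB_Phi_full_info}) on all of $[0,T]$ (a linear system with continuous coefficients $Q$ and $b$, the latter continuous because the $\Hpm_i$ are locally Lipschitz by Lemma \ref{lipschitz_hamiltonians}), the function $\Theta(t,n,i)=\mu n(T-t)+\Phi_i(t)$ is a $C^1$-in-time classical solution of (\ref{HJB_Theta_full_info}) on $[0,T]$.

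\emph{Uniqueness.} I would invoke Proposition \ref{comparison_ODEs}, whose statement covers $\mathcal I=(\mathbb Z\cap[-\Nmax,\Nmax])\times\{1,\dots,k\}$ with $\Nmax\leq+\infty$ and hence the present infinite system. Let $\Theta_1,\Theta_2:[0,T]\times\mathcal I\to\mathbb R$ be two $C^1$-in-time classical solutions of (\ref{HJB_Theta_full_info}). Each satisfies the super-solution inequality (\ref{HJB_Theta_full_info_supersol}) and the sub-solution inequality (\ref{HJB_Theta_full_info_subsol}) on $I=[0,T]$, with equality, and each meets the terminal data $\Theta_m(T,\cdot,\cdot)=-\ell\equiv0$, so (\ref{terminal_cond_comparison}) holds regardless of which one is taken as $\overline\theta$ and which as $\underline\theta$. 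Applying Proposition \ref{comparison_ODEs} once with $\overline\theta=\Theta_1$, $\underline\theta=\Theta_2$ and once with the roles exchanged yields $\Theta_2\leq\Theta_1\leq\Theta_2$, hence $\Theta_1=\Theta_2$. Together with the existence part, $\mu n(T-t)+\Phi_i(t)$ is the unique $C^1$-in-time classical solution of (\ref{HJB_Theta_full_info}).

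\emph{Where the difficulty sits.} The substitution itself is bookkeeping; its one structural content, which is exactly what legitimizes the further ansatz (\ref{reduction_Gamma}), is that once $\gamma=\zeta=0$ and $\ell\equiv0$ the running drift $\mu n$ is absorbed by $\partial_t[\mu n(T-t)]$, the inventory increments become independent of $n$, and conservativity of $Q$ removes the residual $n$-dependence from the generator, so the infinite coupled system decouples into $k$ scalar linear ODEs indexed only by the regime. The genuine obstacle is uniqueness: for $\Nmax<\infty$ it would already follow from the Cauchy-Lipschitz theorem, but for $\Nmax=+\infty$ the system is infinite in $n$ with no a priori growth bound imposed on solutions, so one must rely on the comparison principle established earlier (which, as stated, applies over the unbounded index set) rather than on finite-dimensional ODE uniqueness.
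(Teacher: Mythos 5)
Your existence computation is correct and matches what the paper has in mind by ``Straightforward verification now gives the following'': substitute the further ansatz (\ref{reduction_Gamma}) into (\ref{HJB_Theta_full_info}), observe that $\partial_t[\mu n(T-t)]$ absorbs $\mu n$, that the inventory increments collapse to $\pm\mu(T-t)$, and that conservativity of $Q$ turns the generator term into $(Q(t)\Phi(t))_i$; this reduces the system to (\ref{HJB_Phi_full_info}).

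The uniqueness step, however, has a genuine gap. You invoke Proposition \ref{comparison_ODEs} and assert parenthetically that ``as stated, [it] applies over the unbounded index set.'' It does not: that proposition sits in the subsection headed \emph{Constrained inventory ODEs}, which opens with the standing hypothesis $\Nmax<\infty$, and its proof uses this hypothesis crucially at the very first line --- ``Since $\Nmax<\infty$, there exists $(t_\ve,n_\ve,i_\ve)$'' realizing the minimum in (\ref{min}). When $\Nmax=+\infty$ the set $[\tau,T]\times\mathcal I$ is not compact in the $n$-direction, the infimum of $\overline\theta-\underline\theta+\ve(T-t)$ need not be attained, and the argument stops. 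So the comparison principle as proved in the paper does not cover the infinite system, and your reduction of uniqueness to it does not close the argument. This is not a cosmetic point: for an infinite system of ODEs on $\mathbb Z$ with nearest-neighbour coupling, uniqueness without any a priori growth restriction on the solution is exactly the kind of statement that can fail (Tychonoff-type phenomena), so one must either add a growth class to the uniqueness claim and run a penalized minimization (e.g.\ minimizing $\overline\theta-\underline\theta+\ve(T-t)+\ve'|n|$ and sending $\ve'\to 0$), or find a structural reason specific to this system. You correctly identify that Cauchy--Lipschitz no longer applies for $\Nmax=+\infty$, but the replacement you propose is not available as cited. (For what it is worth, the paper's own one-line proof also does not address uniqueness of the infinite system; the issue is real, not merely a misreading on your part of an argument the paper supplies elsewhere.)
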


\subsection{General Verification Theorem}
We give now the complete solution for the general model under full information. For the next theorem we note that given the function $\Theta$ defined in Theorem \ref{Theta_constrained_inventory} for $\Nmax<+\infty$ (resp. Proposition \ref{Theta_unconstrained_inventory} for $\Nmax=+\infty$ and $\gamma=\zeta=0\equiv\ell$) the difference or ``jump" terms of equation (\ref{HJB_Theta_full_info}) are bounded by continuity. That is, $\Theta(t,n\mp 1,i)-\Theta(t,n,i)$ is bounded on $[0,T]\times\mathcal I$ for $\Nmax<+\infty$ (resp. $\Theta(t,n\mp 1,i)-\Theta(t,n,i)=\mp\mu(T-t)$ is bounded on $[0,T]$).

\begin{theorem}[\textbf{Verification Theorem}]
\label{verif_full_info}
Let $\Theta$ be as in Theorem \ref{Theta_constrained_inventory} for $\Nmax<+\infty$ and as in Proposition \ref{Theta_unconstrained_inventory} for $\Nmax=+\infty$ and $\gamma=\zeta=0\equiv\ell$. Then the value function in (\ref{value_function_full_info}) is 
$$
V(t,s,x,n,i)=U_\gamma(x+sn+\Theta(t,n,i)).
$$
Furthermore, given $C>0$ such that $|\Theta(t,n\mp 1,i)-\Theta(t,n,i)|\leq C$ for all $(t,n,i)\in[0,T]\times\mathcal I$ with $-\Nmax\leq n\mp 1\leq\Nmax$, there exist Borel measurable functions $\overline{\delpm_1},\dots,\overline{\delpm_k}:[-C,C]\to\overline{(\delmin,\delmax)}$ such that $\overline{\delpm_i}(d)\in\argmax_{\delta\in\overline{(\delmin,\delmax)}}\hpm_i(\delta,d)$ for all $d\in[-C,C]$, $1\leq i\leq k$; and for any such functions, the strategy $(\overline{\delp_u},\overline{\delm_u})$, with
$$
\overline{\delpm_u}\defeq\overline\delpm_{\Yt_{u^-}}\Big(\Theta(u,\Nt_{u^-}\mp 1,\Yt_{u^-})-\Theta(u,\Nt_{u^-},\Yt_{u^-})\Big),
$$
is optimal. 
\end{theorem}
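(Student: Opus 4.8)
The plan is to run the classical verification argument for the HJB PIDE~(\ref{HJB_V_full_info}), exploiting the fact that the function $v(t,s,x,n,i)\defeq U_\gamma(x+sn+\Theta(t,n,i))$ is $C^1$ in time and smooth (affine) in $s,x$, hence a \emph{classical} solution of~(\ref{HJB_V_full_info}) by construction of $\Theta$ in Theorem~\ref{Theta_constrained_inventory} (resp. Proposition~\ref{Theta_unconstrained_inventory}). First I would record that substituting this ansatz into~(\ref{HJB_V_full_info}) reduces it to~(\ref{HJB_Theta_full_info}) term by term --- the $v_s,v_{ss}$ terms produce the $\mu n - \tfrac12\sigma^2 n^2(\gamma+\zeta)$ drift, the generator term produces $\sum_{j\neq i}q^{ij}_t U_\gamma(\theta(t,n,j)-\theta(t,n,i))$ via the identity $U_\gamma(a+b)=U_\gamma(b)e^{-\gamma a}+U_\gamma(a)$ and $\sum_j q^{ij}_t=0$, and the jump operators $\Dpm$ collapse to the $\Hpm_i$ Hamiltonians because the cash jump $x\mapsto x\pm(s\pm\delta)$ exactly cancels the $s$-dependence, leaving $U_\gamma(\delta + \theta(t,n\mp1,i)-\theta(t,n,i))$ inside the supremum. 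So $v$ solves~(\ref{HJB_V_full_info}) with the stated terminal condition.

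Next I would fix $t\in[0,T]$, an initial point $(s,x,n,i)$, and an arbitrary admissible $\alpha=(\delm,\delp)\in\mathcal U_t^2$, and apply the Itô--Dynkin formula (for jump-diffusions with finitely many jumps on $[t,T]$, since $\Nm,\Np,\Yt$ are non-explosive) to the process $u\mapsto v(u,\St_u,\Xt_u,\Nt_u,\Yt_u)$ between $t$ and $T$. The drift part of the resulting decomposition is pointwise $\leq 0$ because $v$ is a supersolution of~(\ref{HJB_V_full_info}) (the suprema over $\delm,\delp$ dominate the value at the particular controls $\delm_u,\delp_u$). The local-martingale part --- the $\sigma v_s\,dW$ integral plus the compensated jump integrals against $\bar\mu^N_\alpha$ and $\bar\mu^Y_\alpha$ --- is a genuine martingale: the integrands are bounded because $v_s=U_\gamma'(x+sn+\Theta)\cdot n$ with $|n|\leq\Nmax<\infty$ (in the constrained case) and $\Theta$ bounded by Theorem~\ref{Theta_constrained_inventory}, while in the unconstrained case $\gamma=\zeta=0$ forces $v_s\equiv n$ against a Brownian integral with $\mathbb E\int_t^T(\Nt_u)^2\,du<\infty$ by~(\ref{bounded_moments}), and the jump terms have bounded intensity; here the boundedness of $\Theta(t,n\mp1,i)-\Theta(t,n,i)$ noted just before the theorem is what is being used. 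Taking $\mathbb E^{t,n,i}$ and using the terminal condition $v(T,\cdot)=U_\gamma(P^{\alpha,s,x,n}_{t,T})$-integrand gives $v(t,s,x,n,i)\geq\mathbb E^{t,n,i}[U_\gamma(P^{\alpha,s,x,n}_{t,T})]$; supremizing over $\alpha$ yields $v\geq V$.

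For the reverse inequality and optimality, I would first invoke Lemma~\ref{lipschitz_hamiltonians}(i): on the compact set $[-C,C]$ the supremum defining $\Hpm_i$ is attained on a fixed compact $[a,b]\subseteq\overline{(\delmin,\delmax)}$, so by a measurable-selection argument (the map $(d,\delta)\mapsto\hpm_i(\delta,d)$ is continuous and the argmax over a compact set is a nonempty compact-valued u.s.c. correspondence --- apply e.g. Kuratowski--Ryll-Nardzewski, or the elementary Filippov-type selection on a metric space) there exist Borel $\overline{\delpm_i}:[-C,C]\to\overline{(\delmin,\delmax)}$ with $\overline{\delpm_i}(d)\in\argmax$. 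Then the feedback control $\overline{\delpm_u}\defeq\overline\delpm_{\Yt_{u^-}}(\Theta(u,\Nt_{u^-}\mp1,\Yt_{u^-})-\Theta(u,\Nt_{u^-},\Yt_{u^-}))$ is $\mathbb F$-predictable (being a Borel function of the left-limits of the adapted càdlàg processes $\Nt,\Yt$), takes values in $\overline{(\delmin,\delmax)}$, and is bounded, hence lies in $\mathcal U_t$; one should also check such a control generates a well-defined state process --- this is immediate since the intensities $\Lipm(\overline{\delpm_u})$ are bounded so the Girsanov/PDMP construction of Section~\ref{s:model} applies. With this control the pointwise drift inequality above becomes an \emph{equality} at every time, so the same Itô--Dynkin computation gives $v(t,s,x,n,i)=\mathbb E^{t,n,i}[U_\gamma(P^{\overline\alpha,s,x,n}_{t,T})]\leq V(t,s,x,n,i)$. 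Combining, $V=v$ and $\overline\alpha$ is optimal.

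The main obstacle I anticipate is the integrability/true-martingale justification in the unconstrained case $\Nmax=+\infty$: there $v_s$ and the jump integrands are no longer bounded, so one must lean on~(\ref{bounded_moments}) and a localization argument (stopping at $\tau_m=\inf\{u:\ |\Nt_u|\geq m\}\wedge T$, applying the martingale identity up to $\tau_m$, and passing to the limit via dominated convergence using the Poisson-domination moment bounds in the Remark after~(\ref{bounded_moments})) --- and crucially the reduction~(\ref{reduction_Gamma}) showing $\Theta(t,n\mp1,i)-\Theta(t,n,i)=\mp\mu(T-t)$ is what keeps the jump compensator terms integrable. The measurable selection is routine given Lemma~\ref{lipschitz_hamiltonians}, and the algebraic substitution verifying $v$ solves~(\ref{HJB_V_full_info}) is mechanical, so neither of those should cause trouble.
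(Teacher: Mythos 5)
Your argument is correct only when the running penalty vanishes, $\zeta=0$; as written it has a genuine gap in the case $\zeta>0$ (and in particular whenever $\gamma>0$ is combined with a running penalty, which the paper explicitly allows). Applying Itô--Dynkin directly to $u\mapsto v(u,\St_u,\Xt_u,\Nt_u,\Yt_u)$ produces the drift
\[
v_t+\mu v_s+\tfrac12\sigma^2 v_{ss}+\sum_j q^{ij}_t v(\cdot,j)+\lm_u\bigl(v(n+1,\cdot)-v\bigr)+\lp_u\bigl(v(n-1,\cdot)-v\bigr),
\]
which, by (\ref{HJB_V_full_info}), is bounded above not by $0$ but by $-\tfrac12\sigma^2\zeta n^2(\gamma v-1)=\tfrac12\sigma^2\zeta n^2(1-\gamma v)$. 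This term is positive in general, so your claim that ``the drift part is pointwise $\leq 0$'' is false. Relatedly, the terminal value of $v$ along the controlled path is $U_\gamma\bigl(\Xt_T+\St_T\Nt_T-\ell(\Nt_T)\bigr)$, \emph{not} $U_\gamma\bigl(P^{\alpha,s,x,n}_{t,T}\bigr)$: the accumulated penalty $-\tfrac12\sigma^2\zeta\int_t^T(\Nt_u)^2\,du$ sits \emph{inside} the CARA utility and is simply dropped in your comparison. When $\gamma=0$ the penalty is additive and can be tracked separately, but for $\gamma>0$ it enters multiplicatively as $\exp\bigl(\gamma\tfrac12\sigma^2\zeta\int N^2\bigr)$ and the direct argument does not close.

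The fix, which is what the paper does, is to introduce the integrating factor $Z_{t,v}=\exp(-\gamma R_{t,v})$ with $R_{t,v}=-\tfrac12\sigma^2\zeta\int_t^v N_u^2\,du$, write $U_\gamma(P_{t,T})=\tilde V(T,\St_T,\Xt_T,\Nt_T,\Yt_T)\,Z_{t,T}+U_\gamma(R_{t,T})$ via the identity $U_\gamma(a+b)=U_\gamma(a)e^{-\gamma b}+U_\gamma(b)$, and apply Itô together with integration by parts to the \emph{product} $\tilde V\cdot Z_{t,\cdot}$. The quadratic-covariation/drift of $Z_{t,\cdot}$ is exactly what generates the zero-order term $\tfrac12\sigma^2\zeta n^2(\gamma v-1)$, so the full HJB operator appears in the drift and the supersolution inequality becomes $\leq 0$ as desired. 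The rest of your sketch --- the measurable selection via Lemma~\ref{lipschitz_hamiltonians}, the martingale/integrability verifications, and the observation that the constrained case gives uniform bounds while the unconstrained case uses (\ref{bounded_moments}) and the reduction (\ref{reduction_Gamma}) --- matches the paper's route and is fine; your extra localization step in the unconstrained case is also a workable alternative to the paper's direct moment estimates. But the running-penalty handling must be repaired before the verification is correct.
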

\begin{notation*}
Note that we make a slight abuse of notation, writing $\overline{\delpm_i}$ for the Borel functions in the theorem and $(\overline{\delpm_u})$ for the spread processes. 
\end{notation*}

\begin{proof}
Let $1\leq i\leq k$. We check first that we can choose a maximizer of $\hpm_i(\cdot,d)$ in a measurable way with respect to $d$. Lemma \ref{lipschitz_hamiltonians} tells us that there exists $[a_i,b_i]\subseteq\overline{(\delmin,\delmax)}$ such that $\hpm_i(\delta,d)$ attains its maximum in $[a_i,b_i]$ for all $d\in[-C,C]$. 
Due to Assumptions \ref{assumptions_intensities} \textit{(\ref{assumptions_intensities_Lambda})}, $\hpm_i$ is continuous and, in particular, a Carathéodory function \cite[Def.4.50]{AB}. Thus, the Measurable Maximum Theorem \cite[Thm.18.19]{AB} guarantees the existence of a Borel selector of the $\argmax$, $\overline{\delpm_i}:[-C,C]\to[a_i,b_i]\subseteq\overline{(\delmin,\delmax)}$. (Note that the weak measurability assumption is trivially verified in this case.)

From here onwards, let $\overline{\delpm_1},\dots,\overline{\delpm_k}:[-C,C]\to\overline{(\delmin,\delmax)}$ be some Borel selectors as above. By Lemma \ref{lipschitz_hamiltonians} again, these functions must be bounded from below, and the spread processes $(\overline{\delpm_u)}$ defined as in the theorem are clearly admissible.

We define now $\tilde V(t,s,x,n,i)\defeq U_\gamma(x+ns+\Theta(t,n,i))$ and we want to show that $\tilde V=V$.
Let us fix the initial time and values, $t\in[0,T],\ s,x\in\mathbb R\mbox{ and }(n,i)\in\mathcal I$, and consider an arbitrary strategy $\alpha=(\delm_u,\delp_u)\in\mathcal U_t^2$. For shortness, we omit these initial conditions and strategy from the notation of the processes and the expectation. 
We denote by $\mathcal S\defeq \{j-l:\ 1\leq l,j\leq k,\ l\neq j\}$, the set of jump heights of $Y$, and
$$
R_{t,v}\defeq -\frac{1}{2}\sigma^2\zeta \int_t^v N_u^2 du,\qquad Z_{t,v}\defeq \exp\big(-\gamma R_{t,v}\big).
$$
By the identity $U_\gamma(a+b)=U_\gamma(a)e^{-\gamma b}+U_\gamma(b)$, we can rewrite the utility of the MM's penalized P\&L as 
\begin{equation}
\begin{split}
\label{eq1}
U_\gamma\big(P_{t,T}\big) & = U_\gamma\left(X_T+S_TN_T-\ell(N_T)\right) Z_{t,T} + U_\gamma\big( R_{t,T}\big)\\
& = \tilde V\big(T,S_T,X_T, N_T,Y_T\big) Z_{t,T}+U_\gamma\big( R_{t,T}\big).
\end{split}
\end{equation}
Using integration by parts and Ito's Lemma (recalling (\ref{no_common_jumps})), we re-express the last two terms as 
\begin{align*}
U_\gamma\big( R_{t,T}\big) = -\frac{1}{2}\sigma^2\zeta\int_t^T Z_{t,u}N_u^2du
\end{align*}
and
\begin{align*}
& \qquad\tilde V\big(T,S_T,X_T, N_T,Y_T\big) Z_{t,T}\\
& =\tilde V(t,s,x,n,i)+\int_t^T \tilde V (u,S_u,X_u, N_u,Y_u )d Z_{t,u} + \int_t^T Z_{t,u}d\tilde V (u,S_u,X_u, N_u,Y_u )\\
& = \tilde V(t,s,x,n,i)+\frac{1}{2}\sigma^2\zeta\gamma\int_t^T  V (u,S_u,X_u, N_u,Y_u ) Z_{t,u} N_u^2du+\int_t^T  Z_{t,u}\tilde V_t (u, S_u, X_{u^-}, N_{u^-},Y_{u^-} ) du\\
& + \mu \int_t^T  Z_{t,u}\tilde V_s (u, S_u, X_{u^-}, N_{u^-},Y_{u^-} )du+ \sigma \int_t^T  Z_{t,u}\tilde V_s (u, S_u, X_{u^-}, N_{u^-},Y_{u^-} ) dW_u\\
& + \frac{1}{2}\sigma^2\int_t^T  Z_{t,u}\tilde V_{ss} (u, S_u, X_{u^-}, N_{u^-},Y_{u^-} ) du\\
& +\int_t^T  Z_{t,u}\big( \tilde V(u, S_u, X_{u^-} - (S_u - \delm_u), N_{u^-}+1,Y_{u^-}) - \tilde V(u, S_u, X_{u^-}, N_{u^-},Y_{u^-})\big) d\Nm_u\\
& + \int_t^T  Z_{t,u}\big( \tilde V(u, S_u, X_{u^-} + S_u + \delp_u, N_{u^-}-1,Y_{u^-}) - \tilde V(u, S_u, X_{u^-}, N_{u^-},Y_{u^-})\big) d\Np_u \\
& + \int_t^T Z_{t,u}\int_{\mathcal S}\big( \tilde V (u, S_u, X_{u^-}, N_{u^-},Y_{u^-}+h ) - \tilde V (u, S_u, X_{u^-}, N_{u^-},Y_{u^-} ) \big)\mu^Y(dt,dh),
\end{align*}
where $\mu^Y$ is the jump measure of $Y$. Substituting in (\ref{eq1}),

\begin{equation}
\label{eq2}
\begin{split}
& U_\gamma\big(P_{t,T}\big) = \tilde V(t,s,x,n,i) + \sigma \int_t^T  Z_{t,u}\tilde V_s (u, S_u, X_{u^-}, N_{u^-},Y_{u^-} ) dW_u\\
&+\int_t^T  Z_{t,u} \Big\{\tilde V_t (u, S_u, X_{u^-}, N_{u^-},Y_{u^-} )du+\mu \tilde V_s (u, S_u, X_{u^-}, N_{u^-},Y_{u^-} )du\\
&+\frac{1}{2}\sigma^2\tilde V_{ss} (u, S_u, X_{u^-}, N_{u^-},Y_{u^-} )du + \frac{1}{2}\sigma^2\zeta N_u^2
\big(\gamma V(u,S_u,X_u, N_u,Y_u )-1\big)du\\
& +\int_{\mathcal S}\big( \tilde V(u, S_u, X_{u^-}, N_{u^-},Y_{u^-}+h) - \tilde V(u, S_u, X_{u^-}, N_{u^-},Y_{u^-}) \big)\mu^Y(dt,dh)\\
& +\big( \tilde V(u, S_u, X_{u^-} - (S_u - \delm_u), N_{u^-}+1,Y_{u^-}) - \tilde V(u, S_u, X_{u^-}, N_{u^-},Y_{u^-})\big) d\Nm_u \\
& + \big( \tilde V(u, S_u, X_{u^-} + S_u + \delp_u, N_{u^-}-1,Y_{u^-}) - \tilde V(u, S_u, X_{u^-}, N_{u^-},Y_{u^-})\big) d\Np_u\Big\}.
\end{split}
\end{equation}
Next, we want to verify that the process $\Big(\int_t^v  Z_{t,u}\tilde V_s (u, S_u, X_{u^-}, N_{u^-},Y_{u^-} ) dW_u\Big)_{t\leq v\leq T}$ is a zero mean martingale. Since $Z_{t,u}$ is bounded (either $\Nmax<\infty$ or $\gamma=0$) it suffices to check
\begin{equation}
\label{eq3}
\int_t^T \mathbb E\left[ \tilde V_s (u, S_u, X_u, N_u,Y_u )^2\right] du<+\infty.
\end{equation}
If $\gamma=0$, 
$$
\tilde V_s (u, S_u, X_u, N_u,Y_u )^2={N_u}^2
$$
and (\ref{eq3}) is a consequence of (\ref{bounded_moments}). 
\bigskip

\noindent If $\gamma\neq 0$ (hence $\Nmax<\infty$), let $C$ be a lower bound for $(\delm_u)$ and $(\delp_u)$. Integration by parts and Hölder inequality yield
\begin{align*}
&\mathbb E\left[\tilde V_s (u, S_u, X_u, N_u,Y_u )^2\right] =\gamma^2\mathbb E\left[\exp\big(-2\gamma\big(X_u+S_uN_u + \theta(u,N_u,Y_u)\big)\big)N_u^2 \right]\\
& =\gamma^2\exp(-2\gamma(x+sn))\\
&\times\mathbb E\left[\exp\Big(-2\gamma\int_t^u\delm_wd\Nm_w+ \delp_wd\Np_w +\mu N_wdw+\sigma N_wdW_w+\theta(w,N_w,Y_w)\Big) N_u^2 \right]\\
& \leq \gamma^2{\Nmax}^2\exp\Big(2\gamma\big(x+sn+\mu\Nmax (T-t)+\|\theta\|_{L^\infty([t,T]\times\mathcal I)}+2\sigma^2\gamma{\Nmax}^2(T-t)\big)\Big)\\
&\times{\mathbb E\left[\exp\Big(-4\gamma  C  \big(\Nm_T+\Np_T-\Nm_{t^-}-\Np_{t^-}\big)\Big)\right]}^{\frac{1}{2}}{\mathbb E\left[\exp\Big(-4\gamma\sigma\int_t^T N_u dW_u - 8\gamma^2\sigma^2\int_t^T N_u^2 du\Big)\right]}^{\frac{1}{2}}
\end{align*}
and (\ref{eq3}) is again consequence of (\ref{bounded_moments}) and Novikov's condition (trivially satisfied).
\bigskip

\noindent In the same way, recalling that $(Z_{t,u})$ is bounded, $(\delpm_u)$ is bounded from below and that $Q$ is bounded by continuity, one can check that 
$$
\int_t^T \mathbb E\Big[ Z_{t,u}\big| \tilde V(u, S_u, X_{u^-} \pm (S_u \pm \delpm_u), N_{u^-}\mp 1,Y_{u^-}) - \tilde V(u, S_u, X_{u^-}, N_{u^-},Y_{u^-})\big|\lpm_u\Big] du<+\infty
$$
and 
$$
\int_t^T \mathbb E\Big[Z_{t,u}\sum_{j\neq Y_{u^-}}q_t^{Y_{u^-},j}\big| \tilde V(u, S_u, X_{u^-}, N_{u^-},j) - \tilde V(u, S_u, X_{u^-}, N_{u^-},Y_{u^-}) \big|\Big] du<+\infty.
$$

\noindent Taking expectation in (\ref{eq2}), the Brownian term vanishes and integration with respect to $d\Nm,d\Np$ and $d\mu^Y$ is replaced by integration with respect to their dual predictable projections (see, e.g., \cite[p.27 T8 and p.235 C4]{B}).
That is,
 
\begin{equation*}
\label{eq4}
\begin{split}
& \mathbb E\left[U_\gamma\big(P_{t,T}\big)\right] = \tilde V(t,s,x,n,i)+ \mathbb \int_t^T \mathbb E\Big[ Z_{t,u} \Big\{\tilde V_t(u, S_u, X_{u^-}, N_{u^-},Y_{u^-} )\\
& +\mu \tilde V_s (u, S_u, X_{u^-}, N_{u^-},Y_{u^-} ) + \frac{1}{2}\sigma^2\tilde V_{ss} (u, S_u, X_{u^-}, N_{u^-},Y_{u^-} )\\
& + \frac{1}{2}\sigma^2\zeta N_u^2\big(\gamma \tilde V(u,S_u,X_u, N_u,Y_u)-1\big)+\sum_{j=1}^k q^{Y_{u^-}j}_t \tilde V(u, S_u, X_{u^-}, N_{u^-},j )\\
& +\mathbbm 1^-_u\Lm_{Y_{u^-}}(\delm_u)\big( \tilde V(u, S_u, X_{u^-} - (S_u - \delm_u), N_{u^-}+1,Y_{u^-}) - \tilde V(u, S_u, X_{u^-}, N_{u^-},Y_{u^-})\big)\\
& +\mathbbm 1^+_u\Lp_{Y_{u^-}}(\delp_u)\big( \tilde V(u, S_u, X_{u^-} + S_u + \delp_u, N_{u^-}-1,Y_{u^-}) - \tilde V(u, S_u, X_{u^-}, N_{u^-},Y_{u^-})\big) \Big\}\Big] du\\
& \leq  \tilde V(t,s,x,n,i),
\end{split}
\end{equation*}
as $\tilde V$ solves (\ref{HJB_V_full_info}), with the equality attained for $(\delm_u)=(\overline{\delm_u}) \mbox{ and } (\delp_u)=(\overline{\delp_u})$  by definition.
We conclude that $V=\tilde V$ and that the pair $(\overline{\delm_u}),(\overline{\delp_u})$ is optimal.
\end{proof}

\begin{remark*}
For $\Nmax<\infty$, since the MM will not buy (resp. sell) whenever $(N_{u^-})$ hits $\Nmax$ (resp. $-\Nmax$), the value of $(\overline{\delm_u})$ (resp. $(\overline{\delp_u})$) at these stopping times is essentially irrelevant. From a strict mathematical perspective, the only constraint is that whichever the value we choose, the process needs to remain admissible.  
\end{remark*}

\subsection{Computing the optimal spreads: some particular cases}
\label{s:computing_spreads}
We have shown in Theorem \ref{verif_full_info} that the optimal spreads for the full information problem (\ref{value_function_full_info}) can be computed in feedback form in terms of $(t,n,i)=(t,N_{t^-},Y_{t^-})$ at each time $t\in[0,T]$. Practically, this means finding $\Theta$ (typically, numerically) that solves the terminal condition system of ODEs (\ref{HJB_Theta_full_info}), and finding the spreads by maximization:
\begin{equation} 
\label{eq1_argmax}
\overline{\delpm}(t,n,i)\in\argmax_{\delta\in\overline{(\delmin,\delmax)}}\Lipm(\delta)U_\gamma\left(\delta +\Theta(t,n\mp 1,i)-\Theta(t,n,i)\right).
\end{equation}
(See the proof of Lemma \ref{lipschitz_hamiltonians} to see how to reduce the maximization to a compact domain in the cases of $\delmin=-\infty$ or $\delmax=+\infty$.)
The functions in (\ref{eq1_argmax}) may admit multiple maximizers in general. In \cite{BL, G, GL} stronger assumptions are imposed on the orders intensities which guarantee, in particular, the uniqueness of the maximizers.
We now extend these assumptions to our context and give the corresponding characterization of the spreads. Henceforth, Assumption \ref{assumptions_intensities} \textit{(\ref{assumptions_intensities_Lambda})} is replaced by the following:

\begin{assumptions}
\label{assumptions_intensities_strong}
$\Lipm\in C^2\big(\overline{(\delmin,\delmax)}\big), \quad {\Lipm}'<0\mbox{ and } \Lipm{\Lipm}''<c({\Lipm}')^2$ for some $0<c<2$, for all $1\leq i\leq k$.\footnote{Lateral derivatives are considered on the domain border. The strict inequality $c<2$ is needed when $\gamma=0$ and spreads are unconstrained.}
\end{assumptions}

\begin{remark*}
The intensity functions of Examples \ref{ex} \ref{ex1}, \ref{ex2} and \ref{ex3} all verify these stronger assumptions, while Example \ref{ex4} only does so for $\Dpm_i>1$. As a matter of fact, by solving the differential inequality in Assumptions \ref{assumptions_intensities_strong}, one sees that any $\Lipm$ within this new framework has a strict upper bound of the form of Example \ref{ex4}. In particular, for $\delmax=+\infty$, $\lim_{\delta\to+\infty}\delta\Lipm(\delta)=0$ is necessarily satisfied. 
\end{remark*}

Under Assumptions \ref{assumptions_intensities_strong}, the maximization of (\ref{eq1_argmax}) (for fixed $(t,n,i)$) can be replaced essentially by solving a contractive fixed point equation and flooring and capping the results at $\delmin$ and $\delmax$ respectively. To make this precise, even when the intensity functions are not defined beyond $\overline{(\delmin,\delmax)}$, let us set	 
$$
\Dpm_{i*}\defeq 
\begin{cases}
-U_\gamma^{-1}\left(\frac{\Lipm(\delmin)}{{\Lipm}'(\delmin)}\right) - \delmin &\mbox{ if }\delmin>-\infty\\
+\infty &\mbox{ if }\delmin=-\infty,
\end{cases}
\qquad 
{\Dpm_i}^*\defeq 
\begin{cases}
-U_\gamma^{-1}\left(\frac{\Lipm(\delmax)}{{\Lipm}'(\delmax)}\right) - \delmax &\mbox{ if }\delmax<+\infty\\
-\infty &\mbox{ if }\delmax=+\infty. 
\end{cases}
$$
The following result shows how, up to constraints, the optimal spreads are given by a first term that maximizes the ``expected" utility of the MM's instantaneous margin (see Remark \ref{expected_margin}), plus an additional risk adjustment taking into account the inventory held and the prospect of the market shifting. Note how the first term depends on ${\Lipm}'/\Lipm$, the percentage sensitivity of the liquidity to spread changes.

\begin{proposition}
\label{prop_flooring_capping}
Under Assumptions \ref{assumptions_intensities_strong}, the functions $\overline{\delpm_1},\dots,\overline{\delpm_k}$ of Theorem \ref{verif_full_info} are uniquely characterized by
\begin{equation}
\label{flooring_capping}
\overline{\delpm_i}(d)= \delmax\mbox{ if }\ d<{\Dpm_i}^*,\qquad\overline{\delpm_i}(d)= \delmin\mbox{ if }\ d>\Dpm_{i*}
\end{equation}
and if $d\in\overline{({\Dpm_i}^*,\Dpm_{i*})}$, then $\overline{\delpm_i}(d)$ is the unique solution of the fixed point equation
\begin{equation}
\label{eq2_fixed_point}
\overline{\delpm_i}(d)= -U_\gamma^{-1}\left(\frac{\Lipm(\overline{\delpm_i}(d))}{{\Lipm}'(\overline{\delpm_i}(d))}\right) - d.
\end{equation}
Additionally, the Hamiltonians of equation (\ref{HJB_Theta_full_info}) can be expressed as
\begin{equation*}
\label{H_constrained_explicit}
\Hpm_i(d)=
\begin{cases}
\hpm_i(\delmax,d)\qquad\qquad\qquad\qquad\mbox{ if }d\leq{\Dpm_i}^*\\
-\frac{{\Lipm}^2(\widehat\delpm(d))}{{\Lipm}'(\widehat\delpm(d))-\gamma\Lipm(\widehat\delpm(d))}\qquad\,\mbox{ if }{\Dpm_i}^*< d<\Dpm_{i*}\\
\hpm_i(\delmin,d)\qquad\qquad\qquad\qquad\mbox{ if }d\geq\Dpm_{i*}.
\end{cases}
\end{equation*}
\end{proposition}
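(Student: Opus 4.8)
The plan is to reduce the maximisation of $\hpm_i(\,\cdot\,,d)=\Lipm(\,\cdot\,)U_\gamma(\,\cdot\,+d)$ over $\overline{(\delmin,\delmax)}$ to a monotonicity statement about a single auxiliary function. Fix $1\le i\le k$ and one of the two signs, write $g(\delta):=\hpm_i(\delta,d)$ (a $C^1$ function), and put $w(\delta):=\Lipm(\delta)/{\Lipm}'(\delta)$, which is $C^1$ and strictly negative by Assumptions~\ref{assumptions_intensities_strong}. I would introduce on $\overline{(\delmin,\delmax)}$ the function
\[
\chi(\delta):=\delta+U_\gamma^{-1}\big(w(\delta)\big),
\]
well defined because $w(\delta)<0<1/\gamma$. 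Two claims then do all the work: \textbf{(a)} $\chi$ is a strictly increasing $C^1$ bijection from $\overline{(\delmin,\delmax)}$ onto $\overline{(-\Dpm_{i*},\,-{\Dpm_i}^*)}$; and \textbf{(b)} $\sgn g'(\delta)=\sgn\big(-d-\chi(\delta)\big)$ for every $\delta$. Granting these, $g$ is strictly increasing on $\{\chi<-d\}$ and strictly decreasing on $\{\chi>-d\}$, hence unimodal, so it admits a unique maximiser $\overline{\delpm_i}(d)$; and by (a), $d<{\Dpm_i}^*$ forces $\chi<-d$ throughout so $\overline{\delpm_i}(d)=\delmax$, $d>\Dpm_{i*}$ forces $\chi>-d$ throughout so $\overline{\delpm_i}(d)=\delmin$, and for ${\Dpm_i}^*\le d\le\Dpm_{i*}$ the maximiser is the unique $\delta$ with $\chi(\delta)=-d$, which is precisely the fixed point equation~(\ref{eq2_fixed_point}). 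When $\delmin=-\infty$ (resp. $\delmax=+\infty$) one reads off from the definitions that $\Dpm_{i*}=+\infty$ (resp. ${\Dpm_i}^*=-\infty$), so the corresponding flooring (resp. capping) regime in~(\ref{flooring_capping}) is vacuous, as it should be.

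To prove (a) I would differentiate, $\chi'(\delta)=1+(U_\gamma^{-1})'\big(w(\delta)\big)\,w'(\delta)$, and estimate each factor. Since $(U_\gamma^{-1})'(y)=1/(1-\gamma y)$ and $\gamma w(\delta)\le0$, one has $0<(U_\gamma^{-1})'(w(\delta))\le1$; and $w'(\delta)=1-\Lipm(\delta){\Lipm}''(\delta)/\big({\Lipm}'(\delta)\big)^2>1-c$ by Assumptions~\ref{assumptions_intensities_strong}. If $w'(\delta)\ge0$ then $\chi'(\delta)\ge1$; if $w'(\delta)<0$ (which can occur only when $c>1$) then $(U_\gamma^{-1})'(w(\delta))\,w'(\delta)>1-c$ and hence $\chi'(\delta)>2-c>0$, using $c<2$. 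Thus $\chi'\ge\min\{1,2-c\}>0$, so $\chi$ is strictly increasing with derivative bounded below by a positive constant; this forces $\chi(\delta)\to\pm\infty$ as $\delta\to\delmax=+\infty$ (resp. $\delta\to\delmin=-\infty$). Finally, a direct substitution of the boundary values into the definitions of $\Dpm_{i*}$ and ${\Dpm_i}^*$ gives $\chi(\delmin)=-\Dpm_{i*}$ and $\chi(\delmax)=-{\Dpm_i}^*$, and together with these limits (a) follows.

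To prove (b), writing $G_\gamma(c):=U_\gamma(c)/U_\gamma'(c)$, I would factor the derivative as
\[
g'(\delta)={\Lipm}'(\delta)\,U_\gamma'(\delta+d)\,\big[\,G_\gamma(\delta+d)+w(\delta)\,\big].
\]
Here $G_\gamma$ is strictly increasing and $G_\gamma^{-1}\big(-w(\delta)\big)=-U_\gamma^{-1}\big(w(\delta)\big)$ (legitimate, as $-w(\delta)>0$ lies in the range of $G_\gamma$); since ${\Lipm}'(\delta)<0<U_\gamma'(\delta+d)$, the sign of $g'(\delta)$ is that of $-\big(G_\gamma(\delta+d)+w(\delta)\big)$, i.e. that of $-\big(\delta+d-G_\gamma^{-1}(-w(\delta))\big)=-d-\chi(\delta)$, which is (b). It remains to compute the Hamiltonian: in the flooring/capping regimes $\Hpm_i(d)$ equals $\hpm_i(\delmin,d)$ resp. $\hpm_i(\delmax,d)$ by the above, while in the interior regime, writing $\bar\delta:=\overline{\delpm_i}(d)$, the first-order condition $g'(\bar\delta)=0$ combined with the identity $U_\gamma'(c)=1-\gamma U_\gamma(c)$ gives $U_\gamma(\bar\delta+d)=\Lipm(\bar\delta)/\big(\gamma\Lipm(\bar\delta)-{\Lipm}'(\bar\delta)\big)$, so multiplying by $\Lipm(\bar\delta)$ yields the stated closed form $\Hpm_i(d)=-{\Lipm}^2(\bar\delta)/\big({\Lipm}'(\bar\delta)-\gamma\Lipm(\bar\delta)\big)$.

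I expect the only real obstacle to be claim (a): the strict positivity of $\chi'$ is exactly where the precise algebraic shape of Assumptions~\ref{assumptions_intensities_strong}, and in particular the \emph{strict} inequality $c<2$, are indispensable (the bound $(U_\gamma^{-1})'(w)\le1$ has to be played off against $w'>1-c$ in the case $w'<0$), together with the mildly fiddly bookkeeping of matching $\chi(\delmin)$, $\chi(\delmax)$ with $-\Dpm_{i*}$, $-{\Dpm_i}^*$ in the possibly unbounded cases $\delmin=-\infty$ or $\delmax=+\infty$. Everything downstream of (a)--(b) is routine.
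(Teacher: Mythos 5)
Your proof is correct and takes essentially the same route as the paper: both hinge on showing that $\fipm(\delta,d)=\delta+U_\gamma^{-1}\!\left(\Lipm(\delta)/{\Lipm}'(\delta)\right)+d$ (your $\chi(\delta)+d$) is strictly increasing in $\delta$ and that $\sgn\bigl(\partial_\delta \hpm_i\bigr)=-\sgn\fipm$, then deducing the flooring/capping, the fixed-point characterization, and the closed form for $\Hpm_i$ from the first-order condition. The only difference is that the paper defers the two key computations — the strict monotonicity of $\fipm$ and the sign identity — to \cite[Lemma~3.1]{G} (and likewise the unbounded-endpoint cases), whereas you carry them out explicitly, including the $w'\ge 0$ versus $w'<0$ split that exhibits exactly where the hypothesis $c<2$ is needed; that part of your write-up is a genuinely useful amplification, but it is the same argument, not a different one.
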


\begin{proof}
For each $1\leq i\leq k$, define 
$$
\fipm(\delta,d)=\delta+U_\gamma^{-1}\left(\frac{\Lipm(\delta)}{{\Lipm}'(\delta)}\right)+d,\quad\mbox{with }\delta\in\overline{(\delmin,\delmax)},\ d\in\mathbb R
.$$
By Assumptions \ref{assumptions_intensities_strong} and straightforward computations (see computations, e.g., in \cite[Lemma 3.1]{G}), one verifies that $\sgn\big(\frac{\partial\hpm_i}{\partial\delta}\big)=-\sgn(\fipm)$ and $\fipm$ is strictly increasing on each variable.\footnote{$\sgn$ denotes the sign function, with $\sgn(0)\defeq 0$.} It follows that for any $d<{\Dpm_i}^*$ and $\delta<\delmax$ (resp. $d>\Dpm_{i*}$ and $\delta>\delmin$), $\fipm(\delta,d)<\fipm(\delmax,{\Dpm_i}^*)=0$ (resp. $\fipm(\delta,d)>\fipm(\delmin,\Dpm_{i*})=0$), which proves (\ref{flooring_capping}). 

In the same way, if $d\in({\Dpm_i}^*,\Dpm_{i*})$ and $-\infty<\delmin<\delmax<+\infty$, then $\fipm(\delmin,d)<0$ and $\fipm(\delmax,d)>0$. Hence, the continuity of $\fipm(\cdot,d)$ implies (\ref{eq2_fixed_point}) (the uniqueness of the solution being due to the strict monotonicity of $\fipm(\cdot,d)$). The cases of unconstrained spreads are proved as in \cite{G}. 

Lastly, the cases $d={\Dpm_i}^*>-\infty$ and $d=\Dpm_{i*}<+\infty$ follow in the same manner; and the new expressions for the Hamiltonians are immediate, putting $\Hpm_i(d) = \hpm_i\big(\overline{\delpm_i}(d),d\big)$.
\end{proof}

As done in \cite[Lemma 3.1]{G}, (\ref{eq2_fixed_point}) can be replaced by the explicit formula 
$$
\overline{\delpm_i}(d) = (\Lipm)^{-1}\left(\gamma \Hpm_i(d) -{\Hpm_i}'(d) \right), 
$$
but the computation of $(\Lipm)^{-1}$, $\Hpm_i$ and ${\Hpm_i}'$ still has to be carried out numerically, in general. We state now a few simplifications that arise in some particular cases, i.e., when the intensities are exponential or when $\Nmax=\infty$, $\gamma=\zeta=0\equiv\ell$. These are all derived by straightforward substitution. We refer to \cite[Sect.4]{G} for some asymptotic approximations when $t<<T$ in the one regime case, with unconstrained spreads and $\Nmax<\infty$. 

\begin{corollary}
\label{coro_spreads_vanilla_model}
If $\Nmax=+\infty$ and $\gamma=\zeta=0\equiv\ell$, then 
\begin{equation}
\overline{\delpm}(t,i)= \delmax\mbox{ if }\ \mp\mu(T-t)<{\Dpm_i}^*,\qquad\overline{\delpm_i}(d)= \delmin\mbox{ if }\ \mp\mu(T-t)>\Dpm_{i*},
\end{equation}
and if $\mp\mu(T-t)\in\overline{({\Dpm_i}^*,\Dpm_{i*})}$, then $\overline{\delpm}(t,i)$ is the unique solution of the fixed point equation
\begin{equation}
\label{eq3_fixed_point}
\overline{\delpm}(t,i) = -\frac{\Lipm(\overline{\delpm}(t,i))}{{\Lipm}'(\overline{\delpm}(t,i))} \pm \mu (T-t).
\end{equation}
If additionally $\Lipm(\delta)=\apm_i e^{-\bpm_i\delta}$, with $\apm_i,\bpm_i>0$ for each $1\leq i\leq k$, then 
$$ \overline{\delpm}(t,i) = \delmin\vee\left(\frac{1}{\bpm_i}\pm\mu (T-t)\right)\wedge\delmax.$$
\end{corollary}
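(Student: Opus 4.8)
The plan is to obtain everything by specializing the general full‑information characterizations already established, namely Proposition \ref{prop_flooring_capping} (the flooring/capping and fixed‑point description of the maximizers $\overline{\delpm_i}$) together with Proposition \ref{Theta_unconstrained_inventory} (the explicit form of $\Theta$ in the fully risk‑neutral, unconstrained regime). First I would recall that under $\Nmax=+\infty$ and $\gamma=\zeta=0\equiv\ell$, Proposition \ref{Theta_unconstrained_inventory} gives $\Theta(t,n,i)=\mu n(T-t)+\Phi_i(t)$, so the ``jump differences'' appearing in the feedback formula of Theorem \ref{verif_full_info} are
\[
\Theta(t,n-1,i)-\Theta(t,n,i)=-\mu(T-t),\qquad \Theta(t,n+1,i)-\Theta(t,n,i)=+\mu(T-t),
\]
i.e. $\Theta(t,n\mp 1,i)-\Theta(t,n,i)=\mp\mu(T-t)$, independently of $n$. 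Noting that these differences are bounded by $C\defeq|\mu|T$ (so that Theorem \ref{verif_full_info} applies with domain $[-C,C]$ for the Borel selectors), the optimal ask/bid spreads reduce to $\overline{\delpm}(t,i)=\overline{\delpm_i}\big(\mp\mu(T-t)\big)$, where $\overline{\delpm_i}$ is the selector characterized in Proposition \ref{prop_flooring_capping}.

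Next I would substitute $d=\mp\mu(T-t)$ into Proposition \ref{prop_flooring_capping} with $\gamma=0$, in which case $U_\gamma=U_0=\mathrm{Id}$ and hence $U_\gamma^{-1}=\mathrm{Id}$. The two conditions in (\ref{flooring_capping}) become exactly ``$\overline{\delpm}(t,i)=\delmax$ if $\mp\mu(T-t)<{\Dpm_i}^*$'' and ``$\overline{\delpm}(t,i)=\delmin$ if $\mp\mu(T-t)>\Dpm_{i*}$'', while in the interior range $\mp\mu(T-t)\in\overline{({\Dpm_i}^*,\Dpm_{i*})}$ the fixed‑point equation (\ref{eq2_fixed_point}) reads
\[
\overline{\delpm}(t,i)=-\frac{\Lipm\big(\overline{\delpm}(t,i)\big)}{{\Lipm}'\big(\overline{\delpm}(t,i)\big)}-\big(\mp\mu(T-t)\big)=-\frac{\Lipm\big(\overline{\delpm}(t,i)\big)}{{\Lipm}'\big(\overline{\delpm}(t,i)\big)}\pm\mu(T-t),
\]
which is precisely (\ref{eq3_fixed_point}); uniqueness is inherited from Proposition \ref{prop_flooring_capping} (and Assumptions \ref{assumptions_intensities_strong}).

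Finally, for the exponential intensities $\Lipm(\delta)=\apm_i e^{-\bpm_i\delta}$ (which satisfy Assumptions \ref{assumptions_intensities_strong}, cf.\ the remark following them) I would compute $-\Lipm(\delta)/{\Lipm}'(\delta)=1/\bpm_i$, a constant. The right‑hand side of (\ref{eq3_fixed_point}) is then independent of the unknown, so the interior solution is the explicit value $1/\bpm_i\pm\mu(T-t)$. It then remains to check that the three cases assemble into the single capped/floored formula: substituting the exponential intensity into the definitions of $\Dpm_{i*}$ and ${\Dpm_i}^*$ gives $\Dpm_{i*}=1/\bpm_i-\delmin$ and ${\Dpm_i}^*=1/\bpm_i-\delmax$, so $\mp\mu(T-t)<{\Dpm_i}^*$ is equivalent to $1/\bpm_i\pm\mu(T-t)>\delmax$ and $\mp\mu(T-t)>\Dpm_{i*}$ is equivalent to $1/\bpm_i\pm\mu(T-t)<\delmin$, with the capping value $\delmax$ and flooring value $\delmin$ matching each case. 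Hence all three pieces combine to $\overline{\delpm}(t,i)=\delmin\vee\big(1/\bpm_i\pm\mu(T-t)\big)\wedge\delmax$.

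I do not expect a genuine obstacle: the whole argument is substitution into already‑proven results. The only step requiring a bit of care is this last bookkeeping — verifying that the piecewise (interior / floor / cap) description collapses consistently to the $\min/\max$ expression, that is, that the thresholds $\Dpm_{i*},{\Dpm_i}^*$ obtained from the exponential intensity are exactly the values of $\mp\mu(T-t)$ at which $1/\bpm_i\pm\mu(T-t)$ leaves $[\delmin,\delmax]$, and that the degenerate endpoints $\delmin=-\infty$ or $\delmax=+\infty$ (where the corresponding $\vee$ or $\wedge$ is vacuous) are handled correctly.
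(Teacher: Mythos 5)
Your argument is correct and is exactly what the paper intends: the paper itself offers no proof for Corollary \ref{coro_spreads_vanilla_model} beyond the remark that the simplifications "are all derived by straightforward substitution," and you have carried out that substitution faithfully, specializing Proposition \ref{prop_flooring_capping} with $\gamma=0$ (so $U_\gamma=\mathrm{Id}$) and $d=\mp\mu(T-t)$ via Proposition \ref{Theta_unconstrained_inventory}, and then verifying that for exponential intensities the thresholds $\Dpm_{i*}=1/\bpm_i-\delmin$, ${\Dpm_i}^*=1/\bpm_i-\delmax$ reproduce the $\delmin\vee(\cdot)\wedge\delmax$ formula. No gaps.
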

In other words, if $\Nmax=\infty$ and $\gamma=\zeta=0\equiv\ell$, then solving (\ref{HJB_Theta_full_info}) is no longer necessary and one only needs to solve the fixed point equations (\ref{eq3_fixed_point}). Further, the MM's spreads do not depend on $n$. This is to be expected, since in this case she is neutral to all types of inventory risks and the terminal execution cost is neglected. However, she does need to re-adjust for changes in the regime as these impact on her probability of getting orders. If additionally, the orders intensities are exponential, then the spreads are computed straightforwardly avoiding the need of any numerical scheme. Note also how this simple model makes evident a second component of the optimal spreads through equation (\ref{eq3_fixed_point}): a drift adjustment by which the MM takes into account the overall tendency of the asset's price.  

Now we look at the results obtained for exponential intensities in general, and in particular for $\Nmax<+\infty$. Equation (\ref{eq2_fixed_point}) becomes an explicit formula in this case, as the percentage liquidity sensitivities,  ${\Lipm}'/\Lipm$, are constant.
\begin{corollary}
If $\Lipm(\delta)=\apm_i e^{-\bpm_i\delta}$, with $\apm_i,\bpm_i>0$ for each $1\leq i\leq k$, then
$$\overline{\delpm}(t,n,i) =\delmin\vee \left( -U_\gamma^{-1}\left(-1/\bpm_i\right)-\big(\Theta(t,n\mp 1,i)-\Theta(t,n,i)\big)\right)\wedge\delmax.$$
\end{corollary}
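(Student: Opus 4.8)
The plan is to derive the formula as a direct specialization of Proposition \ref{prop_flooring_capping} (together with the feedback form of Theorem \ref{verif_full_info}) to the exponential case, exploiting the fact that for exponential intensities the percentage sensitivity ${\Lipm}'/\Lipm$ is constant, so that the contractive fixed-point equation (\ref{eq2_fixed_point}) degenerates into a closed-form expression.

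First I would check that $\Lipm(\delta)=\apm_i e^{-\bpm_i\delta}$ with $\apm_i,\bpm_i>0$ falls under Assumptions \ref{assumptions_intensities_strong}, so that Proposition \ref{prop_flooring_capping} is applicable: clearly $\Lipm\in C^2(\overline{(\delmin,\delmax)})$, ${\Lipm}'(\delta)=-\bpm_i\Lipm(\delta)<0$, and ${\Lipm}''(\delta)=\bpm_i^2\Lipm(\delta)$ gives $\Lipm{\Lipm}''=({\Lipm}')^2<2({\Lipm}')^2$, i.e. the differential inequality holds with $c=1$. Next I would compute the relevant quantities. Since $\Lipm(\delta)/{\Lipm}'(\delta)=-1/\bpm_i$ does not depend on $\delta$, equation (\ref{eq2_fixed_point}) reads $\overline{\delpm_i}(d)=-U_\gamma^{-1}(-1/\bpm_i)-d$, already solved; moreover the thresholds of Proposition \ref{prop_flooring_capping} become ${\Dpm_i}^*=-U_\gamma^{-1}(-1/\bpm_i)-\delmax$ and $\Dpm_{i*}=-U_\gamma^{-1}(-1/\bpm_i)-\delmin$, with the degenerate conventions ${\Dpm_i}^*=-\infty$ when $\delmax=+\infty$ and $\Dpm_{i*}=+\infty$ when $\delmin=-\infty$. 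Substituting these into (\ref{flooring_capping}) and noting that $d<{\Dpm_i}^*$ is the same as $-U_\gamma^{-1}(-1/\bpm_i)-d>\delmax$, while $d>\Dpm_{i*}$ is the same as $-U_\gamma^{-1}(-1/\bpm_i)-d<\delmin$, the three regimes of (\ref{flooring_capping})--(\ref{eq2_fixed_point}) combine into the single expression $\overline{\delpm_i}(d)=\delmin\vee(-U_\gamma^{-1}(-1/\bpm_i)-d)\wedge\delmax$, valid for all $d$. Finally, taking $d=\Theta(t,n\mp1,i)-\Theta(t,n,i)$ as in the feedback representation of Theorem \ref{verif_full_info} (equivalently, in the notation of (\ref{eq1_argmax})) yields the stated formula for $\overline{\delpm}(t,n,i)$.

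There is essentially no hard step here: the whole argument is substitution into already-established results. The only point needing a line of care is the uniform bookkeeping of the degenerate constraints $\delmin=-\infty$ and/or $\delmax=+\infty$, where one must check that the corresponding floor/cap simply drops out, consistently with the $\vee$/$\wedge$ conventions; and the sanity check that $U_\gamma^{-1}(-1/\bpm_i)$ is well defined (because $-1/\bpm_i<0<1/\gamma$ lies in the range of $U_\gamma$) and negative, so that the ``margin'' term $-U_\gamma^{-1}(-1/\bpm_i)$ is positive. For $\gamma=0$, using $U_0^{-1}=Id$ recovers $\overline{\delpm}(t,n,i)=\delmin\vee(1/\bpm_i-(\Theta(t,n\mp1,i)-\Theta(t,n,i)))\wedge\delmax$, in agreement with Corollary \ref{coro_spreads_vanilla_model}.
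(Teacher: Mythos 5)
Your proposal is correct and follows exactly the paper's route, which explicitly states that this corollary is ``derived by straightforward substitution'' into Proposition~\ref{prop_flooring_capping} using the constancy of $\Lipm'/\Lipm=-\bpm_i$ for exponential intensities. One tiny bookkeeping slip: since $\Lipm{\Lipm}''=({\Lipm}')^2$, the strict inequality $\Lipm{\Lipm}''<c({\Lipm}')^2$ in Assumptions~\ref{assumptions_intensities_strong} requires $c>1$ (any $c\in(1,2)$ works), not $c=1$ as you wrote; this does not affect the argument.
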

\noindent This means that for exponential intensities in general, it is no longer necessary to solve any fixed point equations but only the system of ODEs (\ref{HJB_Theta_full_info}). The latter still needs to be solved numerically in general. (Alternatively, see \cite{G} for some asymptotic approximations.) 

\begin{remark*}
Apart from the case $\Nmax=+\infty,\ \gamma=\zeta=0\equiv\ell$, a noteworthy scenario in which (\ref{HJB_Theta_full_info}) can be further simplified to a linear system of ODEs (with constant coefficients) is when $k=1,\ \Nmax<+\infty,\ \delmin=-\infty,\ \delmax=+\infty$ and $\Lpm(\delta)=a e^{-b\delta}$, with $a,b>0$. The reduction is achieved via the transformation 
$
\Theta(t,n)=\frac{1}{b}\log\Psi(t,n).
$
See \cite{GLFT} for more details.
\end{remark*}

\section{Numerical analysis}
\label{s:numerics}
\setcounter{equation}{0}
\setcounter{subsection}{0}

In this section we present our numerical results, focusing on the difference in optimal behaviours between the partial and full information frameworks, and the intuition behind the filter. To exemplify our findings in a concrete manner, while keeping the presentation as simple as possible, we will assume throughout this section that: 
\begin{itemize}
\item The MM has risk aversion parameter $\gamma=0$.
\item There are only two possible states: state 1 represents a ``bad" regime with low liquidity taken by clients, and state 2 a ``good" regime with high liquidity.
\item The transition rate matrix $Q$ is constant.
\item The intensities are symmetric, proportional and exponential, i.e., $\Lpm_1(\delta)=\Lambda_1(\delta)=ae^{-b\delta}$ and $\Lpm_2(\delta)=\Lambda_2(\delta)=m\Lambda_1(\delta),\mbox{ with }a,b>0,\ m>1$.
\end{itemize}
The latter assumption allows us to perform the optimizations in equation (\ref{HJB_Theta}) analytically. Practically, proportional intensities mean that while there is more active trading on the good regime, the way in which the clients react to movements in the spreads remains unaffected. As in \cite{CJ}, we will allow for the three type of penalties to manage inventory risk: constraints on the maximum long and short positions, accumulated inventory penalty and a quadratic (possibly negligible) terminal penalty (or cost) for the MM. That is, 
\begin{itemize}
\item $\Nmax<+\infty,\ \zeta\geq 0\mbox{ and }\ell(n)=cn^2$, for some $c\geq 0$.
\end{itemize}

Let us write $\pi\defeq \pi_1$ for the conditional probability of being in the bad regime given the observable information. Note that $\pi$ is a scalar in this section, since $\pi_2=1-\pi_1$ neglects the need of the additional variable. The PIDE (\ref{HJB_Theta}) at the point $(t,n,\pi)$ reads:
\begin{equation}
\label{HJB_numerics}
0 = \theta_t + \mu n - \frac{1}{2}\sigma^2 n^2 \zeta+ 2\hat q(\pi)\theta_\pi +\frac{a}{b}\hat m(\pi)\left(\mathbbm 1_{\{n<\Nmax\}} e^{-b\widehat\delm} + \mathbbm 1_{\{n>-\Nmax\}} e^{-b\widehat\delp} \right),
\end{equation}
with terminal condition $\theta(T,n,\pi) = -cn^2$ and partial information optimal spreads given by
\begin{equation}
\label{partial_info_optimal_spreads}
\widehat\delpm(t,n,\pi)=\frac{1}{b} - 2\frac{w(\pi)}{\hat m(\pi)}\theta_\pi(t,n,\pi) - \big(\theta(t,n\mp 1,\pi/\hat m(\pi))-\theta(t,n,\pi)\big),\mbox{ where:}
\end{equation}
\begin{enumerate}[(i)]
\item $\hat q(\pi) = q^{11}\pi + q^{21}(1-\pi)$ is the observable transition rate to the bad regime.
\item $\hat m(\pi) = \pi + (1-\pi)m$ is the observable intensity increase from the bad regime (as a ratio).
\item $w(\pi)=(m-1)\pi(1-\pi)$ is the observable variance, of the square root, of the percentage intensity increase from the bad regime; i.e., a measure of observable order flow volatility.
\end{enumerate}
The previous equations are valid for $\delmin\ll 0\ll\delmax$, or more precisely, for $\delmin,\delmax$ such that $\delmin\leq\widehat\delpm\leq\delmax$ holds true over the whole domain. Otherwise, one needs to floor and cap the optimal spreads and change the Hamiltonians accordingly, as done in Proposition \ref{prop_flooring_capping}. 

A finite differences scheme of simple implementation to solve (\ref{HJB_numerics}) consists of reversing time and using an explicit upwind Euler scheme over a uniform grid for $(t,\pi)$ and $n=-\Nmax,-\Nmax+1,\dots,\Nmax$. The terms of the form $\theta(t,n\mp 1,\pi/\hat m(\pi))$, where $\pi/\hat m(\pi)$ will typically fall outside of the grid, can be approximated by linear interpolation as in \cite{DFo}. The limiting equations can be used for $\pi\to 0^+$ and $\pi\to 1^-$ provided that $-q^{11}>0$, which we assume henceforth. Such a scheme can be shown to be consistent, stable and monotone under an appropriate CFL condition.
In light of equation (\ref{HJB_Theta}) satisfying a comparison principle (see \cite[Thm.5.3]{CEFS} and why it holds for our model in the proof of Theorem \ref{main_theorem_2}), we know the scheme converges to the unique continuous viscosity solution \cite{BS}, and we can recover the expected penalized P\&L of the MM (Theorems \ref{main_theorem_1} and \ref{main_theorem_2}). The optimal strategy to be followed by the MM is then given in feedback form by $\big(\widehat\delm(t,N_{t^-},\Pi_{t^-}),\widehat\delp(t,N_{t^-},\Pi_{t^-})\big)$.\footnote{As previously mentioned, these are technically only candidates for optimal (or $\epsilon$-optimal) strategies. We do not rigorously prove their optimality character here, but merely note that well known results of discrete time dynamic programming, together with the convergence of the discrete solutions of (\ref{HJB_numerics}) to the analytical one, suggest that they can be safely used as such.}

We will focus our attention on the optimal ask spread, with analogous observations holding for the bid spread. The parameter values in Table \ref{table_parameters} were used for all the experiments presented in this section. For those present in the classical one regime models, we have chosen values used in previous works (in the $\gamma=0$ case \cite{CJ}) to make the comparison clearer. In particular, the value of $c$ will be taken as either $0$ or $0.01$, to be further specified in each experiment. The time horizon will always be the one displayed on the corresponding axis. Note that we work in a symmetric market, which justifies analyzing one side only.

\begin{table}[h!]
\centering
 \begin{tabular}{||c |c |c |c |c |c |c |c |c |c |c ||} 
 \hline
 Parameter & $\mu$ & $\sigma$ & $\zeta$ & $a$ & $b$ & $\Nmax$ & m & $-q^{11}=q^{21}$ & $-\delmin=\delmax$ \\ [0.5ex] 
 \hline
 Value & 0 & 0.1 & 0.1 & 2 & 25 & 3 & 5 & 5 & 10 \\
 \hline
\end{tabular}
\caption{Standing parameter values for numerical tests presented.}
\label{table_parameters}
\end{table}
\subsection{Comparing full and partial information optimal strategies}
Under the standing assumptions of this section, the full information equation (\ref{HJB_Theta_full_info}), for a function $\tilde\theta$, becomes:
\begin{equation}
\label{HJB_full_info_numerics}
0 = \tilde\theta_t + \mu n - \frac{1}{2}\sigma^2 n^2 \zeta+ \tilde q_i \left(\tilde\theta(t,n,1)-\tilde\theta(t,n,2)\right) +
\frac{a}{b} \tilde m_i \left(\mathbbm 1_{\{n<\Nmax\}} e^{-b\widetilde\delm} + \mathbbm 1_{\{n>-\Nmax\}} e^{-b\widetilde\delp} \right),
\end{equation}
with terminal condition $\tilde\theta(T,n,i) = -cn^2$ and full information optimal spreads given by
\begin{equation}
\label{full_info_optimal_spreads}
\widetilde\delpm(t,n,i)=\frac{1}{b} - \big(\tilde\theta(t,n\mp 1,i)-\tilde\theta(t,n,i)\big),\mbox{ where:}
\end{equation}
\begin{enumerate}[(i)]
\item $\tilde q_i =q^{11}\mathbbm 1_{\{1\}}(i)+q^{21}\mathbbm 1_{\{2\}}(i)$ is the effective transition rate to the bad regime.
\item $\tilde m_i = \mathbbm 1_{\{1\}}(i) + \mathbbm 1_{\{2\}}(i) m$ is the effective intensity increase from the bad regime (as a ratio).
\end{enumerate}
\noindent The previous equations are valid for $\delmin\ll 0\ll\delmax$, as in the partial information case.

Although similar, equation (\ref{HJB_full_info_numerics}) for the bad regime $i=1$ (resp. good regime $i=2$) is not the limiting equation of (\ref{HJB_numerics}) for $\pi\to 1^-$ (resp. $\pi\to 0^+$). Indeed, a MM with full information can expect to make a larger profit. Thus, in general, $\tilde\theta>\theta$ even in these extreme cases.\footnote{Our numerical findings were indeed consistent with this intuitive statement.} However, the corresponding optimal strategies do (at least approximately) agree, as Figures \ref{delp_full_vs_partial_running_penalty} and \ref{delp_full_vs_partial_all_penalties} show.

\begin{figure}[htb]
\hspace*{-.38cm}
	\includegraphics[scale=.33]{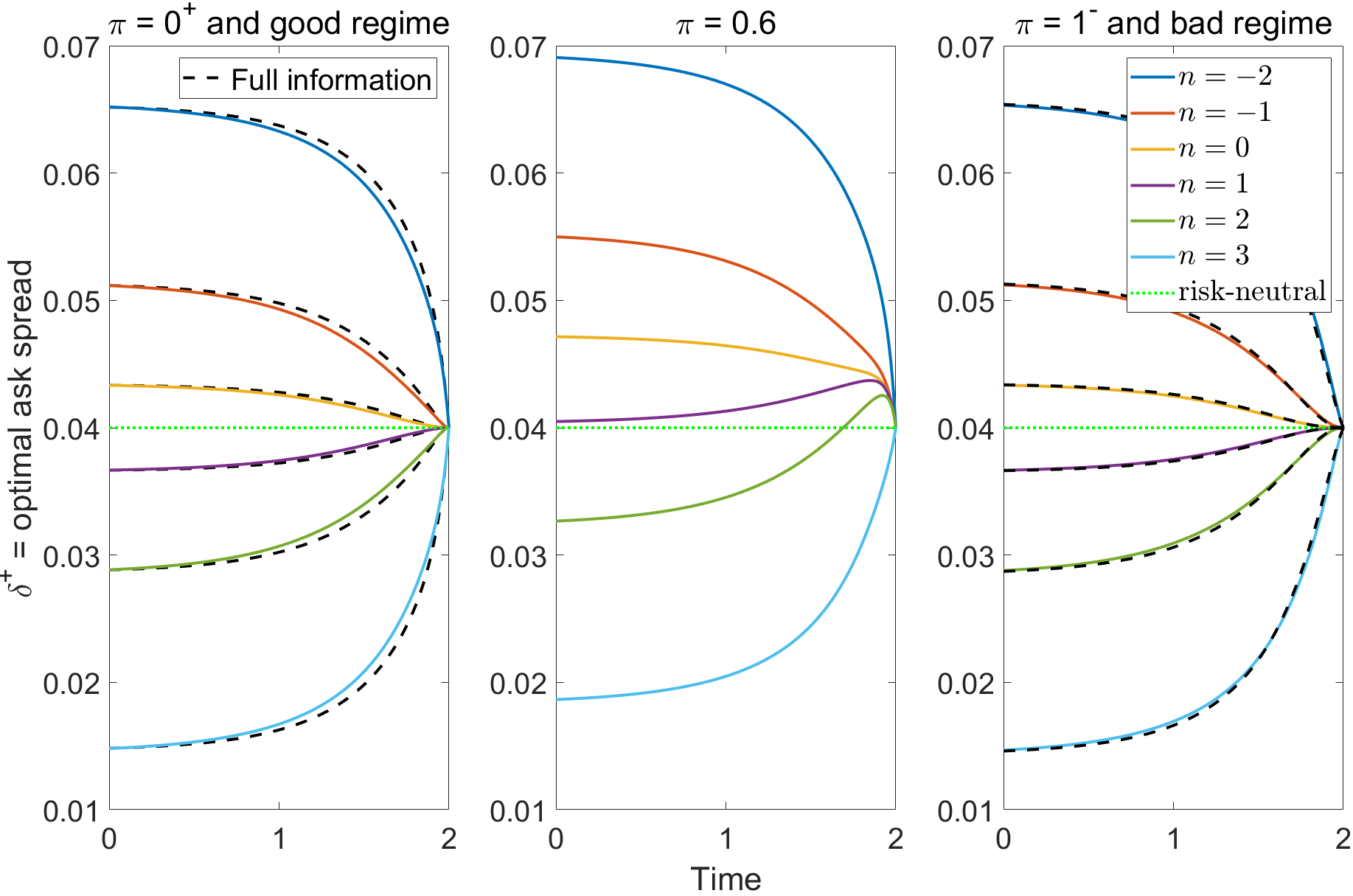}
	\caption{Optimal ask spread as a function of time, under full and partial information (dashed and solid lines resp.), and for different inventory and filter values. Inventory levels increase from top to bottom in all cases. Terminal execution cost is neglected ($c=0$).}
	\label{delp_full_vs_partial_running_penalty}
\end{figure}

Figure \ref{delp_full_vs_partial_running_penalty} shows the optimal ask spread under partial information as a function of time, for different inventory levels (solid lines). It also displays how it changes for different filter values ($\pi=0,\ 0.6\mbox{ and }1$ from left to right) and how it compares to the optimal ask spread under full information (dashed lines) in the good regime (left) and bad regime (right). The inventory levels increase from top to bottom in all cases, and the terminal execution cost is neglected ($c=0$). We have chosen to display $\pi=0.6$ due to the effect of the filter being more pronounced around this value; it results in an optimal spread which is, for $t\ll T$, between $6\%$ and $30\%$ higher (depending on the inventory level) than the corresponding ones under full information (a sizeable difference in practice). We will come back to this in Section \ref{section_simulations}. All other values show similar intermediate behaviours. Recall that when the inventory reaches the minimum $n=-\Nmax=-3$, the MM will not sell any more (either abstaining from quoting or giving a ``stub" quote) until her inventory increases again, which is why there is no spread plotted for this position. We see that some of the features already present in one regime models are preserved for two regimes, both under full and partial information; namely: 

\begin{itemize}

\item For $t\ll T$ the spread does not depend on time (approximately) and its asymptotic value becomes higher as the inventory decreases to the lower constraint $n=-\Nmax$. Indeed, for flat or short positions, any additional ask order will increase inventory risk, moving it closer to the minimum allowed and raising the accumulated inventory penalty. The MM manages this risk by increasing the spread, thus demanding a higher instantaneous profit and reducing the probability of getting executed. Similarly for long positions, the higher the exposure the lower the spread, as the MM seeks to unwind her inventory.

\item In the case of negligible terminal cost, the spread converges to a terminal value independent of the inventory. This is the optimal spread of a fully risk-neutral MM with negligible costs (Corollary \ref{coro_spreads_vanilla_model}), who only maximizes ``expected instantaneous margin". (This value is given by $\widehat\delp(T,\cdot,\cdot) = 1/b$, the reciprocal percentage sensitivity of the liquidity to spread changes; cf. Proposition \ref{prop_flooring_capping}). The reason being that as the time horizon approaches, it becomes less and less likely for the MM to get executed again. The penalty accumulated between $t$ and $T$ vanishes, and the risk of reaching the inventory constraints diminishes. The MM therefore takes a bit more risk and makes a last attempt at increasing her expected P\&L, either by increasing execution probability (for null or short positions) or increasing her instantaneous profit if executed (for long positions).
\end{itemize}

\noindent Nevertheless, the following differences can be observed:

\begin{itemize}
\item Under full information (and for symmetric markets), the spreads are always symmetric in the inventory with respect the risk-neutral value, i.e., $1/b - \widetilde\delp(\cdot,n,\cdot) = \widetilde\delp(\cdot,1-n,\cdot) - 1/b$, for $0<n\leq\Nmax$. However, under partial information this symmetry is broken, as the MM increases her spreads. This holds in fact for any $0<\pi<1$ and it is suggested by the comparison of equations (\ref{partial_info_optimal_spreads}) and (\ref{full_info_optimal_spreads}), since $\theta_\pi<0$ (an increasing probability of being in the bad regime lowers the expected P\&L; see argument of \cite[Lemma 3.3]{BL0}). As a result, the full information spreads are downward biased when the exact regime is unknown. Loosely speaking, this bias is approximately proportional to the product of $\theta_\pi$ (sensitivity of the expected P\&L to observable regime changes) and $w(\pi)$ (observable order flow volatility), and inversely proportional to the observable intensity increase $\hat m(\pi)$. Intuitively, a partially informed MM faces not only the risk of the market regime shifting but also uncertainty on the current state, and must increase her spreads accordingly. This is done by considering the cost of any observable changes on the P\&L and the fluctuations in the order flow, and discounting by liquidity increases.

\item The good and bad regimes differ in the rate of order flow at the reference price (i.e., the amplitude parameters $a$ and $am$ of the orders intensities). Consequently, one could erroneously think that the partial information strategy should essentially be a one regime strategy for some intermediate parameter, such as $a\hat m(\pi)$. However, in the full information framework, an increase in the liquidity taken by clients results in the asymptotic spreads (i.e., for $t\ll T$) moving closer to the risk-neutral value (see \cite{CJ, GLFT}). On the other hand, this is not true any more in the partial information case, due to the regime risk adjustment previously mentioned that shifts the spreads. 

\item Lastly, we remark that there is a distinguished change in behaviour close to expiry, with the spreads overshooting above the risk-neutral value for some inventories, before approaching it again. Intuitively, the overshooting is due to the added regime risk adjustment, and the convergence is due to the vanishing of additional inventory risks. (Recall that the MM's main concern becomes her terminal hedging cost, should there be one, and this does not depend on the market regime.) A somewhat similar effect can be found under full information in asymmetric markets \cite{CJ}.
\end{itemize}

Figure \ref{delp_full_vs_partial_all_penalties} serves the same comparison as Figure \ref{delp_full_vs_partial_running_penalty}, but including  in this case a terminal execution cost. The observations remain mostly the same, except for a change in the terminal behaviour of the spreads, just as in the classical one regime case. Here, the spreads diverge from the risk-neutral no-costs value instead of approaching it, as the MM makes a last instant attempt to reduce her hedging cost. We remark that in both examples, a higher penalty $\zeta$ increases the difference between the full information optimal spreads of the two regimes, but it does not qualitatively change the observations made.
\begin{figure}[htb]
\hspace*{-.38cm}
	\includegraphics[scale=.33]{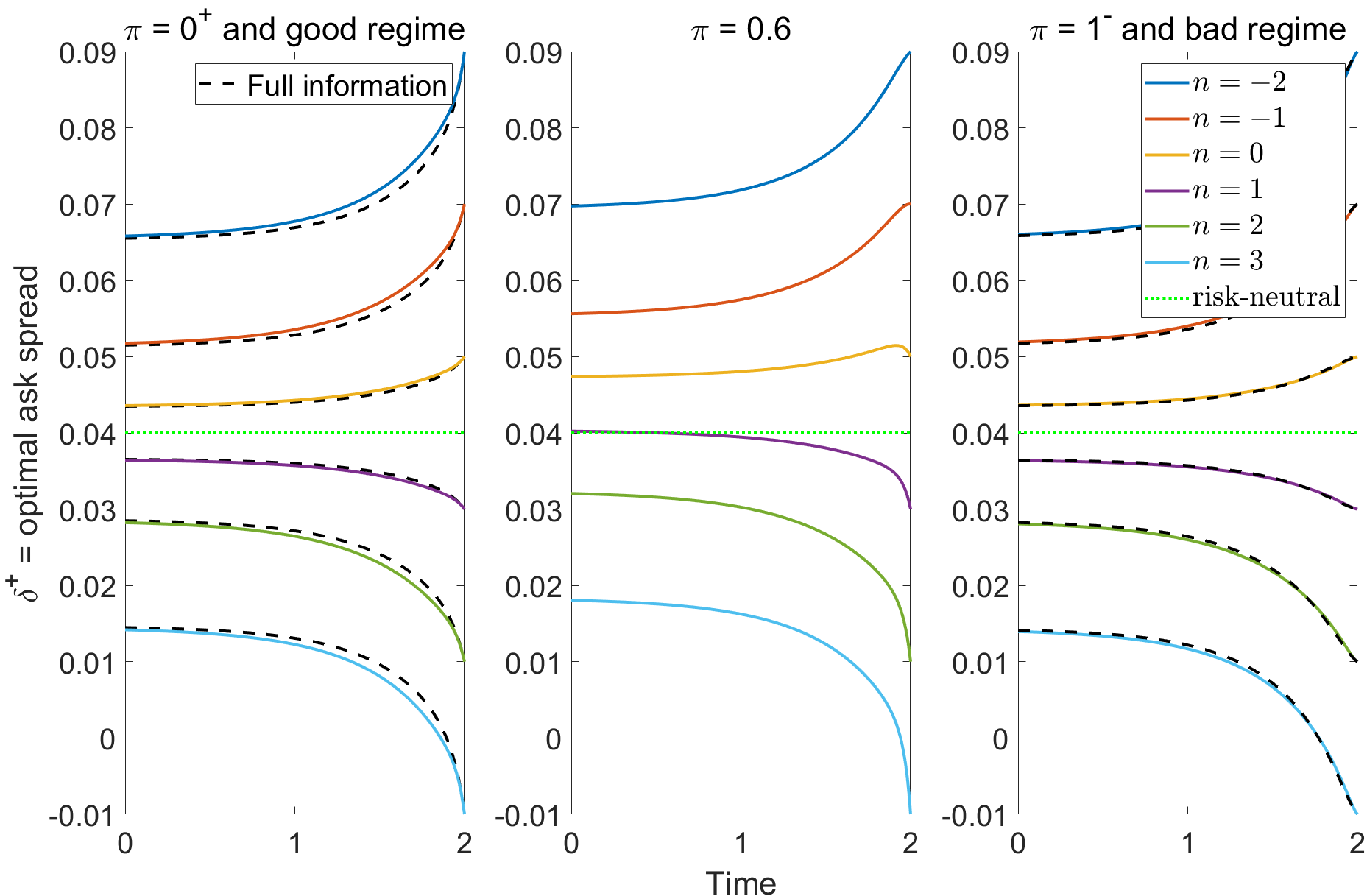}
	\caption{Optimal ask spread as a function of time, under full and partial information (dashed and solid lines resp.), and for different inventory and filter values. Inventory levels increase from top to bottom in all cases. Terminal execution cost: $\ell(n)=-cn^2$, with $c=0.01$.}
	\label{delp_full_vs_partial_all_penalties}
\end{figure}

Figure \ref{delp_partial_surf} shows the partial information optimal ask spread as a function of time and filter, for three different inventory positions: extreme short (left), flat (center) and extreme long (right), and no terminal penalty. We can see that the spread is concave on the filter, and the maximum concavity is reached for a null inventory and $\pi\approx 0.6$, gradually decreasing as the position becomes short or long. 
We had observed already in equation (\ref{partial_info_optimal_spreads}) that the partially informed MM increases her spreads to manage higher regime risk, and that this correction is approximately proportional to the expected P\&L sensitivity, $\theta_\pi$, times the observed order flow volatility $w(\pi)=(m-1)\pi(1-\pi)$. Heuristically, it seems reasonable that the change in spread should resemble the concavity of $w(\pi)$ for a risk averse MM, and that the cost of regime uncertainty is the highest for a flat position (since deviations from it increase in turn price exposure). The question of why the maximum variation is reached around $\pi\approx 0.6$ is once again differed to Section \ref{section_simulations}.

\begin{figure}[H]
\hspace*{-.1cm}
	\includegraphics[scale=.33]{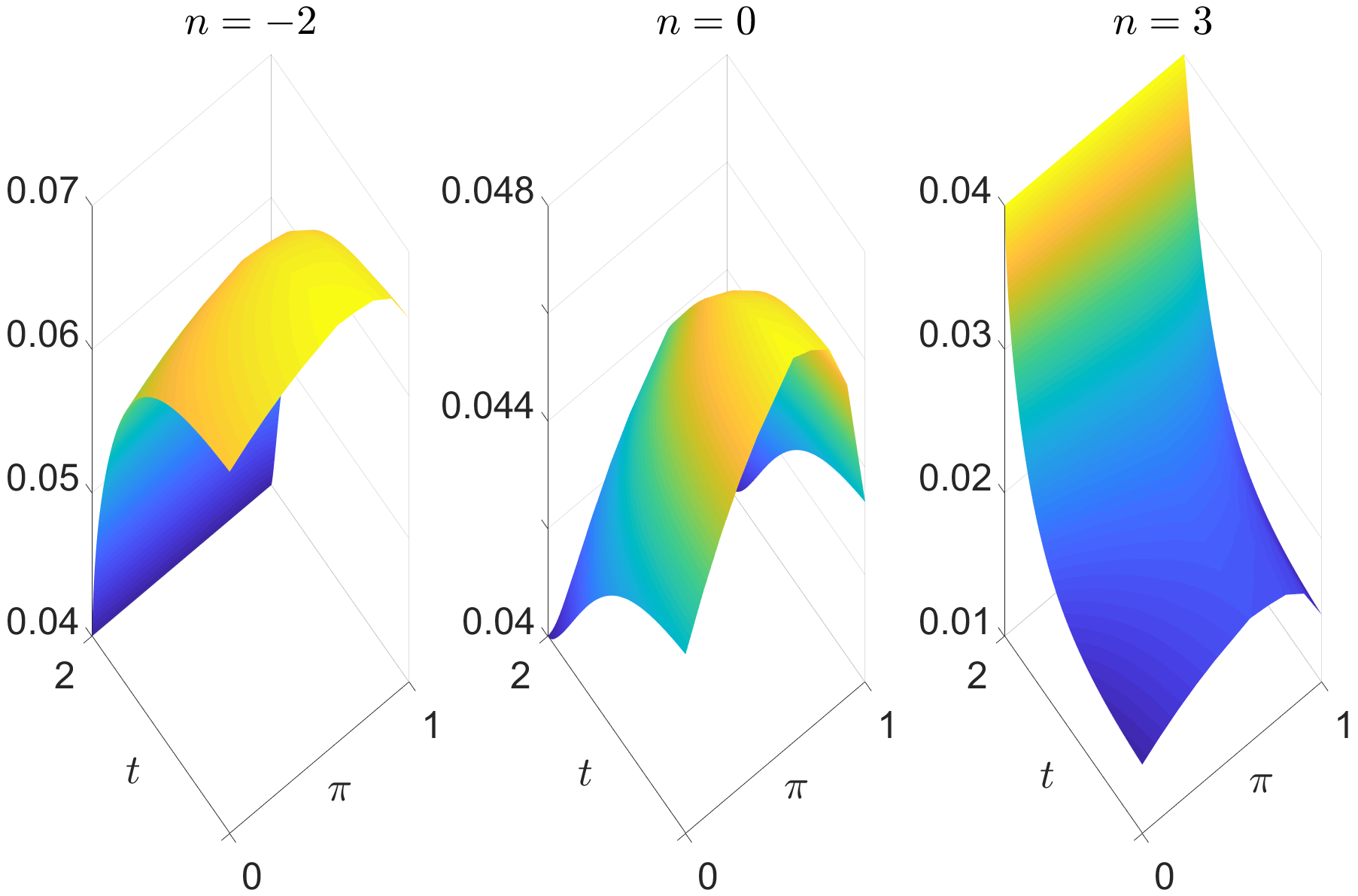}
	\caption{Optimal partial information ask spread as a function of time and filter (observable probability of bad regime), for different inventory levels. Terminal execution cost is neglected ($c=0$).}	
	\label{delp_partial_surf}
\end{figure}

\subsection{Sample paths and a closer look at the filter}
\label{section_simulations}

There are different ways to simulate point processes with stochastic intensity. A classical approach of simple implementation is the so-called \textit{thinning method} \cite{O}. This method is particularly well suited to our framework, as it can be combined with the filtering theory developed in Section \ref{s:filtering} by making use of the observable intensity of $N$ (see Theorem \ref{main_theorem_1} and \cite[Alg.3.2]{GKM}). The interested reader is referred to \cite{LTT} for optimizations in terms of the thinning bound. 

Figure \ref{simulations} shows four sample paths resulting from an optimal behaviour of the MM with incomplete information. They were obtained by jointly simulating the inventory $N$ (middle) starting from $n_0=0$, the filter $\Pi=\Pi^{\alpha,1}$ (bottom) starting from $\pi_0=0.5$, and the optimal strategy $\alpha=\big(\widehat\delm(t,N_{t^-},\Pi_{t^-}),\widehat\delp(t,N_{t^-},\Pi_{t^-})\big)$ (top), for $c=0$.

\begin{figure}[H]
\hspace*{-.7cm}
	\includegraphics[scale=.58]{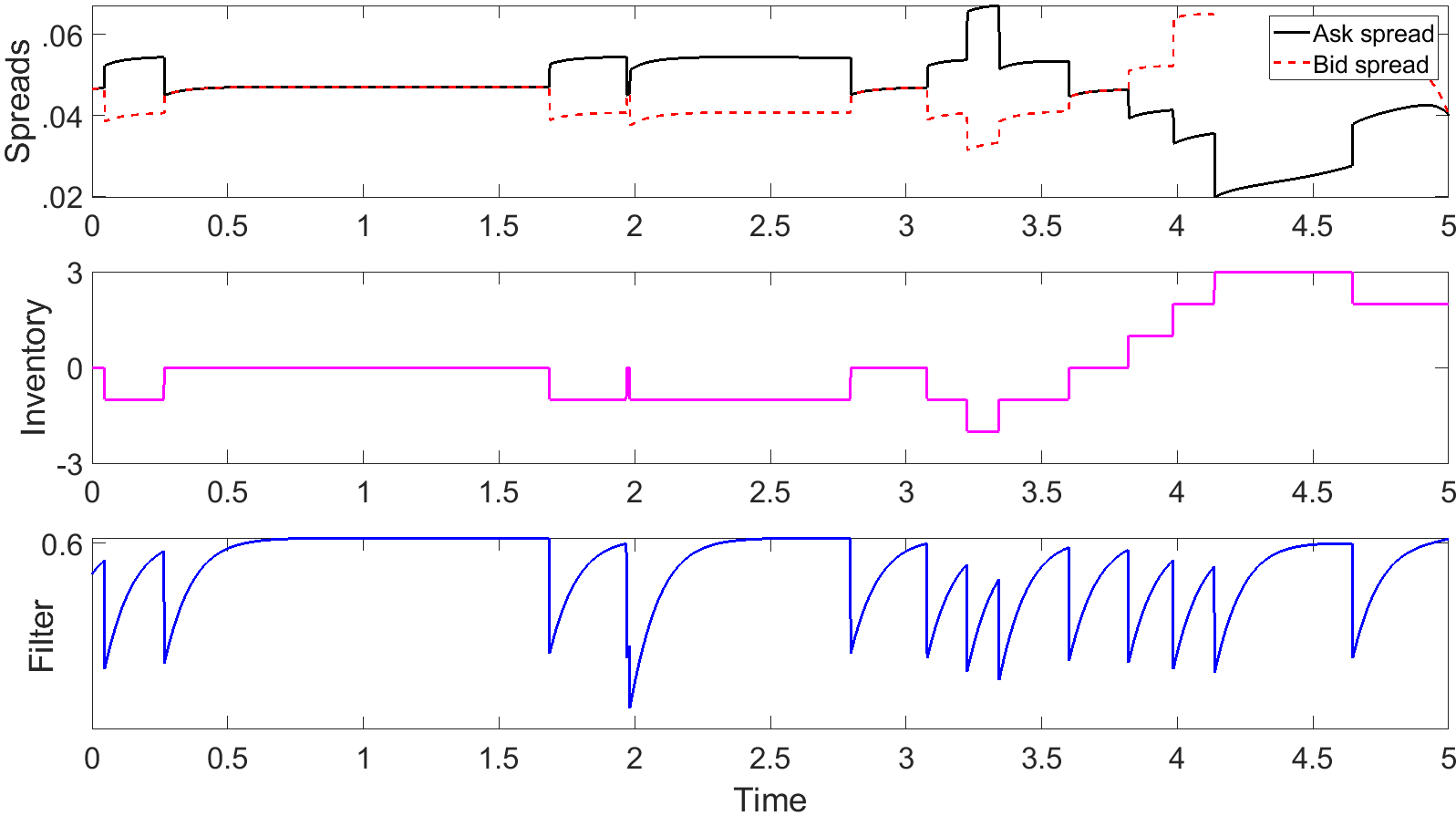}
	\caption{Sample paths of optimal partial information bid/ask spreads (first), inventory (second) and filter (third), with $c=0$, $n_0=0$ and $\pi_0=0.5$.}	
	\label{simulations}
\end{figure}

We want to analyze the behaviour of $\Pi$. Let us recall that $\Pi_t$ represents the probability of being in the bad regime (a slow market with low liquidity) given the information observed by the MM at time $t$. The MM makes her assessment based on the orders (both buy and sell) that she received so far, which is the same as looking at the evolution of her inventory. If we take a close look at what happens between any two consecutive orders, we see that there is a pattern that repeats itself. As time passes without the MM receiving any order, she deems more likely that the market is in the bad regime, and so $\Pi_t$ increases. If enough time goes by in this way, then $\Pi_t$ gets asymptotically closer to some preponderant ``equilibrium" value which is around 0.6. (Recall that this is the same value we have encountered before as having the greatest effect on the spreads.) But as soon as an order arrives, the MM revisits her probabilities, as being in the good regime of high liquidity seems a lot more likely. This is why we see that $\Pi_t$ jumps downwards with each trade (i.e., at the jump times of the inventory).

So what makes $\pi\approx 0.6$ so special? If we look at the SDE (\ref{Pi_equivalent}) governing the filter dynamics, it now reads:

\begin{equation}
d\Pi_t = \left(\hat q(\Pi_t) + w(\Pi_t)a\big(\mathbbm 1^-_t e^{-b\widehat{\delta_t^-}}+\mathbbm 1^+_t e^{-b\widehat{\delta_t^+}}\big)\right)dt + \Pi_{t^-}\left(\frac{1}{\hat m\big(\Pi_{t^-}\big)}-1\right)\Big(d\Nm_t + d\Np_t \Big).
\end{equation}

In between jumps, each path of $\Pi_t$ evolves according to an ODE. If enough time passes by without another order arriving, we can consider the asymptotic behaviour for $t\to T$. For $c=0$, this leads to
$$
\frac{d\Pi_t}{dt}\approx \hat q(\Pi_t) + w(\Pi_t)\frac{a}{e}\beta,\quad\mbox{with }\beta=1\mbox{ or }\beta=2,
$$
depending on the current inventory level. The right-hand side of this ODE is a concave parabola in $\Pi_t$ with a negative root and another one, $\pi^*$, between 0 and 1. For the parameter values in Table \ref{table_parameters}, it is either $\pi^*\approx 0.572$ for $\beta=1$, or $\pi^*\approx 0.636$ for $\beta=2$. Therefore, $\pi^*$ is an attractor (i.e., a stable equilibrium) for the filter, which explains the tendency of $\Pi_t$ towards 0.6 and the central role of this value. Indeed, as the filter approaches one of its attractors, the MM exhausts all the information she has and can make no further assessment on the state of the market until she is hit by another order. In a sense, this is when she faces the most uncertainty, and she increases her spreads accordingly to manage regime risk. 

\appendix
\section{Additional proofs}
\label{appendix}
 \setcounter{subsection}{1}

\begin{proof}[\textbf{Proof of Proposition \emph{\ref{Girsanov1}}}]
\phantomsection
\label{proof:Girsanov1}
	
$\Nm,\Np$ are finite variation (FV) processes with no common jumps, hence $[\Nm,\Np]=0$ $\mathbb Q$-a.s. Therefore, the multiplicativity of the stochastic exponential \cite[p.138]{JS} and the exponential formula for FV processes \cite[p.337 T4]{B} yield the explicit expression $Z^\alpha= Z^{\alpha,-} Z^{\alpha,+}$, with
	\begin{equation}
	\label{exponential_explicit}
	Z^{\alpha,\pm}_t=\exp\left(\int_0^t \big(1-\Lpm_{Y_{u^-}}(\delp_u)\big)\mathbbm 1^{\pm}_u du\right)\prod_{\substack{u\leq t:\\ \Delta\Npm_u\neq 0}}\Lpm_{Y_{u^-}}(\delp_u).
	\end{equation}
	Then \textit{(i)} follows from the strict positivity and boundedness of $\Lpm_i(\delpm)$ for $1\leq i\leq k$, as in \cite[p.168 T4]{B} with terminal time $T$. The only difference is that in our case the $\mathbb Q$-intensity of $\Npm$ is $\mathbbm 1_{\{\mp N_{t^-}<\Nmax\}}\leq 1$ instead of $1$. The proof remains the same though, simply recalling that the $\mathbb Q$-moment generating function of $\Npm_T$ is dominated by the one of a standard Poisson random variable (see (\ref{bounded_moments})). Uniform integrability is immediate.
	
\textit{(i)} guarantees that $\mPa$, as defined in \textit{(ii)}, is an equivalent probability measure. The shape of the $\mPa$-intensities of $\Nm,\Np$ is due to \cite[p.166 T3]{B}. 
	
The fact that $W$ is still a Wiener process under $\mPa$ is a consequence of the Girsanov--Meyer Theorem \cite[p.132 Thm.35]{P} and Levy's characterization Theorem.

As for $Y$, note first that its initial distribution does not change as $Z^\alpha_0=1$, and the same goes for its infinitesimal generator operator. To see this, consider the $\mathbb Q$-generator operator $\mathcal A^Y_t:\mathbb R^E\to\mathbb R^E$, $\mathcal A^Y_tf(i)= \sum_{j=1}^k q_t^{ij}f(j)$. We know the process $M_t=f(Y_t)-f(Y_0) - \int_0^t\mathcal A^Y_uf(Y_u)du$ is a $\mathbb Q$-local martingale. Once again, by the Girsanov--Meyer Theorem., $M$ is also a $\mPa$-local martingale ($[M,Z]=0$ since $Y,\Nm,\Np$ are FV processes satisfying (\ref{no_common_jumps})). Moreover, being bounded, $M$ is a true  $\mPa$-martingale and $Y$ solves, for $(\mathbb F,\mPa)$, a well-posed martingale problem for $(\mathcal A^Y_t, \mu_0)$. This implies $Y$ is a $\mPa$-Markov chain with a uniquely determined law \cite[p.184 Thm.4.2]{EK}.\footnote{Although our martingale problem is non-homogeneous in time, $(Q_t)$ is deterministic, so this does not represent a problem.} That $Q$ is also the $\mPa$-generator matrix follows from uniqueness.    
	
(\ref{no_common_jumps}) clearly remains unchanged under an equivalent change of probability measure.
\end{proof}	
	
\begin{proof}[\textbf{Proof of Proposition \emph{\ref{Girsanov2}}}]
\phantomsection
\label{proof:Girsanov2}

\textit{(i)} and \textit{(ii)} are proved just as in Proposition \ref{Girsanov1} (the processes $1/\Lpm_i(\delpm)$ are strictly positive and bounded for $1\leq i\leq k$). The $\mQa$-independence of $Y,N,W$ is a consequence of \cite[p.543 Lem.9.5.4.1]{JYC}. That is, the $(\mQa,\mathbb F)$-martingales $Y - \int_0^\cdot\sum_{j=1}^k q_u^{Y_{u^-},j}j du,\ N-\int_0^\cdot(\mathbbm 1^-_u - \mathbbm 1^+_u)du\mbox{ and }W$ have the predictable representation property with respect to $\mathbb F^Y,\ \mathbb F^N\mbox{ and }\mathbb F^W$ resp. (see \cite[p.239 T8]{B} for compensated counting measures) and they are $(\mathbb F,\mQa)$-orthogonal, which implies their independence.
	
The explicit expressions for the stochastic exponentials (see (\ref{exponential_explicit})) show straightforwardly that $Z^\alpha\bar{Z}^\alpha=1$, proving \textit{(iii)}.
\textit{(iv)} is due to \cite[p.64 T8]{B}.

\end{proof}

\begin{proof}[\textbf{Proof of Proposition \emph{\ref{observable_intensities}}}]
\phantomsection
\label{proof:observable_intensities}

We prove it just for $\Np$, the others being analogous, and we omit $\alpha$ for simplicity. It is clear that $\widehat{\lp}$ is predictable. We need to check that for any $\mathbb F^{W,N}$-predictable process $\psi\geq 0$, 
	$$
	\mathbb E\left[\int_0^T\psi_t\widehat{\lp_t}dt\right] = \mathbb E\left[\int_0^T\psi_td\Np_t\right].
	$$
For any $1\leq i\leq k$, each path of the càdlàg processes $\pii$ and $\mathbbm 1_{\{Y_\cdot=i\}}$ has only countably many jumps. We can therefore interchange these processes and their left limits when integrating with respect to $dt$. By properties of the conditional expectation and Fubini's Theorem,
\begin{align*}
	&\mathbb E\left[\int_0^T\psi_t\widehat{\lp_t}dt\right]=\mathbb E\left[\int_0^T \psi_t\mathbbm 1^+_t\sum_{i=1}^k\pii_{t^-}\Lip(\delp_t) dt\right] = \mathbb E\left[\int_0^T \psi_t\mathbbm 1^+_t\sum_{i=1}^k\pii_t\Lip(\delp_t) dt\right]\\
																																	& = \mathbb E\left[\int_0^T \mathbb \psi_t\mathbbm 1^+_t\sum_{i=1}^k\mathbb E\big[\mathbbm 1_{\{Y_t=i\}}\big|\mathcal F^{N,W}_t\big]\Lip(\delp_t)dt\right] =\int_0^T \mathbb E\left[\mathbb E\Big[\psi_t\mathbbm 1^+_t\sum_{i=1}^k\mathbbm 1_{\{Y_t=i\}}\Lip(\delp_t) \big|\mathcal F^{N,W}_t\Big]\right]dt\\
																																	& = \int_0^T \mathbb E\Big[\psi_t\mathbbm 1^+_t\sum_{i=1}^k\mathbbm 1_{\{Y_t=i\}}\Lip(\delp_t)  \Big]dt= \mathbb E\left[\int_0^T \psi_t\mathbbm 1^+_t\sum_{i=1}^k\mathbbm 1_{\{Y_{t^-}=i\}}\Lip(\delp_t) dt\right] =\mathbb E\left[\int_0^T \psi_t\lp_t dt\right]\\
																																	&= \mathbb E\left[\int_0^T \psi_td\Np_t\right].
\end{align*}
\end{proof}

\begin{proof}[\textbf{Proof of Proposition \emph{\ref{proposition_Pi}}}]
\phantomsection
\label{proof:proposition_Pi}

Let us check that $\pia$ solves (\ref{Pi_equivalent}). Clearly, the constraint and the initial condition are satisfied. The verification of the SDEs is due to \cite[Prop.3.3]{CEFS} (with more details in \cite[App.A, Lemma A.2 and Prop.3.3]{CEFSv3}), albeit some considerations need to be made. 
	
	On the one hand, the authors work with a pure jump model, with strategies adapted to the natural filtration of the driving jump process only (i.e., there is no diffusion) and constant generator matrix for the Markov chain. However, \textit{mutatis mutandis} the former differences yield no major change in the proofs. 
	
	On the other hand, the main assumption of the authors, \cite[Asm.2.1]{CEFS}, postulates the existence of some deterministic measure $\tilde{\eta}^{N}(dz)$ on $\mathbb R$ with compact support\footnote{In \cite{CEFS} the support is assumed to be a subset of $(-1,\infty)$, but this is only for a ``return (or yield) process" as in their case.} such that for all $ i\in E,\ \delp,\delm\in \overline{(\delmin,\delmax)},\ n\in [-\Nmax,\Nmax]\cap\mathbb Z$, the measure $\eta^N(i, \delp,\delm,n,dz)$ is equivalent to $\tilde{\eta}^{N}(dz)$ and the Radon-Nikodym derivative $d\eta^{N}(i, \delp,\delm,n,\cdot)/d\tilde{\eta}^N$ is uniformly bounded and bounded away from zero $\tilde{\eta}^{N}(dz)$-a.s. 
	
	Since the spread processes are fixed and bounded, we can assume without loss of generality that $\delmin$ is finite. Setting $\tilde{\eta}^N(dz) \defeq m_{1}(dz)  + m_{-1}(dz)$ we see straightforwardly that the two measures are equivalent with derivative 
	\begin{equation}
	\frac{d\eta^{N}(i, \delm,\delp,n,\cdot)}{d\tilde{\eta}^N}(z) = \Lim(\delm) \mathbbm 1_{\{n<\Nmax\}} \mathbbm 1_{\{z= 1\}}+ \Lip(\delp)\mathbbm 1_{\{-n<\Nmax\}} \mathbbm 1_{\{z=-1\}},
	\end{equation} 
	uniformly bounded by $\Lim(\delmin) + \Lip(\delmin)$. However, our model allows for $d\eta^{N}/d\tilde{\eta}^N(z)=0$, which is a consequence of having vanishing intensities $\lpm$. This poses no issue nonetheless, as $d\eta^{N}/d\tilde{\eta}^N(z)>0$ is only used in \cite[Prop.3.3]{CEFS} to guarantee $\bar{Z}^\alpha>0$ (see Proposition \ref{Girsanov2}). This condition, also satisfied in our model,\footnote{This was ultimately a consequence of the decomposition in Assumptions \ref{assumptions_intensities} \textit{(\ref{assumptions_intensities_lambda})}, that allowed for the vanishing factors of the intensities to be secluded as the reference probability intensities.} allows to go from the physical probability $\mPa$ to a reference probability $\mQa$ and backwards.
	
	We turn now to the proof of uniqueness. We remark first that the jump height coefficients in (\ref{Pi_equivalent}) will typically not be Lipschitz (classical results for SDEs such as \cite[p.253 Thm.7]{P} cannot be applied) and the paths of the spreads need not be continuous between the jump times of $N$ (ruling out the most classical results of ODEs \cite{CL}). Nevertheless, we can still follow a pathwise ODEs approach. 
	
	Let us fix a path and verify uniqueness inductively on the intervals $[\tau_m,\tau_{m+1})$, where $\tau_0\defeq 0$ and $0<\tau_1<\tau_2<\dots<\tau_M=T$ are the jump times of $N$ (including the terminal time $T$ even if there is no jump at that point).
	Set $A^{ji}_t=\mathbbm 1_t^-(\Ljm-\Lim)(\delm_t) + \mathbbm 1_t^+(\Ljp- \Lip)(\delp_t)$. Then $A^{ji}$ is bounded for all $1\leq i,j\leq k$. Now observe that any càdlàg process $\tilde\Pi$, solving the constrained system of SDEs (\ref{Pi_equivalent}), must solve pathwise for $m=0,1,\dots,M-1$ the following system of ODEs in integral form, for $t\in [\tau_m,\tau_{m+1})$:
	\begin{equation}
	\label{Pi_ODE}
	\tilde\Pi^i_t = R^i_m\big(\tilde\Pi_{\tau_m^-}\big) + \int_{\tau_m}^t\sum_{j=1}^k \left(q_u^{ji}\tilde\Pi^j_u+\tilde\Pi^i_u\tilde\Pi^j_u A^{ji}_u \right)du,
	\end{equation}
	with $R^i_m\big(\tilde\Pi_{\tau_m^-}\big)\defeq\tilde\Pi^i_{\tau_m^-}\Lipm(\delpm_{\tau_m})/\sum_{j=1}^k\tilde\Pi^j_{\tau_m^-}\Ljpm(\delpm_{\tau_m})$ if $\Delta N_{\tau_m}=\mp 1,\ m>0$, and $R^i_0\big(\tilde\Pi_{\tau_0^-}\big) = \mu^i_0$. 
	Using that $\tilde\Pi\in\Delta$ is bounded, and elementary algebra of bounded Lipschitz functions, it follows that $f^i:[\tau_m,\tau_{m+1})\times\Delta\to\mathbb R$, defined by $f^i(u,\pi)= \sum_{j=1}^k (q_u^{ji}\pi^j+\pi^i\pi^j A^{ji}_u)$ is Lipschitz in $\pi$, uniformly in $u$. Let $K$ be the maximum Lipschitz constant of $f^l$ for $1\leq l\leq k$ and suppose $\pia_{\tau_m^-}=\tilde\Pi_{\tau_m^-}$ (clearly satisfied for $m=0$). Then (\ref{Pi_ODE}) yields $\|\pia_t-\tilde\Pi_t\|\leq K\int_{\tau_m}^t\|\pia_u-\tilde\Pi_u\|du$, implying $\pia_t=\tilde\Pi_t$ on $[\tau_m,\tau_{m+1})$ by Grönwall's inequality. As a consequence, the equality on $[0,T)$ follows by induction. It must clearly hold at time $T$ as well, either by continuity or (if there is a jump) because $\pia_T=R_M\big(\pia_{\tau_M^-}\big)=R_M\big(\tilde\Pi_{\tau_M^-}\big)= \tilde\Pi_T$. 
	\end{proof}

\begin{proof}[\textbf{Proof of Proposition \emph{\ref{open_simplex}}}]
\phantomsection
\label{proof:open_simplex}

We want to show that $\piia_t>0$ for all $1\leq i\leq k,\ 0\leq t\leq T$ a.s. We proceed by induction on the jump times of $N$, for each path, as at the end of the proof of Proposition \ref{proposition_Pi}. Using the same notations, let $1\leq i \leq k$ and suppose $\pija_{\tau_m^-}>0$ for all $1\leq j\leq k$ (satisfied for $m=0$ by assumption). Then (\ref{Pi_ODE}), Assumptions \ref{assumptions_Q} and the fact that $A^{ii}\equiv 0$ by definition, show that $\piia$ is absolutely continuous on the interval $[\tau_m,\tau_{m+1})$ and satisfies 
	\begin{equation}
	\label{aux1}
	\begin{split}
	(\piia_t)' & = \sum_{j=1}^k \left(q_t^{ji}\pija_t+\piia_t\pija_t A^{ji}_t \right) = \piia_t\Big(q^{ii}_t + \sum_{j\neq i}\pija_t A^{ji}_t \Big) + \sum_{j\neq i}q^{ji}_t\pija_t\\
						 & \geq\piia_t\Big(q^{ii}_t + \sum_{j\neq i}\pija_t A^{ji}_t\Big),
	\end{split}
	\end{equation}
	for $dt$-a.e. $t\in [\tau_m,\tau_{m+1})$, subject to $\piia_{\tau_m}=R^i_m(\pia_{\tau_m^-})>0$. Let us set $s\defeq \sup ([\tau_m,\tau_{m+1}) \cap \{t\in [0,T] : \piia_t >0\})$. We need to prove that $s=\tau_{m+1}$. By the continuity of $\piia$, it must be $s>\tau_m$. Consequently, (\ref{aux1}) and the absolute continuity of $\log\piia$ on $[\tau_m,t]\subset[\tau_m,s)$ yield 
	$$
	\piia_t\geq R^i_m(\pia_{\tau_m^-})\exp\Big(\int_{\tau_m}^t\big(q^{ii}_u + \sum_{j\neq i}\pija_u A^{ji}_u \big)du\Big)\quad\mbox{ for }dt\mbox{-a.e. }t\in [\tau_m,s).
	$$ 
	If it were $s<\tau_{m+1}$, the continuity of $\piia$ again and the former inequality would imply $0=\piia_s>0$, proving by contradiction that $s=\tau_{m+1}$ and $\piia$ is positive on the whole interval $[\tau_m,\tau_{m+1})$. Positivity on $[0,T)$ now follows by induction, and it must clearly hold at time $T$ as well, as either there is a jump or we can reason as we just did.
\end{proof}

\begin{proof}[\textbf{Proof of Theorem \emph{\ref{main_theorem_2}}}]
\phantomsection
\label{proof:main_theorem2}
\textit{Case $\Nmax<+\infty$ and $\gamma>0$}: Let us write $\Psi=\frac{1 + \Upsilon}{\gamma}$, with 
\begin{equation*}
\Upsilon(t,n,\pi) = \sup_{\alpha\in\tilde{\mathcal A}_t^2}\tilde{\mathbb E}^{\alpha,t,n,\pi}\left[ 
\exp\Big(-\gamma\big(\tilde P_{t,T}^{\alpha,n,\pi}+\ell(\Nt_T)\big)\Big) \Big(-\exp\big(\gamma\ell(\Nt_T)\big) \Big)\right].
\end{equation*}
Then $\Upsilon$ can be regarded as the value function of an optimization problem in the standard Bolza-Lagrange formulation, i.e., 
$$
\Upsilon(a)= \sup_{\alpha\in\tilde{\mathcal A}_t^2}\tilde{\mathbb E}^{\alpha,a}\left[\int_t^T D^{\alpha,a}_{t,u}f(u,A^{\alpha,a}_u,\alpha_u) du + D^{\alpha,a}_{t,T}g(A^{\alpha,a}_T)\right],
$$
where the state variable $A^{\alpha,a}_u=(u,\Nt_u,\pit_u)$ is a PDMP with bounded state space $[0,T]\times(\mathbb Z\cap[-\Nmax,\Nmax])\times\Delta^\circ$ and initial condition $a=(t,n,\pi)$, $D^{\alpha,a}_{t,u}=\exp\Big(-\int_t^v\rho(u,A^{\alpha,a}_u,\alpha_u)du\Big)$ is the discount factor and $f,g,\rho$ are bounded functions. (These functions are bounded thanks to the control space being bounded.) The continuity of $\Upsilon$ can be proved now in the same way as in \cite[Thm.4.10]{CEFS} albeit in a more straightforward manner. This is due to the boundedness of $f,g,\rho$, the fact that $\pit$ never visits the relative border of the simplex $\Delta$, and that there is no exit time of the state space other than the terminal time. Assumptions \cite[Asm.4.7]{CEFS} are clearly verified in our model, and \cite[Asm.2.1]{CEFS} has already been accounted for at the beginning of the proof of Proposition \ref{proposition_Pi}.\footnote{An additional detail now is that \cite[Asm.2.1]{CEFS} is also used in \cite[Lem.4.1]{CEFS}, which states that the drift coefficient in equation (\ref{Pi_equivalent}) is Lipschitz in the state variable, uniform in time and control. This is routinely verified in our case, under our new assumption: $-\infty<\delmin<\delmax<\infty$.} We remark that in our case the bounding function (see \cite[Lem.4.6]{CEFS}) can be taken simply as $b(t,n,\pi)=\exp(\eta(T-t))$, for an $\eta>0$ large enough to prove contractiveness. 

Having proved the continuity, the same proof of \cite[Thm.5.3]{CEFS} (or \cite[Thm.7.5]{DF}) shows that $\Upsilon$ is the unique continuous viscosity solution of its standard HJB equation. We remark once again that our case is simpler, in that $\Upsilon$ is bounded and we do not have any boundary conditions other than the terminal time condition. In particular, there is no need for additional assumptions on
the growth of $\Upsilon$. 

Finally, the result for $\Theta$ is obtained via the two increasing diffeomorphic transformations $\Psi=\frac{1 + \Upsilon}{\gamma}$ and $\Theta=U_\gamma^{-1}\circ\Psi$.

\textit{Case $\Nmax<+\infty$ and $\gamma=0$}: The only difference with the previous case is that the Bolza-Lagrange representation of the problem is obtained directly, since 
$$
\Theta(t,n,\pi)=\sup_{\alpha\in\tilde{\mathcal A}^2_t}\mathbb E^{\alpha,t,n,\pi}\left[\int_t^T\big\{\delm_u \hlmtu + \delp_u \hlptu + \mu \Nt_u -\frac{1}{2}\sigma^2\zeta(\Nt_u)^2\big\}du - \ell(\Nt_T) \right],
$$
with no need for any transformation.

\textit{Case $\Nmax=+\infty$ and $\gamma=\zeta=0\equiv\ell$}: 
The same as the latter case but working instead with the state variable $(u,\pit)$ and the value function $\Phi$. 
\end{proof}

\begin{proof}[\textbf{Proof of Proposition \emph{\ref{comparison_ODEs}}}]
\phantomsection
\label{proof:comparison_ODEs}

Suppose first $I=[\tau,T]$ for some $0\leq\tau<T$ and let $\ve>0$. Since $\Nmax<\infty$, there exists $(t_\ve,n_\ve,i_\ve)\in[\tau,T]\times\mathcal I$ such that
\begin{equation}
\label{min}
\overline\theta(t_\ve,n_\ve,i_\ve)-\underline\theta(t_\ve,n_\ve,i_\ve)+\ve(T-t_\ve) = \min_
{(t,n,i)\in[\tau,T]\times\mathcal I}\overline\theta(t,n,i)-\underline\theta(t,n,i)+\ve(T-t).
\end{equation}
If $t_\ve<T$, then we must have
$$\overline\theta_t(t_\ve,n_\ve,i_\ve)-\underline\theta_t(t_\ve,n_\ve,i_\ve)\geq \ve.$$
Let us see that the left-hand side is non-positive. By (\ref{HJB_Theta_full_info_supersol}) and (\ref{HJB_Theta_full_info_subsol}),
\begin{equation*}
\begin{split}
& \overline\theta_t(t_\ve,n_\ve,i_\ve)-\underline\theta_t(t_\ve,n_\ve,i_\ve) \leq\sum_{j\neq i_\ve} q_t^{i_\ve j}\Big(U_\gamma\big(\underline\theta(t_\ve,n_\ve,j)-\underline\theta(t_\ve,n_\ve,i_\ve)\big)-U_\gamma\big(\overline\theta(t_\ve,n_\ve,j)-\overline\theta(t_\ve,n_\ve,i_\ve)\big)\Big)\\
& + \mathbbm 1_{\{n<\Nmax\}}\Big(\Hm_i\big(\underline\theta(t_\ve,n_\ve+1,i_\ve)-\underline\theta(t_\ve,n_\ve,i_\ve)\big)-\Hm_i\big(\overline\theta(t_\ve,n_\ve+1,i_\ve)-\overline\theta(t_\ve,n_\ve,i_\ve)\big)\Big)\\
& + \mathbbm 1_{\{-n<\Nmax\}}\Big(\Hp_i\big(\underline\theta(t_\ve,n_\ve-1,i_\ve)-\underline\theta(t_\ve,n_\ve,i_\ve)\big)-\Hp_i\big(\overline\theta(t_\ve,n_\ve-1,i_\ve)-\overline\theta(t_\ve,n_\ve,i_\ve)\big).
\end{split}
\end{equation*} 
$\Hpm_i$ increasing (resp. $U_\gamma$ increasing) and (\ref{min}) imply that the last two terms (resp. the first one) are non-positive. We must have then that $t_\ve=T$, and due to (\ref{min}) and (\ref{terminal_cond_comparison}) for all $(t,n,i)\in [0,T]\times\mathcal I$: 
\begin{align*}
\overline\theta(t,n,i)-\underline\theta(t,n,i)+\ve(T-t)&\geq\overline\theta(T,n_\ve,i_\ve)-\underline\theta(T,n_\ve,i_\ve)+\ve(T-T)\geq 0\\
																			\overline\theta(t,n,i)&\geq	\underline\theta(t,n,i)-\ve(T-t).
\end{align*}
Since $\ve>0$ was arbitrary, we obtain the desired result. 

The case $I=(\tau,T]$ is now a consequence of comparing $\underline\theta$ and $\overline\theta$ on intervals of the form $[t_n,T]\subseteq (\tau,T]$ with $t_n\searrow\tau$.
\end{proof}

\bibliographystyle{amsalpha}
\bibliography{references}
\addcontentsline{toc}{section}{References}

\end{document}